
\documentclass[journal]{IEEEtran}
%


%

%
\usepackage{cite}

%
\usepackage{graphicx}
\usepackage{subfigure}
\usepackage{epstopdf}
\usepackage{enumerate}
\usepackage{mathrsfs}
\usepackage{bbm}
\usepackage{algorithm}
\usepackage{algorithmic}
\usepackage{amsthm,amsmath,amsfonts}
\newtheorem{lemma}{\textbf{Lemma}}
\usepackage{bm}
\usepackage{subfigure}

\newtheorem{mydef}{Definition}

\usepackage{color}

\hyphenation{op-tical net-works semi-conduc-tor}

\begin{document}
%
\title{RAN Slicing for Massive IoT and Bursty URLLC Service Multiplexing: Analysis and Optimization}
%
%
%

\author{Peng Yang,~\IEEEmembership{Member,~IEEE}, Xing Xi,~\IEEEmembership{Graduate Student Member,~IEEE}, Tony Q. S. Quek,~\IEEEmembership{Fellow,~IEEE}, Jingxuan Chen, Xianbin Cao,~\IEEEmembership{Senior Member,~IEEE},
        Dapeng Wu,~\IEEEmembership{Fellow,~IEEE}
\thanks{
P. Yang and T. Q. S. Quek are with the Information Systems Technology and Design, Singapore University of Technology and Design, 487372 Singapore.

X. Xi, J. Chen, and X. Cao are with the School of Electronic and Information Engineering, Beihang University, Beijing 100083, China, and also with the Key Laboratory of Advanced Technology, Near Space Information System (Beihang University), Ministry of Industry and Information Technology of China, Beijing 100083, China.

D. Wu is with the Department of Electrical and Computer Engineering, University of Florida, Gainesville FL 32611 USA.}
}

\maketitle

\begin{abstract}
Future wireless networks are envisioned to serve massive Internet of things (mIoT) via some radio access technologies,
where the random access channel (RACH) procedure should be exploited for IoT devices to access the networks.
However, the theoretical analysis of the RACH procedure for massive IoT devices is challenging.
To address this challenge, we first correlate the RACH request of an IoT device with the status of its maintained queue and analyze the evolution of the queue status. Based on the analysis result, we {then} derive the closed-form expression of the random access (RA) success probability, {which is a significant indicator characterizing the RACH procedure} of the device. Besides, considering the agreement on converging different services onto a shared infrastructure, we investigate the RAN slicing for mIoT and bursty ultra-reliable and low latency communications (URLLC) service multiplexing. Specifically, we formulate the RAN slicing problem as an optimization one to maximize the total RA success probabilities of all IoT devices and provide URLLC services for URLLC devices in an energy-efficient way. A slice resource optimization (SRO) algorithm exploiting relaxation and approximation with provable tightness and error bound is then proposed to mitigate the optimization problem. {Simulation results demonstrate that the proposed SRO algorithm can effectively implement the service multiplexing of mIoT and bursty URLLC traffic.}
\end{abstract}

\begin{IEEEkeywords}
Massive IoT, random access channel, bursty URLLC, RAN slicing
\end{IEEEkeywords}

%
\IEEEpeerreviewmaketitle

\section{Introduction}
%
%
%
%
\IEEEPARstart{W}{ith} the explosive growth of the Internet of Things (IoT), massive IoT (mIoT) devices, the number of which is predicted to reach 20.8 billion by 2020, will access the wireless networks for implementing advanced applications{\cite{says20156}}.
These applications include e-health, public safety, smart traffic, virtual navigation/management, and environment monitoring{\cite{liu2018novel}}.
To address the IoT market, the third-generation partnership project (3GPP) has identified mIoT as one of the three main use cases of 5G and has already initiated several task groups to standardize several solutions including extended coverage GSM (EC-GSM), LTE for machine-type communication (LTE-M), and narrowband IoT (NB-IoT) \cite{Cellular-Ericsson,LTE-Nokia}.
For establishing massive connections among wireless networks and IoT devices, the research on reliable and efficient access mechanisms should be prioritized.

In accomplishing the massive connections, when an IoT device wants to transmit signals in the uplink, it randomly chooses a random access (RA) preamble from an RA preamble pool and transmits it through an RA channel (RACH). If more than one device tries to access a base station (BS) simultaneously, then interference occurs at the BS. {Unless interference is resolved, the grand goal of accomplishing the massive connections cannot be achieved.
}

\subsection{Prior arts}
During the past few years, a rich body of works \cite{xing2019novel,jang2016non,zhang2019dnn,moon2018sara,miuccio2020joint,di2018learning,vilgelm2019dynamic,seo2019low} on RA mechanisms has been developed to mitigate interference and improve the RA success probability or reduce the access delay of an IoT device. For example, the work in \cite{xing2019novel} proposed to improve the RA success probability of an IoT device by exploiting a distributed queue mechanism. It {also designed} an access resource grouping mechanism to reduce the access delay caused by the queuing process of the distributed queue mechanism.
The work in \cite{jang2016non} proposed a novel scheme to increase RA success probability. First, this scheme increased the number of preambles at the first step of the RACH procedure by utilizing a spatial group mechanism. Second, it improved resource utilization through non-orthogonally allocating uplink channel resources at the second step of the RACH procedure.
Besides, to reduce the access delay, a grant-free non-orthogonal RA system relying on the accurate user activity detection and channel estimation was proposed in \cite{zhang2019dnn}.

Most of the studies \cite{xing2019novel,jang2016non,zhang2019dnn,moon2018sara,miuccio2020joint,di2018learning,vilgelm2019dynamic,seo2019low}, however, assumed that network resources were reserved for the IoT service and did not study the case of the coexistence of IoT service and many other services such as ultra-reliable and low latency communications (URLLC). The research of the coexistence of IoT service and other services is essential. {This is because} future networks are convinced to integrate {heterogeneous} services with different latency, reliability, and throughput requirements into a shared physical infrastructure rather than deploying individual network solution for each type of service \cite{alliance20155g}.
What is more, owing to the shared characteristic of network resources, some conclusions obtained {from IoT dedicated networks} may become inapplicable if {heterogeneous} services are required to be supported by the networks.

Network slicing is considered as a promising technique in future networks to converge {heterogeneous} services onto a shared physical infrastructure. This can be implemented by {logically} partitioning the infrastructure into multiple network slices, where a network slice {capable of providing a negotiated service quality} is defined as an end-to-end virtual network running on the infrastructure \cite{rost2017network}.
Recently, many slicing frameworks have been developed to provide performance guarantees for IoT or massive machine-type communications (mMTC) service, enhanced mobile broadband (eMBB) service, and URLLC service{\cite{popovski20185g,budhiraja2019tactile,ksentini2017toward,How2019yang,d2019slice,zambianco2020interference}}.
For example, instead of slicing the radio access network (RAN) via orthogonal resource allocation, the work in \cite{popovski20185g,budhiraja2019tactile} studied the advantages of allowing for non-orthogonal RAN resources sharing {among} a set of mMTC, eMBB, and URLLC users.
The work in \cite{ksentini2017toward} developed a two-level scheduling process to allocate dynamically dedicated bandwidth to each slice based on workload demand and slices' quality of service (QoS) requirements. In \cite{How2019yang}, we proposed to orchestrate network resources for a network slicing system to guarantee more reliable URLLC and energy-efficient eMBB service provisions. {Besides, the work in \cite{d2019slice,zambianco2020interference} proposed to mitigate the inter-slice interference via RAN slicing such that heterogeneous services could be supported by the same physical infrastructure. However, the significant impact of intra-cell interference on the obtained results was not considered.}

\subsection{Motivation and contributions}
Unlike the work in \cite{xing2019novel,jang2016non,zhang2019dnn,moon2018sara,miuccio2020joint,di2018learning,vilgelm2019dynamic,seo2019low,popovski20185g,
budhiraja2019tactile,ksentini2017toward,How2019yang,d2019slice,zambianco2020interference},
this paper simultaneously analyzes the {RACH procedure} for the mIoT service and studies the RAN slicing for the mIoT service included service multiplexing.
This study is highly challenging because i) performance requirements of a massive number of IoT devices should be satisfied. Yet, the typical 5G cellular IoT, NB-IoT can only admit 50,000 devices per cell \cite{3GPP15Cellular}, and the 5G new radio (NR) technique can connect a great number of devices {only} by deploying costly ultra-dense heterogeneous networks;
ii) RAN slicing operation (e.g., activating and releasing slices) has to be conducted in a timescale of minutes to hours to keep pace with the upper layer network slicing. In the process of slicing upper layer networks, some functions (e.g., radio resource control function) and protocols (e.g., RAN protocol stacks) should be activated and configured, which are time-consuming \cite{rost2017network}. However, wireless channels generally change in a timescale of a millisecond to seconds, {which is much shorter than the RAN slicing operation duration. As a result, how to optimally perform the time-consuming RAN slicing operation in rapidly changing wireless channels, which is called a two-timescale issue of the RAN slicing \cite{tang2019service}, is a big challenge;}
iii) compared with the resource allocation problem for other services (e.g., eMBB service), the resource allocation problem for the bursty URLLC service {(e.g., human-machine collaborations in industry automation, telesurgery in healthcare, virtual reality for remote education or in gaming industry \cite{hou2019prediction,antonakoglou2018toward,ruan2019machine})} where URLLC packets arrive in a burst may be more challenging due to the stringent low latency requirement and the 99.999\% reliability requirement \cite{series2015imt}; iv)
{IoT devices may experience radio shadowing and then the transmission outage in challenging radio environments.}

These challenges motivate us to investigate the RAN slicing for mIoT and bursty URLLC service multiplexing to maximize
{the total RA success probabilities of all IoT devices while providing URLLC services for URLLC devices.}
{Besides, to overcome the radio shadowing in challenging radio environments, the coordinated multi-point (CoMP) transmission technique, which creates spatial diversity with redundant communication paths, is exploited. CoMP can also significantly improve transmission reliability via spatial diversity instead of relying on {the} packet retransmission.}
{Therefore, this paper considers the RAN incorporating the CoMP transmission technique, which is called CoMP-enabled RAN.}
Summarily, the main contributions of this paper are presented as the following:
\begin{itemize}
\item {The subframe structure designed for NB-IoT is extended} for mIoT transmissions to accommodate more RACH requests from a massive number of IoT devices;
\item {We analyze the queue evolution processes including the IoT packet arrival, accumulation and departure processes} by employing probability and stochastic geometry theories. Based on the analysis result, we derive the closed-form expression of the RA success probability of a randomly chosen IoT device;
\item We define the mIoT slice utility {as the time average of RA success probability of all IoT devices} and the bursty URLLC slice utility {as the time-average energy efficiency, which reflects the transmission latency and power consumption, for serving URLLC devices}. {We then} formulate the CoMP-enabled RAN slicing for mIoT and bursty URLLC service multiplexing as a resource optimization problem. The objective of the optimization problem is to maximize the total mIoT and URLLC slice utilities, subject to limited physical resource constraints. The solution of this problem is difficult due to the existence of {the} indeterministic objective function and thorny non-convex constraints and the requirement of tackling a two-timescale issue as well;
\item To mitigate this thorny optimization problem, we propose a slice resource optimization (SRO) algorithm. In this algorithm, we first exploit a sample average approximate (SAA) technique and an alternating direction method of multipliers (ADMM) to tackle the {indeterminacy of the} objective function and the two-timescale issue. Then, a semidefinite relaxation (SDR) scheme joint with a Taylor expansion scheme is leveraged to approximate the non-convex problem as a convex one. The tightness of the SDR scheme and the error bound of the Taylor expansion are also analyzed.
\end{itemize}

\subsection{Organization}
{The remainder of this paper is organized as follows: Section \uppercase\expandafter{\romannumeral2} builds the system model. Based on the model, a CoMP-enabled RAN slicing problem for mIoT and bursty URLLC service multiplexing is formulated in Section \uppercase\expandafter{\romannumeral3}. Section \uppercase\expandafter{\romannumeral4} aims to derive the closed-form expression of the RA success probability. Section \uppercase\expandafter{\romannumeral5} and Section \uppercase\expandafter{\romannumeral6} propose to mitigate the formulated problem with system generated channel coefficients and sensed channel coefficients, respectively. The simulation is conducted in Section \uppercase\expandafter{\romannumeral7}, and Section \uppercase\expandafter{\romannumeral8} concludes this paper.}

{\emph{Notation:} Boldface uppercase letters denote matrices, whereas boldface lowercase letters denote vectors. The superscripts $(\cdot)^{\rm T}$ and $(\cdot)^{\rm H}$ denote transpose and conjugate transpose matrix operators. ${\rm tr}(\cdot)$, ${\rm rank}(\cdot)$, and $\left\lfloor {\cdot} \right\rfloor$ denote the trace, the rank, and the rounding down operators, respectively. ${\bm X} \succeq 0$ indicates that $\bm X$ is a Hermitian positive-semidefinite matrix.
For clarification, some significant notations are listed in Table \uppercase\expandafter{\romannumeral1}.}
\begin{table*}[!t]
\renewcommand{\arraystretch}{1.2}
\caption{{{List of Notations}}}
\label{table_1}
\newcommand{\tabincell}[2]{\begin{tabular}{@{}#1@{}}#2\end{tabular}}
\centering
\begin{tabular}{|c||c|c||c|}
\hline
{Notation} & {Description} & {Notation} & {Description} \\\hline
{{${\mathcal S}^I$, ${\mathcal S}^u$}} & {mIoT, URLLC slice sets} &	{${\bm u}_{i,s}$} & {Location of the $i$-th IoT device in $s\in {\mathcal S}^I$} \\ \cline{1-4}
{$\lambda_{s}^{I}$} &  {Intensity of IoT devices in $s \in {\mathcal S}^I$}  & {$N^u$} & {Number of URLLC devices} \\ \cline{1-4}
{$\lambda_R$} & {Intensity of RRHs}   &  {${\bm v}_j$} & {Location of the $j$-th RRH} \\ \cline{1-4}
{$K$}  & {Number of antennas of an RRH} & {$W$} & {Total system bandwidth} \\ \cline{1-4}
{$\theta_s^{\rm th}$}  & {SINR threshold for decoding an IoT packet in $s\in{\mathcal S}^I$} & {$I_s^u$} & {Number of URLLC devices in $s \in {\mathcal S}^u$} \\ \cline{1-4}
{$D_s$} & {Transmission latency requirement of a URLLC device}  & {$\alpha$} & {Blocking probability threshold of a URLLC packet} \\\hline
{$\beta$} & {Codeword error decoding probability threshold} & {$\vartheta_{w,s}(t)$} & {Intensity of new IoT arrival packets} \\ \cline{1-4}
{$N_{a,s}(t)$} & {Accumulated number of IoT packets} &  {$L$}  & {Number of information bits of an IoT packet} \\ \cline{1-4}
{$N_{w,s}(t)$} & {Number of new IoT arrival packets} &  {$a$}	 &  {Size of the tone spacing} \\ \hline
{$P_{ne,s}(t)$} & {Non-empty probability of a queue} &  {$F_s$}	 &  {Number of orthogonal uplink PRACHs} \\ \hline
{$\xi $} & {Number of non-dedicated RA preambles} &  {$P_{ACB}$}	 &  {An ACB factor} \\ \hline
{$P_{nr,s}(t) $} & {Probability without restricting RACH requests} &  {$P_{s}(t)$}	 &  {RA success probability of an IoT device} \\ \hline
{$\rho_o $} & {Power cutoff threshold} &  {${\mathcal I}_s(t)$}	 &  {Intra-cell interference} \\ \hline
{$L_{{\mathcal I}_s(t)}(\cdot) $} & {Laplace transform of the PDF of ${\mathcal I}_s(t)$} &  {$\vartheta_{a,s}^t$}	 &  {Intensity of accumulated IoT packets} \\ \hline
{$\lambda_s $} & {Intensity of new URLLC arrival packets} &  {$W^u(\bm r(t))$}	 &  {Bandwidth allocated to URLLC slices} \\ \hline
{$\omega_{i,s}^u(t)$} & {Bandwidth allocated for transmitting a URLLC packet} &  {$b_{i,s}^u(t)$}	 &  {Indicator of whether a URLLC device is served} \\ \hline
{$r_{i,s}^u(t)$} & {Channel uses for transmitting a URLLC packet} &  {$\varsigma $}	 &  {Queueing probability of a URLLC packet} \\ \hline
{$\bm g_{ij,s}(t)$} & {\tabincell{l}{Transmit beamformer pointing at the $i$-th \\ URLLC device from the $j$-th RRH}} &  {$\bm h_{ij,s}(t) $}	 &  {\tabincell{l}{Channel coefficient between the $i$-th \\ URLLC device and the $j$-th RRH}} \\ \hline
{$E_j$} & {Maximum transmit power of the $j$-th RRH} &  {$L_{i,s}^u(t) $}	 &  {Number of information bits of a URLLC packet} \\ \hline
{$\hat E_j^I$} & {Transmit power of a RRH for connecting to an IoT device} &  {$\omega_s(\bar t) $}	 &  {Bandwidth allocated to a mIoT slice $s\in {\mathcal S}^I$} \\ \hline
{$\bm H_{i,sm}$} & {Channel matrix corresponding to the $m$-th channel sample} &  {$\bm G_{i,sm}$} & {Power matrix corresponding to the $m$-th channel sample} \\ \hline
\end{tabular}
\end{table*}

\begin{figure}[!t]
\centering
\includegraphics[width=3.3in]{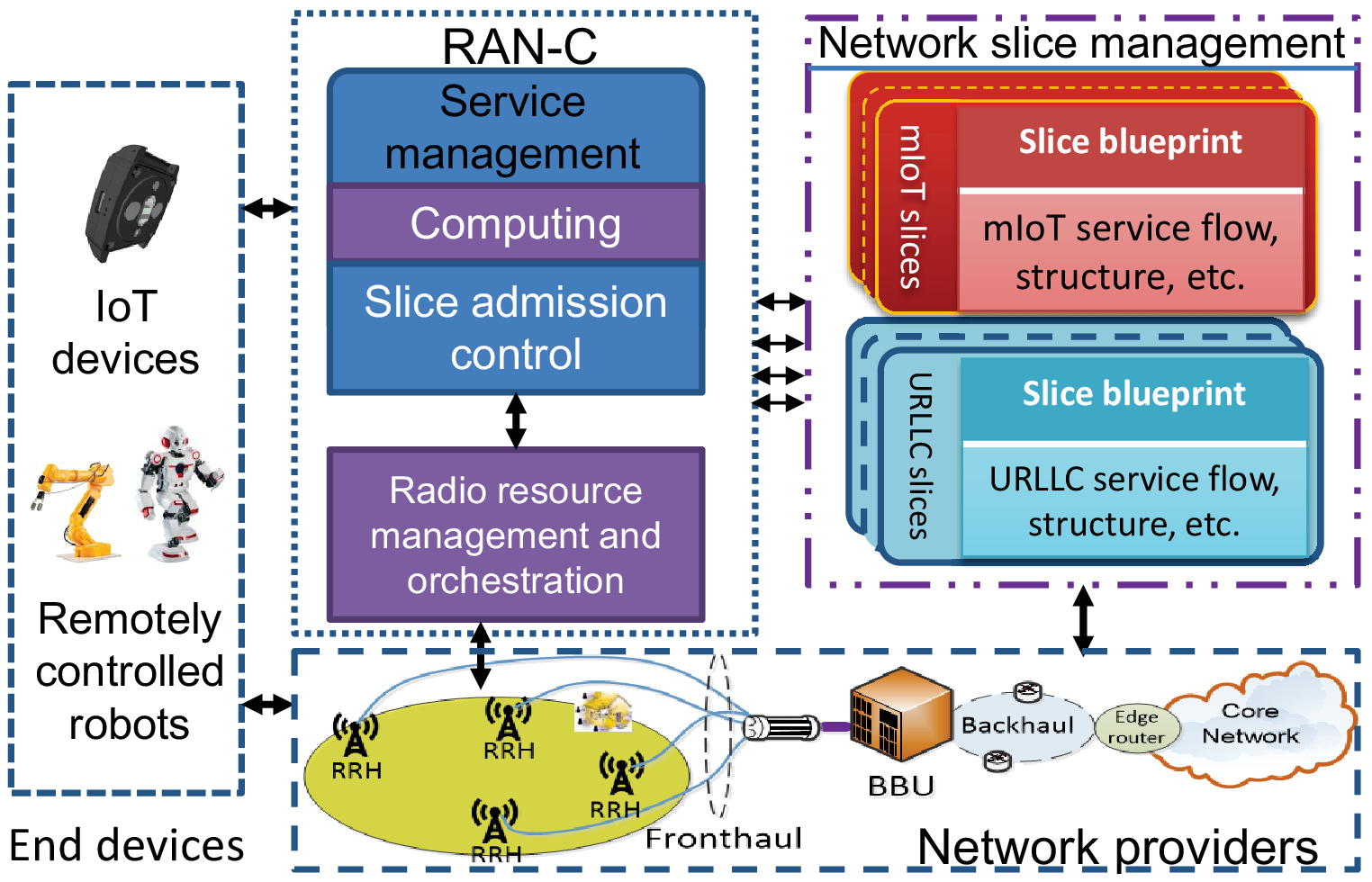}
\caption{{A CoMP-enabled RAN slicing system.}}
\label{fig:fig_URLLC_mIoT_slice}
\end{figure}

\section{System model}
{As shown in Fig. \ref{fig:fig_URLLC_mIoT_slice},} we consider a CoMP-enabled RAN slicing system for mIoT and bursty URLLC multiplexing service provision. From the viewpoint of infrastructure composition, the system mainly includes one baseband unit (BBU) and multiple remote radio heads (RRHs) connecting to the BBU via fronthaul links\footnote{{As in \cite{matera2018non}, we assume the perfect network synchronization and the available low-latency backhaul for the coordination. Although the CoMP structure requires the data sharing among RRHs resulting in additional overhead, some schemes such as short packet communication, flexible subcarrier spacing, and a flexible number of symbols in one transmission time interval \cite{khan2020availability} can be exploited to satisfy the stringent low latency requirement of URLLC.
}}.
From the perspective of network slicing, two types of inter-slices, i.e., mIoT slices and URLLC slices\footnote{{The service multiplexing of eMBB traffic is not considered in this paper as it has been investigated in our paper \cite{tang2019service}.}}, are exploited in this system with $\mathcal{S}^I$ and $\mathcal{S}^u$ denoting the mIoT slice set and {the} URLLC slice set. We focus on the modelling of uplink IoT data transmission in mIoT slices and the modelling of downlink URLLC data transmission in URLLC slices.
IoT devices (e.g., water meters and wearable e-health devices) are spatially distributed in ${\mathbb R}^2$ according to an independent homogeneous Poisson point process (PPP) $\Phi_s = \{{\bm u}_{i,s}; s \in \mathcal{S}^I, i = 1, 2, \ldots\}$ with intensity $\lambda_s^I$, where ${\bm u}_{i,s}$ is the $i$-th IoT device's location in the $s$-th mIoT slice.
There are $N^u$ URLLC devices (e.g., remote-controlled robot sensors)
that are randomly and evenly distributed in ${\mathbb R}^2$.
The RRHs are spatially distributed in ${\mathbb R}^2$ according to an independent PPP $\Phi_R = \{\bm v_j; j = 1, 2, \ldots\}$ with intensity $\lambda_R$, where $\bm v_j$ represents the location of the $j$-th RRH.
The number and locations of IoT devices and RRHs will be fixed once deployed. Besides, each RRH is equipped with $K$ antennas, and each device is equipped with a single antenna. The total system bandwidth $W$ of the system is limited and shared by mIoT slices and URLLC slices. A flexible frequency division multiple access (FDMA) technique is utilized to achieve the inter-slice and intra-slice interference isolation \cite{tang2019service}.

{\subsection{CoMP-enabled RAN slicing system architecture}}
In view of the architecture, the CoMP-enabled RAN slicing system consists of four parts including end devices, RAN coordinator (RAN-C), network slice management, and network providers, as shown in Fig. \ref{fig:fig_URLLC_mIoT_slice}. The system time is discretized and partitioned into time slots and minislots with a time slot consisting of $T$ minislots. At the beginning of each time slot, the RAN-C will decide whether to accept or reject received network slice requests defined later after checking available resource information (e.g., physical resource blocks (PRBs)) and computing. If a slice request is accepted, network slice management will be responsible for creating or activating corresponding types of virtual slices and configuring RAN protocol stacks, the processes of which are time-consuming and usually in a timescale of minutes to hours. Next, if a slice request admission arrives, network providers will find the optimal servers and paths to deploy virtual network functions to satisfy the end-to-end QoS requirements of the slice\footnote{{Although network providers are included as part of the architecture, the related problem of finding optimal servers and paths to deploy virtual network functions is not studied in this paper.}}. Meanwhile, at the beginning of each minislot, each active IoT device may try to connect to its associated RRH. RRHs will generate cooperated beamformers pointing at URLLC devices based on sensed channel coefficients.


{\subsection{mIoT and bursty URLLC slice model}}
{According to} the above mentioned network slice concept, especially from the viewpoint of the slice's QoS requirement, we can define a mIoT slice request as follows.

\begin{mydef}\label{def:IoT_slice_definition}
A mIoT slice request is defined as a tuple $\{\lambda_s^I, \theta_s^{th}\}$ for any slice $s \in {\mathcal S}^I$, where $\theta_s^{th}$ is the signal-to-interference-plus-noise ratio (SINR) threshold for an RRH to successfully decode packets (including preamble packets and IoT data packets) sent from an IoT device in $s$.
\end{mydef}

%
In this paper, RRHs will assign IoT devices to {$|{\mathcal S}^I|$} different slices according to the received SINR {with $|\cdot|$ indicating the number of elements in a set}. The SINR threshold configured for all IoT devices in a slice is similar.

\begin{mydef}\label{mydef:URLLC_slice_request}
A bursty URLLC slice request is defined as four tuples $\{I_s^u, D_s, \alpha, \beta\}$ for slice $s \in {\mathcal S}^u$, where $I_s^u$ is the number of URLLC devices in $s$, $D_s$ is the transmission latency requirement of each URLLC device in $s$, $\alpha$ and $\beta$ are the packet blocking probability threshold and the codeword error decoding probability threshold of each URLLC device, respectively \cite{How2019yang}.
\end{mydef}

In this definition, URLLC devices are grouped into $|{\mathcal S}^u|$ clusters according to the transmission latency requirement of each device. As URLLC packets may arrive in burst and network resources allocated to URLLC slices may be inadequate, URLLC packets may experience blocking. The packet blocking probability threshold is then involved as a QoS requirement of URLLC slices.
Besides, owing to the low latency requirement, URLLC packets should be immediately scheduled upon arrival; thus, URLLC slice requests should always be accepted by the RAN-C {if there are sufficient network resources}. 

\section{Problem formulation}
Based on the system model, this section aims to formulate the problem of CoMP-enabled RAN slicing for mIoT and bursty URLLC multiplexing service provision.
{To this aim, we first present mIoT and bursty URLLC slice constraints and physical resource constraints. Then, we define slice utility functions of the problem. With the constraints and utility functions, the CoMP-enabled RAN slicing problem is formulated.}
{\subsection{mIoT slice constraint}}
For an IoT device in $s$, if it has the opportunity to send endogenous arrival packets to {its} corresponding RRH, then it will randomly select a preamble (e.g., orthogonal Zadoff-hu sequences) from a BBU-maintained preamble pool and transmit the preamble to the RRH. Just like the literature \cite{soorki2017stochastic,jiang2018random}, although the whole connection establishment process usually follows a RACH four-step procedure \cite{grau2019preamble}, we assume that a connection between the IoT device and the RRH is set up if the preamble can be successfully transmitted.
In other words, the RA success probability is regarded as the probability of successfully transmitting a preamble.

For an RRH, if its received preamble SINR is no less than a preset SINR threshold, then the preamble is considered to be successfully transmitted.
As in \cite{jiang2018random}, we do not investigate the well-investigated preamble collision issue.
{However,} owing to the channel deep fading and severe co-channel interference, an IoT device may experience uplink preamble transmission outage.
{Then,} at minislot $t$, for a randomly selected active IoT device in $s \in \mathcal{S}^I$, its RA success probability is defined as
\begin{equation}\label{eq:preamble_trans_suc_prob}
    P_{s}{(t)} = {\mathbb P}\{{ SINR}_s(t) \ge \theta_s^{th}\} \ge {\pi _s},
\end{equation}
where $SINR_s(t)$ denotes the preamble SINR experienced at an RRH associating with the IoT device, $\pi_s$ denotes a threshold of the required RA success probability.

We utilize a power-law path-loss model to calculate the path-loss between an IoT device and its RRH in mIoT slices. To eliminate the `near-far' effect, a truncated channel inversion power control scheme is also exploited. In the pass-loss model, the IoT device's transmit power decays at the rate of $r^{-{\varphi}}$ with $r$ representing the propagation distance and $\varphi$ denoting the path-loss exponent. In the power control scheme, IoT devices associated with the same RRH compensate for the path-loss to maintain that the average received signal power of an IoT device at the RRH equals a threshold $\rho_o$. Without loss of generality, the cutoff threshold $\rho_o$ is set to be the same for all RRHs,
and we perform the analysis of RA success probability on an RRH located at the origin. According to {the} Slivnyak's theorem \cite{haenggi2012stochastic}, the analysis holds for a generic RRH located at a generic location. For {the} randomly selected {active} IoT device in $s \in \mathcal{S}^I$, the preamble SINR experienced at the RRH located at the origin can take the form
\begin{equation}\label{eq:SINR}
    SINR_s(t) = {{\rho_o {h_o}}}/({{{\sigma ^2} + {{\mathcal I}_{s}}(t)}}),
\end{equation}
where $\sigma^2$ represents the noise power, ${{\mathcal I}_{s}}(t)$ denotes intra-cell interference received at the RRH{\footnote{{Just like \cite{zhang2015resource}, the co-channel inter-cell interference received by each RRH is assumed as a part of thermal noise mainly because of the intra-slice (or mIoT slice) interference isolation, the long-distance fading, and the severe wall penetration loss. Therefore, we focus on the analysis of the intra-cell interference in this paper.}}}, the useful signal power equals $\rho_o h_o$ due to the truncated channel inversion power control \cite{elsawy2014stochastic}, $h_o$ denotes the channel gain between the IoT device and the RRH. Note that the channel gain experienced at a generic RRH is related to the spatial locations of both the RRH and its associated IoT devices. Nevertheless, we drop the spatial indices for notation lightening.
Besides, just like \cite{elsawy2014stochastic}, all channel gains are assumed to be {independent of spatial locations and independent} and identically distributed (i.i.d.). Considering the particular IoT device deployment environment, the Rayleigh fading is assumed, and the channel gain is assumed to be exponentially distributed with unit mean \cite{elsawy2014stochastic}.
The intra-cell interference received at the origin RRH can take the following form
\begin{equation}\label{eq:I_intra}
\begin{array}{*{20}{l}}
    {\mathcal I}_{s}(t) = \sum\nolimits_{m \in {{ u}_{s}^I} \backslash \{ o\} } {\mathbbm 1}({p_m}||{d_m}|{|^{ - \varphi }} = {\rho _o}) \times \\
    \qquad {\mathbbm  1}(N_{a,s}{(t)} > 0){{\mathbbm 1}({f_m} = {f_o}){\rho _o}{h_m}},
\end{array}
\end{equation}
where ${ u}_s^I$ is the set of IoT devices connecting to the origin RRH in $s \in {\mathcal S}^I$, $o$ is the randomly selected IoT device associated with the RRH at the origin, $p_m$ denotes the transmit power of the $m$-th IoT device, $||d_m||$ is the distance between the $m$-th IoT device and the origin RRH, {$N_{a,s}(t)$ is the \underline{\textbf a}ccumulated number of packets in a queue during $t$, which is maintained by the selected IoT device in slice $s$ for packet behavior (e.g., arrival and departure) analysis,} $f_o$ denotes the preamble and channel chosen by the randomly selected IoT device, $h_m$ is the channel gain from the $m$-th IoT device to the origin RRH. $f_o = f_m$ indicates that the randomly selected IoT device and the $m$-th IoT device select the same preamble and channel.
The $1^{\rm st}$ ${\mathbbm 1}(\cdot)$ (from left to right) on the right-hand-side of (\ref{eq:I_intra}) indicates that the average received signal power of an interfering device at the origin RRH equals $\rho_o$ {owing to the adoption of the  truncated channel inversion power control scheme}. The $2^{\rm nd}$ ${\mathbbm 1}(\cdot)$ denotes that {an interfering device must be active}. The $3^{\rm rd}$ ${\mathbbm 1}(\cdot)$ indicates that an interfering device selects the same preamble and channel as the randomly selected IoT device.

{\subsection{Bursty URLLC slice constraint}}


During minislot $t$, a \emph{compound} Poisson process \cite{becchi2008poisson}, where arrivals happen in bursts (or batches, i.e., several arrivals can happen at the same instant) and the inter-batch times are independent and exponentially distributed, is utilized to model the number of bursty URLLC packets arrive at each RRH. The intensity of the exponential distribution is set to be one batch. The number of new arrivals in each batch is subject to an independent homogeneous Poisson distribution with intensity ${\bm \lambda} = \{\lambda_s; s \in {\mathcal{S}^u}\}$, where $\lambda_s$ denotes the intensity of new arrivals in a batch destined to devices belonging to URLLC slice $s$.
Once arrived, new URLLC arrivals will enter a queue maintained by an RRH to be {served}. An $M/M/W^u$ queueing system with limited bandwidth $W^u$ is exploited to model the queue due to the low latency requirement.
Without loss of any generality, we assume that each RRH maintains the same queue due to the exploration of cooperated transmission. In the queue, a packet destined to URLLC device $i \in {\mathcal{I}_s^u}$, $s \in {\mathcal{S}^u}$ will be allocated with a block of system bandwidth $\omega_{i,s}^u(t)$ for a period of time $d_s \le D_s$ at minislot $t$. Owing to stochastic variations in the bursty packet arrival process, the limited bandwidth may not be enough to serve new arrivals occasionally. As such, URLLC packet blocking may happen. To reduce the probability of URLLC packet blocking,
the PRB in the frequency domain for URLLC should be narrowed while widening it in the time domain \cite{anand2018resource}. In this way, the number of concurrent transmissions will be increased, and the packet blocking probability is reduced. As the tolerable communication latency of a URLLC device $i \in {\mathcal I}_s^u$ in {slice} $s \in \mathcal{S}^u$ is $D_s$, we can scale up $d_s$ and choose $d_s$ and $\omega_{i,s}^u(t)$ at minislot $t$ using the following equation
\begin{equation}\label{eq:omega_s}
   d_{{s}}  = D_s \ {\rm and} \ \omega_{{i,s}}^{u}(t) = {b_{i,s}^u(t){r_{{i,s}}^{u}(t)}}/{{(\kappa D_s)}},
\end{equation}
where $r_{i,s}^u(t)$ denotes channel uses for transmitting a URLLC packet \cite{anand2018resource}, $\kappa$ is a constant reflecting the number of channel uses per unit time per unit bandwidth of FDMA frame structure and numerology, $b_{i,s}^u(t) \in \{0, 1\}$ is an indicator that indicates whether the device $i$ in $s$ can be served at minislot $t$. As mentioned above, because network resources are limited and shared by all network slices, not all URLLC devices can be guaranteed to be served at every minislot although the RAN-C will always accept the URLLC slice requests. If the QoS requirement of $i$ in $s$ is satisfied at $t$, then the device $i$ can be served by the slice $s$, and we let $b_{i,s}^u(t) = 1$; otherwise, $i$ cannot be served by $s$, and we let $b_{i,s}^u(t) = 0$. Certainly, we can adjust the slice priority weight introduced in subsection III.D to {orchestrate network} resources for the {coverage of all} URLLC devices.

Based on the result in (\ref{eq:omega_s}), at minislot $t$, for a given $M/M/W^u$ queue with packet arrival intensity $\bm \lambda$, the minimum upper bound of bandwidth orchestrated for URLLC slices with a packet blocking probability $\alpha$ and a packet queueing probability $\varsigma$ can be given by \cite{How2019yang}
    \begin{equation}\label{eq:URLLC_bandwidth}
        \begin{array}{l}
            W^u(\bm r(t)) \le \sum\limits_{s \in {\mathcal S}^u } {\sum\limits_{i \in {\mathcal I}_s^u} {{\lambda _{s}}b_{i,s}^u(t)\frac{{{{r_{i,s}^u(t)}}}}{\kappa}} } + \\
            \frac{{{\alpha} - \varsigma {\alpha}}}{{\varsigma  - {\alpha}}}\sqrt {\frac{({\sum\limits_{s \in {{\mathcal S}^u}} {\sum\limits_{i \in {\mathcal I}_s^u}{b_{i,s}^u(t)\lambda _s^2D_s^2}} })( {\sum\limits_{s \in {\mathcal S}^u} {\sum\limits_{i \in {\mathcal I}_s^u} {{\lambda _{s}}b_{i,s}^u(t)\frac{{r_{i,s}^u(t)}^2}{{{\kappa ^2}{D_s}}}} }})}{{\mathop {\min }\nolimits_{s \in {{\mathcal S}^u}} \{ {\lambda _s}{D_s}\} }}}.
        \end{array}
    \end{equation}

As (\ref{eq:URLLC_bandwidth}) is correlated with the channel use $r_{i,s}^u (t)$, we next discuss how to obtain its expression.
For any URLLC slice $s \in {\mathcal S}^u$, during minislot $t$, let $\bm g_{ij,s}(t) \in {\mathbb C}^K$ be the transmit beamformer pointing at the device $i$ from the $j$-th RRH and $\bm h_{ij,s}(t) \in \mathbb{C}^K$ be the channel coefficient between the $i$-th URLLC device and the $j$-th RRH.
{Recall that} RRHs cooperate to transmit signals to a URLLC device to satisfy its reliability requirement, the signal-to-noise ratio (SNR) received at device $i$ in $s$ at minislot $t$ can be written as
\begin{equation}\label{eq:URLLC_SINR}
    {SNR_{i,s}^u(t)} = {{|\sum\nolimits_{j \in {\mathcal J}} {{{\bm h}_{ij,s}^{\rm H}}{{(t)}}{{\bm g}_{ij,s}}(t)} {|^2}}}/{{\phi \sigma _{i,s}^2}},
\end{equation}
where ${\mathcal J} = \{1,2,\ldots,J\}$ denotes the set of deployed RRHs, $\phi > 1$ is an SNR loss coefficient owing to imperfect channel status information acquisition \cite{hou2018burstiness}, $\sigma_{i,s}^2$ denotes the noise power. Just like \cite{tang2019service}, (\ref{eq:URLLC_SINR}) does not include interference due to the usage of a flexible FDMA mechanism.

For URLLC transmission {where the short packet transmission scheme is leveraged}, the capacity analysis for a finite blocklength channel coding regime derived in \cite{yang2014quasi} shall be resorted. {However, the capacity formula in \cite{yang2014quasi} was {derived under} an additive white Gaussian noise (AWGN) channel assumption rather than a fading channel {assumption} \cite{yang2014quasi,Durisi2016toward}. To tackle this issue, like \cite{Yang2020Joint,Ren2020Joint}, we assume that the fading channel is a quasi-static Rayleigh fading channel over a minislot and the channel coefficients are i.i.d. Then,}
for any device $i \in {\mathcal{I}_s^u}$, $s \in \mathcal{S}^u$, the number of transmitted information bits $L_{i,s}^u(t)$ at minislot $t$ using $r_{i,s}^u(t)$ channel uses can be approximately correlated with the codeword\footnote{It is noteworthy that a URLLC packet will usually be coded before transmission. The generated codeword will be transmitted in the air interface such that the transmission reliability can be improved.} error decoding probability $\beta$
\begin{equation}\label{eq:URLLC_bit_length}
\begin{array}{l}
        L_{i,s}^u(t)  \approx  r_{i,s}^u(t)C(SNR_{i,s}^u(t)) - \\
        \qquad {Q^{ - 1}}(\beta)\sqrt {r_{i,s}^u(t)V(SNR_{i,s}^u(t))},
\end{array}
\end{equation}
where $C(SNR_{i,s}^u(t)) = \log_2(1 + SNR_{i,s}^u(t) )$,
$V(SNR_{i,s}^u(t)) = \ln^2 2\left( {1 - \frac{1}{{{{(1 + SNR_{i,s}^u(t))}^2}}}} \right)$ is the channel dispersion, and $Q(\cdot)$ is the $Q$-function.

The complicated expression of $V(SNR_{i,s}^u(t))$ in (\ref{eq:URLLC_bit_length}) significantly hinders the theoretical analysis of network resources orchestrated for URLLC slices.
Fortunately, as $V(SNR_{i,s}^u(t))$ is upper-bounded by $\ln^2 2$, we can obtain the closed-form expression of the minimum upper bound of $r_{i,s}^u(t)$. {Specifically,} by defining $x =  \sqrt{r_{i,s}^u(t)}$ and solving a quadratic equation with respect to (w.r.t.) $x$, {the closed-form expression of $r_{i,s}^u(t)$ can take the following form}
\begin{equation}\label{eq:URLLC_channel_use}
\begin{array}{l}
r_{i,s}^u(t) \le \frac{{L_{i,s}^u(t)}}{{C(SNR_{i,s}^u(t))}} + \frac{{{{({Q^{ - 1}}(\beta))}^2}}}{{2{{(C(SNR_{i,s}^u(t)))}^2}}} + \\
\quad \frac{{{{({Q^{ - 1}}(\beta))}^2}}}{{2{{(C(SNR_{i,s}^u(t)))}^2}}}\sqrt {1 + \frac{{4L_{i,s}^u(t)C(SNR_{i,s}^u(t))}}{{{{({Q^{ - 1}}(\beta))}^2}}}}.
\end{array}
\end{equation}

\subsection{Physical resource constraints}
Next, we describe the physical resource constraints enforced for the RAN slicing system.

In mIoT slices, each RRH may transmit feedback signals to its connected IoT devices for the connection establishment according to the RACH four-step procedure \cite{grau2019preamble}. Meanwhile, in URLLC slices, each RRH may transmit URLLC packets to URLLC devices. As the transmit power $E_j$ ($j \in \mathcal{J}$) of each RRH is limited, we have the following transmit power constraint
\begin{equation}\label{eq:RRH_energy}
\begin{array}{l}
   \sum\nolimits_{s \in {\mathcal S}^I} {{(1 + \alpha_g)\frac{\lambda_s^I}{\lambda_R}{{\hat E}_{j}^I}}}  + \\
   \quad \sum\nolimits_{s \in {\mathcal S}^u} {\sum\nolimits_{i \in {\mathcal I}_s^u} {b_{i,s}^u(t){{\bm g}_{ij,s}^{\rm H}}{{(t)}}{{\bm g}_{ij,s}}(t)} }  \le {E_j},
\end{array}
\end{equation}
where ${\hat E}_{j}^I$ is assumed to be a constant and denotes the average transmit power of the $j$-th RRH for connecting to an associated IoT device over downlink, $\alpha_g$ is a coefficient. As a PPP with intensity $\lambda_s^I$ is utilized to model the distribution of IoT devices, the actual number of IoT devices may be greater than $\lambda_s^I$ once deployed. As a result, the coefficient $\alpha_g$ is introduced to reserve transmit power for exceeded IoT devices.

In the RAN slicing system, as the total limited system bandwidth $W$ will be shared by mIoT slices and URLLC slices, we have the following bandwidth constraint
\begin{equation}\label{eq:total_bandwidth}
   \sum\nolimits_{s \in {\mathcal S}^I} {(1+\alpha_g)\omega_s(\bar t)}  + W^u(\bm r(t))  \le W,
\end{equation}
where $\omega_s(\bar t)$ denotes the bandwidth allocated to mIoT slice $s \in \mathcal{S}^I$. {$\omega_s(\bar t)$ is correlated with $F_s$ by means of $F_s = \left\lfloor {{{\omega _s}(\bar t)}/{{a}}} \right\rfloor$. This is because $F_s$ orthogonal uplink physical RA channels (PRACHs) will be allocated to the mIoT slice $s$. Besides, a single tone mode \cite{LTE-Nokia} with the tone spacing of $a$ MHz is adopted for each uplink PRACH, which indicates that each PRACH occupies a PRB.} $\alpha_g \omega_s(\bar t)$ denotes a block of reserved bandwidth for exceeded IoT devices.

In (\ref{eq:total_bandwidth}), $F_s$ is an integer, and some integer variable recovery schemes \cite{tang2019systematic} can be leveraged to obtain the suboptimal $F_s$. However, considering the high computational complexity of optimizing an integer variable and the utilization of the scheme of reserving additional bandwidth resources, we directly relax the integer variable into a continuous one, i.e., let $F_s = {{{\omega _s}(\bar t)}/{{a}}}$. Without loss of any generality, we regard $\omega_s(\bar t)$ as an independent variable below.
Additionally, as at least one PRB should be allocated to each type of mIoT slice $s \in {\mathcal S}^I$, we have
\begin{equation}\label{eq:mMTC_bandwidth}
    \omega_s(\bar t) \ge a.
\end{equation}

\subsection{Slice utility functions}
Owing to the exploration of mIoT and bursty URLLC service multiplexing, we should orchestrate network resources for all mIoT slices and URLLC slices to simultaneously maximize the slice utilities.
For a mIoT slice $s \in \mathcal{S}^I$, its primary goal is to offload as many packets as possible from IoT devices. Thus, the number of accumulated packets in each IoT device should be kept at a low level.
Considering that a great RA success probability of an IoT device will lead to a low number of accumulated packets, we define the mIoT slice utility as follows.
\begin{mydef}\label{mydef:IoT_slice_utility}
Over a time slot of duration $T$, the mIoT slice utility is defined as the time average of RA success probabilities of IoT devices in all mIoT slices, which is given by
\begin{equation}\label{eq:IoT_utility}
{{\bar U}^I} = \frac{1}{T}\sum\nolimits_{t = 1}^T {{U^I}(t)}  = \frac{1}{T}\sum\nolimits_{t = 1}^T {\tilde P(t) },
\end{equation}
where $\tilde P(t) = \sum\nolimits_{s \in {{\mathcal S}^I}} {\frac{{{\lambda _s^I}{P_s}(t)}}{{\sum\nolimits_{s \in {{\mathcal S}^I}} {{\lambda _s^I}} }}}$ with the numerator $\lambda_s^I P_s (t)$ represents the expected sum of RA success probabilities of IoT devices in slice $s \in \mathcal{S}^I$ and the denominator $\sum\nolimits_{s\in\mathcal{S}^I}{\lambda_s^I}$ denoting a normalization coefficient.
\end{mydef}

In (\ref{eq:IoT_utility}), ${\lambda_s^I}/{\sum\nolimits_{s \in \mathcal{S}^I} {\lambda_s^I}}$ can be regarded as an intra-slice priority coefficient. A mIoT slice serving more IoT devices will be orchestrated with more network resources.

For a URLLC slice $s \in \mathcal{S}^u$, its primary objective is to maximize the slice gain reflected by the parameters of the bursty URLLC slice request {in an efficient way}. Therefore, we define an energy-efficient utility for URLLC slices, as presented below.
\begin{mydef}\label{mydef:URLLC_slice_utility}
Over one time slot of duration $T$, the bursty URLLC slice utility is defined as the time-average energy efficiency for serving URLLC devices, which is given by
\begin{equation}\label{URLLC_long_term_utility}
    \begin{array}{l}
        {{\bar U}^u} = \frac{1}{T}\sum\limits_{t = 1}^T {{U^u}(t)}  = \frac{1}{T}\sum\limits_{t = 1}^T {\sum\limits_{s \in {{\mathcal S}^u}} {U_s^u({D_s},{\bm g_{ij,s}}(t))} } \\
        \quad = \frac{1}{T}\sum\limits_{t = 1}^T {\sum\limits_{s \in {{\mathcal S}^u}} {\sum\limits_{i \in {\mathcal I}_s^u} \frac{b_{i,s}^u(t)}{{1 - {e^{ - {D_s}}}}} } }  - \\
        \qquad \frac{\eta }{T}\sum\limits_{t = 1}^T {\sum\limits_{s \in {{\mathcal S}^u}} {\sum\limits_{j \in {\mathcal J}} {\sum\limits_{i \in {\mathcal I}_s^u} {b_{i,s}^u(t)\bm g_{ij,s}^{\rm{H}}(t){\bm g_{ij,s}}(t)} } } },
   \end{array}
\end{equation}
where $\eta$ is {a positive} energy efficiency coefficient indicating the tradeoff between the URLLC slice gain and the RRH power consumption.
\end{mydef}

In (\ref{URLLC_long_term_utility}), we characterize the slice gain by $\frac{1}{T}\sum\nolimits_{t = 1}^T {\sum\nolimits_{s \in {{\mathcal S}^u}} {\sum\nolimits_{i \in {\mathcal I}_s^u} \frac{b_{i,s}^u(t)}{{1 - {e^{ - {D_s}}}}} } }$ as it reflects the latency requirements of bursty URLLC slices.
Then, during a time slot, the original RAN slicing problem for mIoT and URLLC service multiplexing can be formulated as follows.
\begin{subequations}\label{eq:original_problem}
\begin{alignat}{2}
& \mathop {\rm maximize}\limits_{\{{b_{i,s}^u(t)},{{\omega_s}}(\bar t),{{\bm g}_{ij,s}}(t)\}} {{\bar U}^I}  + {\tilde \rho} {{\bar U}^u} \\
& {\rm s.t. \text{ }}  b_{i,s}^u(t) \in \{ 0,1\}, \forall s \in {\mathcal S}^u, i \in {\mathcal I}_s^u \\
& \rm {constraints \text{ } (\ref{eq:preamble_trans_suc_prob}), (\ref{eq:RRH_energy})-(\ref{eq:mMTC_bandwidth}) \text{ } are \text{ } satisfied,}
\end{alignat}
\end{subequations}
where $\tilde \rho$ is {a non-negative} inter-slice priority coefficient reflecting the priority of orchestrating network resources for mIoT slices and URLLC slices\footnote{{To make the problem (\ref{eq:original_problem}) slightly simpler}, we do not focus on the selection of the optimal {inter-slice priority} coefficient $\tilde \rho $ and {energy efficiency coefficient} $\eta$ here. Yet, an iterative method proposed in our paper \cite{xi2020non} can be leveraged to determine their values.}.

The solution of (\ref{eq:original_problem}) is quite challenging mainly because i) \textbf{indeterministic objective function}: {the closed-form expression of $P_s(t)$ is not obtained. Besides,} (\ref{eq:original_problem}) should be optimized at the beginning of the $1^{\rm st}$ minislot. The time-averaged objective function of (\ref{eq:original_problem}) can only be exactly computed according to the future channel information. Therefore, the value of the objective function is indeterministic at the beginning of the $1^{\rm st}$ minislot; ii) \textbf{two-timescale issue}: the creation of a network slice is performed at a timescale of time slot. Thus, the variable $\omega_s(\bar t)$ should be determined at the beginning of the time slot $\bar t$ and kept unchanged over the whole time slot. The channel, however, is time-varying. As a result, the beamformer $\bm g_{ij,s}(t)$ should be optimized at each minislot $t$. In summary, the variables in (\ref{eq:original_problem}) should be optimized at two different timescales; iii) \textbf{thorny optimization problem}: at each minislot $t$, the constraint (\ref{eq:preamble_trans_suc_prob}) is non-convex over $\omega_s(\bar t)$, and the constraints (\ref{eq:RRH_energy}), (\ref{eq:total_bandwidth}) are non-convex over $\bm g_{ij,s}(t)$, which together lead to a non-convex problem.

{To solve this highly challenging problem, we first derive the closed-form expression of $P_s(t)$. Next, we attempt to tackle the two-timescale issue via converting it to single-timescale issues. Finally, we develop a novel alternative optimization method to solve the thorny optimization problem. The procedures of solving (\ref{eq:original_problem}) are elaborated in the following sections.}


{\section{Derivation of the closed-form expression of the RA success probability}}
{The RA success probability $P_s(t)$ of an IoT device is closely related to whether the device needs to request for the RACH and whether the RACH request is restricted. If the device has IoT packets to deliver, the device will request for the RACH. Therefore, we analyze behaviors (i.e., arrival, accumulation, and departure) of IoT packets in an IoT device and the probability of restricting its RACH request. Based on the analysis results, the closed-form expression of $P_s(t)$ is derived.}

{\subsection{Arrival, accumulation and departure of IoT packets}}
For a typical IoT device, we leverage a queue maintained in the device to capture the arrival, accumulation and departure of IoT packets.
During minislot $t$, a Poisson distribution with intensity (or {the average number of} ne{\underline{\textbf w}} arrival {packets}) {$\vartheta_{w,s}(t)$} is exploited to model the random, mutually independent endogenous packet arrivals in an IoT device in slice $s$.
{Once arrived, new packets will not be sent out immediately in general and will enter a queue in the IoT device to wait to be served. To model the queue, an $M/M/k$ queueing system rather than an $M/M/{1}$ queueing system is leveraged as the key performance indicators of the former are much better than the later one. First-come-first-serve (FCFS) is selected as the queueing principle in the $M/M/k$ queue.}
Besides, to facilitate the analysis of the queue evolution process, we consider the slotted-ALOHA RA protocol although there are many other RA protocols such as non-orthogonal and coded RA protocols. Owing to the RA behavior of the ALOHA protocol, new arrivals during $t$ will only be counted at minislot $t + 1$. Thus, the {value of} $N_{a,s}(t)$ in the queue of a randomly selected IoT device in slice $s$ at $t$ is simultaneously determined by the following three factors: a) the accumulated number of packets; b) the number of new arrivals during $t - 1$; c) whether the preamble of the device can be successfully decoded by its associated RRH. The work in \cite{jiang2018random} presented a queue evolution model based on the single packet transmission configuration. We extend \cite{jiang2018random} to the {general} case of {transmitting multiple packets in one transmission time interval} {as multiple packets can be simultaneously transmitted in one transmission time interval},
and (\ref{eq:queue_evolution}) shows an evolution model of $N_{a,s}(t)$ for all $s \in {\mathcal S}^I$ with
\begin{equation}\label{eq:queue_evolution}
{N_{a,s}}(t) = \left\{ {\begin{array}{*{20}{l}}
{0,t = 1}\\
{{{[{N_{w,s}}(t - 1) - \mathbbm 1({\rm{RA \text{ } succeeds}}){x_s}]}^ + },t = 2}\\
{[{N_{a,s}}(t - 1) + {N_{w,s}}(t - 1) - }\\
{\mathbbm 1({\rm{RA \text{ } succeeds}}){x_s}{]^ + },t \ge 3}
\end{array}} \right.
\end{equation}
where $N_{w,s}(1)$ is the number of ne{\underline{\textbf w}} arrivals in the $1^{\rm st}$ minislot, ${\mathbbm 1}(\cdot)$ is a function equaling one if the corresponding RA succeeds; otherwise, ${\mathbbm 1}(\cdot)=0$. $x_s = a\log_2(1+\theta_s^{th})/L$ packets at the head of the queue will be popped out if ${\mathbbm 1}(\cdot)=1$, where $L$ is the {number of information bits of an} IoT packet; otherwise, they will not. $[x]^+ = \max(x,0)$.

At minislot $t$, based on the model in (\ref{eq:queue_evolution}), for a randomly selected IoT device in slice $s \in \mathcal{S}^I$, the probability that its maintained queue is {\underline{\textbf n}}ot {\underline{\textbf e}}mpty can be defined as
\begin{equation}\label{eq:non_empty_prob}
    P_{ne,s}{(t)} = {\mathbb P}\{N_{a,s}(t) > 0\}.
\end{equation}

(\ref{eq:non_empty_prob}) explicitly shows that new arrivals at $t$ will not be sent out immediately, which is reflected in (\ref{eq:queue_evolution}). (\ref{eq:non_empty_prob}) is significantly different from the work in \cite{jiang2018random}, which defines the non-empty probability ${\mathcal T}^m = {\mathbb P}\{N_{\rm Cum}^m + N_{\rm New}^m> 0 \}$, where $N_{\rm Cum}^m$ is the number of accumulated packets and $N_{\rm New}^m$ denotes the number of new arrivals in the $m$-th slot. The definition ${\mathcal T}^m$ shows that new arrivals during the $m$-th minislot have the probability of sending out immediately. {However, according to the RA behavior of the ALOHA protocol, new arrivals in minislot $m$ can only be sent out in minislot $m+1$ if possible.}


Next, we describe the packet departure process combined with {an extended subframe structure for mIoT transmissions}.
As mentioned above, partly due to the limitation on the subframe structure, NB-IoT and LTE-M {\cite{Cellular-Ericsson,LTE-Nokia}} can only admit 50,000 devices.
For NB-IoT, only one PRB with a bandwidth of $180$ KHz in the frequency domain is allocated for the IoT service, and each physical channel occupies the whole PRB.
For LTE-M, although the physical channels are time and frequency multiplexed, it only reserves six in-band PRBs with a total bandwidth of $1.08$ MHz in the frequency domain for the IoT service.
Thus, the subframe structure for mIoT transmissions should be revisited if more RACH requests from IoT devices are required to be accepted.

Fig. \ref{fig:fig_frame_minislot_structure} depicts an extended subframe structure for mIoT transmissions. Although it depicts some essential channels, we do not discuss their correlations to the considered {RAN slicing} problem as the detailed research on the physical layer supporting the mIoT service is out of our scope.
In this structure, both the frequency division multiplexing (FDM) scheme and {the} code division multiplexing (CDM) scheme are leveraged to admit more IoT devices in the way of alleviating mutual device interference. Particularly, the FDM scheme can alleviate signal interference through orthogonal frequency band allocation. The CDM scheme mitigates the co-channel signal interference via reducing the cross-correlation of simultaneous transmissions.
{Based on the extended structure}, at the beginning of each minislot, an active IoT device, i.e., an IoT device whose queue is non-empty, will randomly choose a preamble, {which is an orthogonal sequence,} from a set of non-dedicated RA preambles of size $\xi$. {Next, it will} transmit the preamble through a randomly selected PRACH, {which occupies a PRB}. For each preamble, it has an equal probability ${{1}/{\xi}}$ to be chosen by each IoT device. Similarly, each PRACH has an equal probability ${1}/{F_s}$ to be selected. Thus, the average number of IoT devices in mIoT slice $s \in \mathcal{S}^I$ choosing the same PRACH and the same preamble is ${\lambda_s^I}/{\xi F_s}$. Notably, a greater $\xi F_s$ may significantly reduce signal interference experienced at each RRH.

\begin{figure}[!t]
\centering
\includegraphics[width=3.4in]{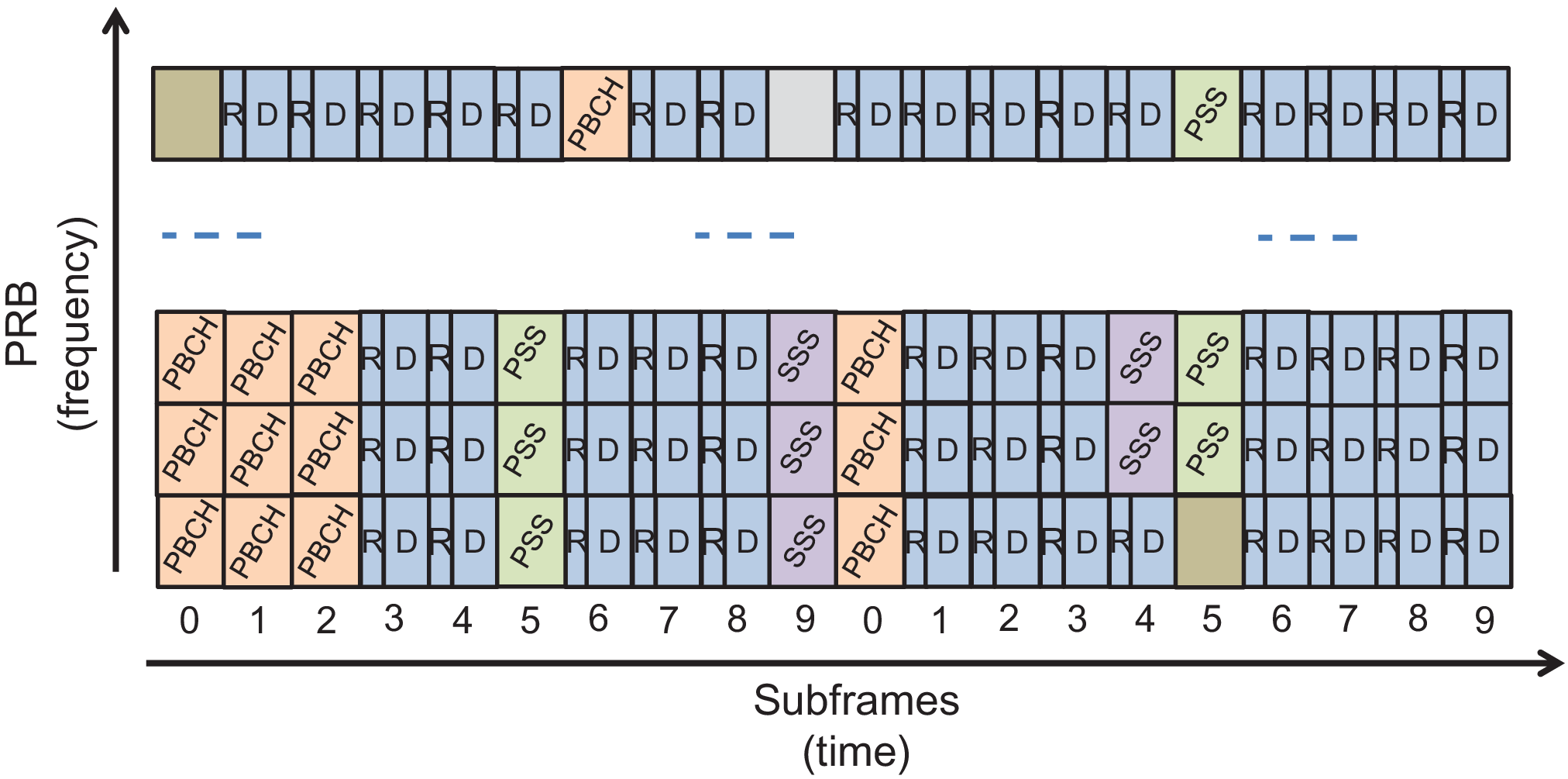}
\caption{{The extended subframe structure for mIoT transmissions.} 'R' and 'D' denote the resource block reserved for preamble and IoT data transmission. The preamble in 'R' also reflects the usage of a code division multiplexing scheme. PBCH, PSS and SSS represent the PRBs for physical broadcast channel, primary synchronization signal and secondary synchronization signal transmission, respectively.}
\label{fig:fig_frame_minislot_structure}
\end{figure}

{\subsection{Access control scheme}}
In a mIoT network slice, as the slotted-ALOHA protocol allows all active IoT devices to request for RA at the beginning of each minislot without checking channel statuses, IoT devices may simultaneously transmit preambles. It may incur severe slice congestion that may lower the RA success probabilities of IoT devices and degrade the system performance.
Access control has been considered as an efficient proposal of alleviating congestion \cite{Study163GPP}, and many access control schemes such as access class barring (ACB), power ramping and back-off schemes \cite{jiang2018random} have been proposed.
As we aim at investigating the performance difference between a network slicing system without access control and with access control, we adopt the following two schemes \cite{jiang2018random}: 1) \textbf{Unrestricted scheme:} each active IoT device requests the RACH at the beginning of minislot $t$ without access restriction;
2) \textbf{ACB scheme:} at the beginning of $t$, each active IoT device draws a random number $q \in [0, 1]$ and can request the RACH only when $q < P_{ACB}$, where $P_{ACB}$ is an ACB factor determined by RRHs based on the slice congestion condition.

With the introduced access control schemes, the probability that the RACH requests of a randomly selected IoT device in slice $s \in \mathcal{S}^I$ are {\underline{\textbf n}}ot {\underline{\textbf r}}estricted at minislot $t$ is defined as
\begin{equation}\label{eq:non_restriction_prob}
    P_{nr,s}{(t)} = {\mathbb P}\{{\rm Unrestricted} \text{ } {\rm RACH} \text{ } {\rm requests}\}.
\end{equation}

For all $s \in \mathcal{S}^I$ at any minislot $t$, we have $P_{nr,s}(t) = 1$ for the unrestricted scheme and $P_{nr,s}(t) = P_{ACB}$ for the ACB scheme.

{\subsection{Closed-form expression of $P_s(t)$}}
{With the above analysis results, we can now derive the closed-form expression of $P_s(t)$, {$\forall s \in {\mathcal S}^I$}. Specifically,} for the randomly selected IoT device in $s \in {\mathcal{S}^I}$, we can rewrite (\ref{eq:preamble_trans_suc_prob}) as follows 
\begin{equation}\label{eq:QoS_analysis}
    \begin{array}{l}
P_s(t) = {\mathbb P}\{ {h_o} \ge \frac{{\theta _s^{th}}}{{{\rho _o}}}({\sigma ^2} + {{\mathcal I}_{s}(t)})\} \\
\mathop  = \limits^{(a)} {{\mathbb E}}\left[ {\exp \left\{ { - \frac{{\theta _s^{th}}}{{{\rho _o}}}({\sigma ^2} + {{\mathcal I}_{s}(t)})} \right\}} \right] \\
=  e^ { { - \frac{{\theta _s^{th}}}{{{\rho _o}}}{\sigma ^2}} }{\mathcal L}_{{{\mathcal I}_{s}(t)}}\left(\frac{\theta_s^{th}}{\rho_o}\right),
\end{array}
\end{equation}
where (a) follows from the {law of total probability} over ${\mathcal I}_{s}(t)$, and ${\mathcal L}_{{{\mathcal I}_{s}(t)}} (\cdot)$ denotes the Laplace transform (LT) of the probability density function (PDF) of the random variable ${\mathcal I}_{s}(t)$. Note that the notation ${\mathcal L}_{{{\mathcal I}_{s}(t)}} (\cdot)$ is a terminology that is a slight abuse of subscript ${{\mathcal I}_{s}(t)}$.

The following lemma characterizes the LT of interference $\mathcal{I}_s(t)$. {By referring to the RA behavior of the ALOHA protocol, we derive the expression of the LT of interference $\mathcal{I}_{\rm Intra}(t)$, which is obviously different from that obtained in \cite{jiang2018random}.} In \cite{jiang2018random}, the obtained ${\mathcal L}_{{{\mathcal I}_{\rm Intra}(t)}}(\gamma_{th}/\rho)$ was a quasi-convex function over the system bandwidth allocated to IoT devices. In this paper, the obtained ${\mathcal L}_{{{\mathcal I}_{s}(t)}} \left(\theta_s^{th}/\rho_o\right)$ is the difference of two quasi-convex functions that significantly increases the difficulty of orchestrating system resources for mIoT slices.
\begin{lemma}\label{lem:LT_interference_expression}
For the origin RRH, based on the packet evolution model in (\ref{eq:queue_evolution}), the LT of its received interference from active IoT devices associated with it is given by
    \begin{equation}\label{eq:LT_interference_expression}
        {\mathcal L}_{{{\mathcal I}_{s}(t)}} \left(\varpi_s\right) = {\frac{{1 + \varpi_s{\rho _o}}}{{{{\left( {1 + \alpha_s \varpi_s{\rho _o}/\left( {1 + \varpi_s{\rho _o}} \right)} \right)}^{3.5}}}} - \frac{{1 + \varpi_s{\rho _o}}}{{{{\left( {1 + \alpha_s } \right)}^{3.5}}}}},
    \end{equation}
where $\varpi_s = {\theta_s^{th}}/{\rho_o}$, $\alpha_s  = {{{P_{nr,s}}(t){P_{ne,s}}(t){\lambda _{{s}}^I}}}/({{3.5{\lambda _R}{\xi F_s}}})$.
\end{lemma}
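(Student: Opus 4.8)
The plan is to evaluate $\mathcal{L}_{\mathcal{I}_s(t)}(\varpi_s)=\mathbb{E}\!\left[\exp(-\varpi_s\mathcal{I}_s(t))\right]$ by combining the Poisson structure of the interfering set with the classical Gamma approximation for the area of a typical Poisson--Voronoi cell. From (\ref{eq:I_intra}), the interference at the origin RRH is a sum of i.i.d.\ contributions $\rho_o h_m$ over the IoT devices that are (i) associated with the origin RRH, (ii) active, (iii) unrestricted in their RACH request, and (iv) collide with the typical device on the chosen preamble/PRACH; the first indicator in (\ref{eq:I_intra}) only pins each interferer's useful received power to $\rho_o$ through the truncated channel inversion, so no distance integral survives. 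Conditioned on the association, each co-cell device is kept in the interfering set independently with probability $P_{ne,s}(t)P_{nr,s}(t)/(\xi F_s)$, i.e.\ the interfering set is an independent thinning of the co-cell IoT process. The two ingredients that shape the final closed form are the unit-mean exponential (Rayleigh) fading, which gives $\mathbb{E}[e^{-\varpi_s\rho_o h}]=1/(1+\varpi_s\rho_o)$ (the source of the $(1+\varpi_s\rho_o)$ factors), and the fact that a typical cell of $\Phi_R$ has area well approximated by a Gamma variable $A$ with shape $3.5$ and scale $1/(3.5\lambda_R)$ (the source of the exponent $3.5$).

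I would first use Slivnyak's theorem to place the serving RRH at the origin and to describe the co-cell devices through the reduced Palm distribution of $\Phi_s$. Conditioning on the serving cell area $A$, the number of retained interferers is Poisson with mean $\mu A$, where $\mu=\lambda_s^I P_{ne,s}(t)P_{nr,s}(t)/(\xi F_s)$. Applying the probability generating functional of the Poisson process (equivalently, summing the Poisson series with per-point factor $1/(1+\varpi_s\rho_o)$) gives
\[
\mathbb{E}[e^{-\varpi_s\mathcal{I}_s(t)}\mid A]=\exp\!\left(-\mu A\,\frac{\varpi_s\rho_o}{1+\varpi_s\rho_o}\right).
\]
De-conditioning over $A$ with the Gamma moment generating function $\mathbb{E}[e^{-uA}]=(1+u/(3.5\lambda_R))^{-3.5}$, taken at $u=\mu\,\varpi_s\rho_o/(1+\varpi_s\rho_o)$, yields the factor $\big(1+\alpha_s\varpi_s\rho_o/(1+\varpi_s\rho_o)\big)^{-3.5}$ because $\mu/(3.5\lambda_R)=\alpha_s$; evaluating the same Gamma transform at $u=\mu$ (the ``no co-cell transmitter on the chosen resource'' event) gives $(1+\alpha_s)^{-3.5}$.

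The subtle step---the one I expect to be the real obstacle---is the bookkeeping of the typical device's own occupancy of the chosen preamble/PRACH under slotted-ALOHA timing. Because new arrivals in minislot $t-1$ are transmitted only in minislot $t$, the typical device must be counted on the same footing as its co-cell peers on resource $f_o$: if $N$ denotes the (Poisson--Gamma) number of active devices on $f_o$ in the origin cell, the interference the typical device actually experiences corresponds to $N-1$ terms on the event $\{N\ge1\}$, so that
\[
\mathcal{L}_{\mathcal{I}_s(t)}(\varpi_s)=(1+\varpi_s\rho_o)\Big(\mathbb{E}[z^{N}]-\mathbb{P}\{N=0\}\Big),\qquad z=\tfrac{1}{1+\varpi_s\rho_o}.
\]
Substituting $\mathbb{E}[z^{N}]=\big(1+\alpha_s(1-z)\big)^{-3.5}=\big(1+\alpha_s\varpi_s\rho_o/(1+\varpi_s\rho_o)\big)^{-3.5}$ and $\mathbb{P}\{N=0\}=(1+\alpha_s)^{-3.5}$ gives exactly (\ref{eq:LT_interference_expression}). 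The delicate points are: justifying the mean-field decoupling under which each device's activity is an independent $\mathrm{Bernoulli}\big(P_{ne,s}(t)P_{nr,s}(t)\big)$ indicator uncorrelated with the interference it experiences; justifying the Gamma cell-area approximation, including the use of the typical-cell law rather than the size-biased law of the cell containing the typical device; and tracking precisely why the $\{N\ge1\}$ restriction---rather than the unrestricted sum used in \cite{jiang2018random}---is the correct one here, which is exactly the ALOHA ``next-minislot'' effect noted after (\ref{eq:non_empty_prob}).
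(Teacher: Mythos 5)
Your proposal is correct and follows essentially the same route as the paper's Appendix A: unit-mean exponential fading gives the per-interferer factor $1/(1+\varpi_s\rho_o)$, the per-cell count of active co-resource devices is gamma--Poisson with shape $3.5$ and parameter $\alpha_s$, and the shift-by-one accounting for the typical device (equivalently, restricting to $N\ge 1$ and counting $N-1$ interferers) produces exactly $(1+\varpi_s\rho_o)\bigl(\mathbb{E}[z^{N}]-\mathbb{P}\{N=0\}\bigr)$, which is the stated two-term expression. The only cosmetic difference is that you re-derive the negative-binomial count law by conditioning on a Gamma-distributed Voronoi cell area and applying the Poisson PGFL, whereas the paper cites this PMF directly from \cite{yu2013downlink} and then invokes the gamma--Poisson generating-function identity.
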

\begin{proof}
Please refer to Appendix A.
\end{proof}



By substituting (\ref{eq:LT_interference_expression}) into (\ref{eq:QoS_analysis}), we can obtain a mathematical expression of $P_s(t)$. The expression, however, is not in the closed-form as it is a function of $P_{ne,s}(t)$, the closed-form expression of which is not obtained. Next, we attempt to derive the closed-form expression of $P_{ne,s}(t)$.

According to the definition of non-empty probability, $P_{ne,s}(t)$ is correlated with $N_{a,s}(t)$. Thus, we theoretically analyze $P_{ne,s}(t)$ as the following.

From (\ref{eq:queue_evolution}), we can observe that $N_{a,s}(1) = 0$ for all $s \in {\mathcal S}^I$; thus, at the $1^{\rm st}$ minislot, we have
\begin{equation}\label{eq:non_empty_prob_1}
    P_{ne,s}^{1} = {\mathbb P}\{N_{a,s}^1 > 0\} = 0,
\end{equation}
where we write $x^t$ instead of $x(t)$ to lighten the notation. {The similar lightened notation is adopted throughout the rest of this section to simplify the description.}

The following lemma presents the closed-form expression of the non-empty probability of a randomly selected IoT device served by the origin RRH when minislot $t > 1$.
\begin{lemma}\label{lem:non_empty_prob_lemma}
    The number of accumulated packets of a randomly selected IoT device served by the origin RRH at minislot $t > 1$ may be approximately Poisson distributed. Therefore, based on the model in (\ref{eq:queue_evolution}), for any mIoT slice $s \in \mathcal{S}^I$, we approximate the number of {\underline{\textbf a}}ccumulated packets $N_{a,s}^{t}$ at $t$ as a Poisson distribution with intensity $\vartheta_{a,s}^{t}$, which is given by
    \begin{equation}\label{eq:mu_accumulated_packets}
        \vartheta _{a,s}^t = \left [\vartheta _{w,s}^{t - 1} + \vartheta _{a,s}^{t - 1} - P_{s}^{t - 1}\left( {1 - {e^{ - \vartheta _{w,s}^{t - 1} - \vartheta _{a,s}^{t - 1}}}} \right) \right ]^+
    \end{equation}

    Then, the probability that the queue of the device is non-empty at $t$ can be written as
    \begin{equation}\label{eq:non_empty_prob_m}
        P_{ne,s}^t = 1 - {e^{ - \vartheta _{a,s}^t}}
    \end{equation}
\end{lemma}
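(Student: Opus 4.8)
The plan is to derive the recursion \eqref{eq:mu_accumulated_packets} by taking expectations in the queue evolution model \eqref{eq:queue_evolution}, and then to obtain \eqref{eq:non_empty_prob_m} from the (approximate) Poisson property. First I would argue the Poisson approximation heuristically: new endogenous arrivals $N_{w,s}^{t}$ are Poisson with intensity $\vartheta_{w,s}^{t}$ by assumption, and the superposition $N_{a,s}^{t-1}+N_{w,s}^{t-1}$ of two independent Poisson variates is again Poisson with intensity $\vartheta_{a,s}^{t-1}+\vartheta_{w,s}^{t-1}$; the only operation that breaks exact Poissonness is the truncated subtraction $[\,\cdot-\mathbbm 1(\text{RA succeeds})x_s\,]^{+}$, and the approximation consists in re-fitting a Poisson law to the result at each step (this is the standard mean-field / Poissonization argument, analogous to \cite{jiang2018random}). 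Granting that, $P_{ne,s}^{t}={\mathbb P}\{N_{a,s}^{t}>0\}=1-e^{-\vartheta_{a,s}^{t}}$ is immediate, which is \eqref{eq:non_empty_prob_m}.

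Next I would pin down the intensity recursion by matching means. Taking expectations in the $t\ge 3$ branch of \eqref{eq:queue_evolution} and, to stay in closed form, dropping the $[\cdot]^{+}$ inside the expectation (or equivalently assuming the queue is rarely driven negative, then reinstating $[\cdot]^{+}$ on the whole intensity), I get
\[
\vartheta_{a,s}^{t}={\mathbb E}[N_{a,s}^{t}]\approx {\mathbb E}[N_{a,s}^{t-1}]+{\mathbb E}[N_{w,s}^{t-1}]-{\mathbb E}\big[\mathbbm 1(\text{RA succeeds})\,x_s\big].
\]
The first two terms are $\vartheta_{a,s}^{t-1}$ and $\vartheta_{w,s}^{t-1}$. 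For the third term I would use that the RA attempt only occurs when the queue is non-empty, which happens with probability $P_{ne,s}^{t-1}=1-e^{-\vartheta_{w,s}^{t-1}-\vartheta_{a,s}^{t-1}}$ (the queue at the start of the attempt carries the accumulated plus newly arrived packets), and conditioned on that the RA succeeds with probability $P_s^{t-1}$; so ${\mathbb E}[\mathbbm 1(\text{RA succeeds})x_s]=x_s P_s^{t-1}(1-e^{-\vartheta_{w,s}^{t-1}-\vartheta_{a,s}^{t-1}})$. Absorbing the constant $x_s=a\log_2(1+\theta_s^{th})/L$ into the normalization (or taking $x_s=1$ packet per successful slot, consistent with how \eqref{eq:mu_accumulated_packets} is written), and finally restoring the physical constraint that an intensity cannot be negative by applying $[\cdot]^{+}$ to the whole right-hand side, yields exactly \eqref{eq:mu_accumulated_packets}. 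The base case is handled separately: \eqref{eq:queue_evolution} gives $N_{a,s}^{1}=0$ and $N_{a,s}^{2}=[N_{w,s}^{1}-\mathbbm 1(\cdot)x_s]^{+}$, so $\vartheta_{a,s}^{1}=0$ and the recursion is seeded consistently.

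The main obstacle is rigorously justifying the Poisson approximation and the interchange that drops $[\cdot]^{+}$ inside the expectation: the true stationary-or-transient law of $N_{a,s}^{t}$ is not Poisson, the increments are not independent across minislots (the success indicator depends on $P_s^{t-1}$, which itself depends on $P_{ne,s}^{t-1}$ and hence on the queue), and $[\cdot]^{+}$ is nonlinear so ${\mathbb E}[[X]^{+}]\ne [{\mathbb E}X]^{+}$ in general. I would present this the way the surrounding literature does — as a tractable approximation whose accuracy is to be confirmed numerically in Section~VII — rather than as an exact identity, and I would note that the error is expected to be small in the regime of interest (moderate per-device load, so the queue is positive or empty but seldom "deeply" negative before truncation). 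Everything else is bookkeeping: substituting \eqref{eq:non_empty_prob_m} back into $\alpha_s$ in Lemma~\ref{lem:LT_interference_expression} and then into \eqref{eq:QoS_analysis} closes the loop and produces the claimed closed form for $P_s(t)$.
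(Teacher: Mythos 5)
Your proposal is correct and follows essentially the same route as the paper's Appendix B: a Poisson re-fit at each minislot, mean-matching on the queue recursion with the departure term $x_s P_s^{t-1}\bigl(1-e^{-\vartheta_{w,s}^{t-1}-\vartheta_{a,s}^{t-1}}\bigr)$ obtained from the non-empty-queue probability of the superposed Poisson content, the $[\cdot]^+$ moved outside the expectation, and the base case seeded from $N_{a,s}^1=0$. The paper merely writes the same mean computation more explicitly as a double sum over the convolution of the two Poisson PMFs before collapsing it, and you correctly flag both the $x_s$ factor absent from the lemma statement and the nonrigorous steps (Poissonization, interchange of $[\cdot]^+$ and expectation) that the paper also leaves as approximations.
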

\begin{proof}
Please refer to Appendix B.
\end{proof}

Combined with (\ref{eq:QoS_analysis}), (\ref{eq:LT_interference_expression}) and (\ref{eq:non_empty_prob_m}), the closed-form expression of $P_s(t)$ ($s \in \mathcal{S}^I$) can be obtained.

\section{Problem solution with system generated channels}
{Although we obtain the closed-form expression of $P_s(t)$, it is still difficult to solve (\ref{eq:original_problem}). This is mainly because (\ref{eq:original_problem}) is a two-timescale optimization problem and some optimization methods cannot be directly utilized to solve it.
A possible proposal of solving the two-timescale optimization problem is to explore an SAA technique \cite{kim2015guide} and an ADMM method \cite{boyd2011distributed}.
The SAA technique can be utilized to approximate the indeterministic objective function. Based on the approximated results, the ADMM method can be leveraged to decompose the two-timescale problem into multiple single-timescale problems, which can be solved by some optimization methods. Then, the solution of the two-timescale problem can be effectively recovered by ADMM based on the solutions of the single-timescale problems.
}

\subsection{Sample average approximation and alternating direction method of multipliers}
As mIoT slices and URLLC slices share the network resources, both $\bar U^I$ and $\bar U^u$ may be determined by channel coefficients experienced by URLLC slices. At each minislot $t$, due to the i.i.d. assumption on the channel coefficients of URLLC slices, we have
\begin{equation}\label{objfun_approx}
    \frac{1}{T}\sum\nolimits_{t = 1}^T {{U^I}(t)}  + \frac{1}{T}\sum\nolimits_{t = 1}^T {\tilde \rho {U^u}(t)}  \approx {{\mathbb E}_{\hat {\bm h}}}\left[ {{{\hat U}^I} + \tilde \rho {{\hat U}^u}} \right]
\end{equation}
where $\hat {\bm h}$ is the channel samples of URLLC slices collected at the beginning of the time slot $\bar t$.

Given a collection of channel samples $\{\bm h_m\}$ with $\bm h_m = [\bm h_{11,1m};\ldots;\bm h_{1J,sm};\ldots;\bm h_{N^uJ,|\mathcal{S}^u|m}]$ and $m \in \mathcal{M}=\{1,\ldots,M\}$. For notation lightening, we write $x_m$ instead of $x(m)$ that represents a variable corresponding to the channel sample $\bm h_m$.
Just like \cite{How2019yang}, as constraints (\ref{eq:original_problem}b) and (\ref{eq:original_problem}c) construct a non-empty compact set, the conclusion of Proposition 5.1 in \cite{How2019yang} is applicable to this paper by exploiting the SAA technique.
The conclusion indicates that if the number of channel samples $M$ is reasonably large, then $\frac{1}{M}\sum\nolimits_{m = 1}^M {U_m^I }  + \frac{\tilde \rho}{M}\sum\nolimits_{m = 1}^M {U_m^u }$ converges to ${{\mathbb E}_{\hat {\bm h}}}[ {{{\hat U}^I} + {{\hat U}^u}} ]$ uniformly on the non-empty compact set almost surely. In other words, the SAA technique enables us to use the channel samples collected at the beginning of a time slot to approximate the unknown channel coefficients over the time slot.

Recall that the variable $\omega_s(\bar t)$ will be kept unchanged over the time slot $\bar t$ and the beamformer $\bm g_{ij,s}(t)$ should be calculated at each minislot $t$, we can further consider (\ref{eq:original_problem}) as a global consensus problem, which can be effectively mitigated by an ADMM method. In (\ref{eq:original_problem}), $\omega_s(\bar t)$ is a global consensus variable that should be maintained in consensus for all $\bm h_m$ ($m \in \mathcal{M}$), and $\bm g_{ij,sm}$ that is calculated based on $\bm h_m$ is a local variable.
The fundamental principle of ADMM is to impose augmented penalty terms characterizing global consensus constraints on the objective function of an optimization problem. In this way, the local variables can be driven into the global consensus while still attempting to maximize the objective function.
Let ${\bm G}_{i,sm} = {\bm g}_{i,sm} {\bm g}_{i,sm}^{\rm H} \in {\mathbb R}^{JK \times JK}$, ${\bm H}_{i,sm} = {\bm h}_{i,sm} {\bm h}_{i,sm}^{\rm H} \in {\mathbb R}^{JK \times JK}$, where ${\bm g}_{i,sm} = [{\bm g}_{i1,sm};\ldots;{\bm g}_{iJ,sm}] \in {\mathbb C}^{JK \times 1}$ and ${\bm h}_{i,sm} = [{\bm h}_{i1,sm};\ldots;{\bm h}_{iJ,sm}] \in {\mathbb C}^{JK \times 1}$. By applying the property \cite{karipidis2008quality} ${{{\bm G}_{i,sm}} = {{\bm g}_{i,sm}}{\bm g}_{i,sm}^{\rm H} \Leftrightarrow {{\bm G}_{i,sm}} \succeq 0}$, ${{\rm rank}({{\bm G}_{i,sm}}) \le 1}$ and utilizing the conclusions of SAA and ADMM, we can approximate (\ref{eq:original_problem}) as the following problem at the beginning of the time slot $\bar t$.
\begin{subequations}\label{eq:SAA_admm_problem}
\begin{alignat}{2} 
& \mathop {{\rm{minimize}}}\limits_{\{ \omega _{sm},\omega_s(\bar t),{b_{i,sm}^u},{\bm G_{i,sm}}\}\hfill}
\sum\limits_{m = 1}^M {\left[ { - \frac{{U_m^I}}{M} - \frac{{\tilde \rho U_m^u}}{M}} \right]}  + \nonumber \\
& \underbrace {\sum\limits_{m = 1}^M {\sum\limits_{s \in {{\mathcal S}^I}} {\left[ {{\psi _{sm}}\left( {{\omega _{sm}} - {\omega _s}(\bar t)} \right) + \frac{\mu }{2}\left\| {{\omega _{sm}} - \omega{_s}(\bar t)} \right\|_2^2} \right]} } }_{{\rm augmented} \text{ } {\rm penalty} \text{ } {\rm terms}} \allowdisplaybreaks[4] \\
& {\rm s.t. \text{ }}  P_{sm} \ge \pi_s, \forall s \in {\mathcal S}, m \in {\mathcal M} \\
& \sum\nolimits_{s \in {\mathcal S}^I} {{(1 + \alpha_g)\frac{\lambda_s^I}{\lambda_R}{\hat E}_{j}^I}}  + \nonumber \\
& \sum\nolimits_{s \in {{\mathcal S}^u}} {\sum\nolimits_{i \in {\mathcal I}_s^u} {b_{i,sm}^u {\rm tr}({{\bm Z}_j}{{\bm G}_{i,sm}})} }  \le {E_j}, \forall j\in {\mathcal J},m\in {\mathcal M} \\
& \sum\nolimits_{s \in {{\mathcal S}^I}} {(1+\alpha_g) \omega _{s}(\bar t)}  + W^u(\bm r_m)  \le W, m \in {\mathcal M} \\
&{{\bm G}_{i,sm}} \succeq 0, \forall s \in {\mathcal S}^u, i \in {\mathcal I}_s^u, m \in {\mathcal M}  \\
&{{\rm rank}({{\bm G}_{i,sm}}) \le 1}, \forall s \in {\mathcal S}^u, i \in {\mathcal I}_s^u, m \in {\mathcal M} \\
& b_{i,sm}^u \in \{0,1\}, \forall s \in {\mathcal S}^u, i \in {\mathcal I}_s^u, m \in {\mathcal M} \\
& {\rm constraint} \text{ } (\ref{eq:mMTC_bandwidth}) \text{ } {\rm is} \text{ } {\rm satisfied},
\end{alignat}
\end{subequations}
where $\psi_{sm}$ is the Lagrangian multiplier, $\mu$ is a penalty coefficient, ${\bm Z}_j$ is a square matrix with $J \times J$ blocks, and each block in ${\bm Z}_j$ is a $K \times K$ matrix. In ${\bm Z}_j$, the block in the $j$-th row and {the} $j$-th column is a $K \times K$ identity matrix, and all other blocks are zero matrices.

(\ref{eq:original_problem}) is now reduced to a deterministic single-timescale problem (\ref{eq:SAA_admm_problem}). What is more, (\ref{eq:SAA_admm_problem}) can be split into $M$ separate problems that can be optimized in parallel as its objective function is separable.
Thus, the following ADMM-based framework, {which needs to compute (\ref{eq:arg_lagarangian})-(\ref{eq:psi_update}),} can be exploited to mitigate (\ref{eq:SAA_admm_problem}).
\begin{subequations}\label{eq:arg_lagarangian}
\begin{alignat}{2}
& \left\{ {{\omega _{sm}^{(k + 1)},b_{i,sm}^{u(k+1)}},{\bm G_{i,sm}^{(k + 1)}}} \right\} = \nonumber \\
& \mathop {{\rm{argmin}}}\nolimits_{\left\{ \omega _{sm},b_{i,sm}^{u},{\bm G_{i,sm}}\right\}} \overline {\mathcal L} (\omega _{sm},{{\bm G}_{i,sm}})  \\
& {\rm s.t. \text{ }}  {\rm for} \text{ } {\rm the} \text{ }  m{\rm -th} \text{ } {\rm sample},  (\ref{eq:SAA_admm_problem}b)-(\ref{eq:SAA_admm_problem}g) \text{ } \rm{are} \text{ } {\rm satisfied} \allowdisplaybreaks[4] \\
& \quad {\rm for} \text{ } {\rm the} \text{ }  m{\rm -th} \text{ } {\rm sample}, \text{ } \omega_{sm} \ge a, \text{ } \forall s\in {\mathcal S}^I
\end{alignat}
\end{subequations}
\begin{equation}\label{eq:omega_update}
\omega _s^{(k + 1)}(\bar t) = \sum\nolimits_{m = 1}^M {( {\omega _{sm}^{(k + 1)} + \psi _{sm}^{(k)}/{\mu}} )}/{M}, \text{ } \forall s\in {\mathcal S}^I
\end{equation}
\begin{equation}\label{eq:psi_update}
    \psi _{sm}^{(k + 1)} = \psi _{sm}^{(k)} + \mu \left( {\omega _{sm}^{(k + 1)} - \omega _s^{(k + 1)}(\bar t)} \right), \text{ } \forall s\in {\mathcal S}^I
\end{equation}
where the augmented partial Lagrangian function
\begin{equation}\label{eq:average_Lagrangian_func}
\begin{array}{l}
    \bar{\mathcal{L}}(\omega _{sm},\bm {\bm G}_{i,sm}) =  {- \frac{{U_m^{I(k)}}}{M} - \frac{{\tilde \rho U_m^{u(k)}}}{M} +} \\
    { \sum\limits_{s \in {{\mathcal S}^I}} {\left[ {\psi _{sm}^{(k)}\left( {\omega _{sm} - \omega _s^{(k)}(\bar t)} \right) + \frac{\mu }{2}{{\left\| {\omega _{sm} - \omega _s^{(k)}(\bar t)} \right\|}_2^2}} \right]} }.
\end{array}
\end{equation}

This ADMM-based framework can be executed on multiple processors{/virtual machines}. Each processor is responsible for optimizing (\ref{eq:arg_lagarangian}) and calculating (\ref{eq:psi_update}) with a global value as an input. (\ref{eq:omega_update}) is centrally updated in such a way that local variables converge to the global value, which is the solution of (\ref{eq:SAA_admm_problem}).
Unfortunately, (\ref{eq:arg_lagarangian}) is a mixed-integer non-convex optimization problem as there are zero-one variables, continuous variables and non-convex constraints in (\ref{eq:arg_lagarangian}). As a result, the optimization of (\ref{eq:arg_lagarangian}) is quite difficult. We next discuss how to handle this hard problem.

\subsection{Alternative optimization}
In this subsection, we {explore a novel} alternative optimization scheme to handle the mixed-integer non-convex optimization problem. Specifically, we first assume that continuous variables are known and attempt to mitigate a zero-one optimization problem. Given the zero-one variables, we then try to optimize a non-convex optimization problem. The process is alternatively conducted until convergence.

\subsubsection{URLLC device associations}
Given continuous variables $\{{\bm G}_{i,sm}^{(k)}, \omega_{sm}^{(k)}\}$ at the $k$-th iteration, the association problem of URLLC devices in URLLC slices can take the following form
\begin{subequations}\label{eq:arg_lagarangian_b_isu}
\begin{alignat}{2}
& \{ b_{i,sm}^{u(k+1)} \} = \mathop {{\rm{argmin}}}\nolimits_{\{b_{i,sm}^u\}} - {{\tilde \rho U_m^{u(k)}}}/{M}  \\
& {\rm s.t. \text{ }}  {\rm for} \text{ } m, \text{ } (\ref{eq:SAA_admm_problem}c), (\ref{eq:SAA_admm_problem}d),(\ref{eq:SAA_admm_problem}g) \text{ } \rm{are} \text{ } satisfied.
\end{alignat}
\end{subequations}

This problem is non-linear and hard to be handled. In theory, an exhaustive algorithm can obtain the optimal solution of (\ref{eq:arg_lagarangian_b_isu}). The computation complexity of this algorithm is $O(2^{N^u})$ that may be impractical in implementation.
Therefore, a greedy scheme of the computational complexity $O(N^u)$, which is summarized as the following, is proposed to obtain $\{b_{i,sm}^{u(k+1)}\}$. 

\begin{enumerate}[a)]
    \item initialize two device sets, i.e., candidate device set ${\mathcal{I}}^{u-} = {\mathcal I}^u$, association device set ${\mathcal I}^{u+} = \emptyset$.
    \item select the device that maximizes ${{\tilde \rho U_m^{u(k)}}}/{M}$ from ${\mathcal{I}}^{u-}$, remove it from ${\mathcal{I}}^{u-}$, and add it to ${\mathcal{I}}^{u+}$. Given ${\mathcal{I}}^{u+}$, check the feasibility of (\ref{eq:arg_lagarangian_b_isu}). If (\ref{eq:arg_lagarangian_b_isu}) is feasible, then accept the device; otherwise, remove the device from ${\mathcal{I}}^{u+}$. Continue till ${\mathcal{I}}^{u-} = \emptyset$.
\end{enumerate}

\subsubsection{{Non-convex optimization}}
Given the obtained $b_{i,sm}^{u(k+1)}$, (\ref{eq:SAA_admm_problem}) will be reduced to the following {optimization} problem.
\begin{subequations}\label{eq:bandwidth_beamforming_problem}
\begin{alignat}{2}
& \left\{
{\omega _{sm}^{(k + 1)}},
{\bm G_{i,sm}^{(k + 1)}}
 \right\} = \mathop {{\rm{argmin}}}\nolimits_{\{\omega _{sm},{\bm G_{i,sm}}\}} \overline {\mathcal L} (\omega _{sm},{{\bm G}_{i,sm}})  \\
& {\rm s.t. \text{ }} {\rm for} \text{ } m,  (\ref{eq:SAA_admm_problem}b)-(\ref{eq:SAA_admm_problem}f),(\ref{eq:arg_lagarangian}c) \text{ } \rm{are} \text{ } satisfied.
\end{alignat}
\end{subequations}

In (\ref{eq:bandwidth_beamforming_problem}), the low-rank constraint (\ref{eq:SAA_admm_problem}f) is non-convex, and its objective function is not convex and even not quasi-convex w.r.t. ${\omega_{sm}}$, the tackling of which is quite tricky.
To tackle the non-convex low-rank constraint (\ref{eq:SAA_admm_problem}e), we resort to the SDR technique. The primary procedures of SDR are i) directly drop the low-rank constraint; ii) solve the optimization problem without the low-rank constraint to obtain the solution; iii) if the obtained solution is not rank-one, then some manipulations such as randomization/scale \cite{ma2010semidefinite} are needed to perform on it to impose the low-rank constraint; otherwise, its principal component is the optimal solution to (\ref{eq:bandwidth_beamforming_problem}).

For the tricky objective function, we are reminded of the art of dealing with a non-convex function, i.e., study the structure of the function if it is non-convex. A crucial observation is that $P_{sm}$ is quasi-concave w.r.t. $\omega_{sm}$ although the objective function is not quasi-convex w.r.t. $\omega_{sm}$. Therefore, we resort to the Taylor expansion to approximate the tricky objection function.

The following analysis is based on two facts \textbf{Fact 1:} the value of the objective function of (\ref{eq:bandwidth_beamforming_problem}) is mainly determined by that of ${\tilde P_{m}^{(k)}}$ (or $U_m^{I(k)}$); \textbf{Fact 2:} the solution $\omega_{sm}$ maximizing ${\tilde P_{m}^{(k)}}$ must locate in the range of $[\hat \omega_{sm}^{lb}, S_{sm}^{\star}]$ $\forall s,m$, as shown in Fig. \ref{fig:fig_quasi_convex_structure}, where $\hat \omega_{sm}^{lb} = \max \{\omega_{sm}^{lb}, a\}$ with $\omega_{sm}^{lb}$ denoting the lower bound of $\omega_{sm}$ satisfying the constraint (\ref{eq:SAA_admm_problem}b), $S_{sm}^{\star}$ is the $\omega_{sm}$ maximizing $P_{sm}$, and the notation $P_{sm}|_{\omega_{sm}}$ is utilized to explicitly indicate that $P_{sm}$ is a function of $\omega_{sm}$.

\begin{figure}[!t]
\centering
\subfigure[Curve of $P_{sm}$.]{\includegraphics[width=2.4 in]{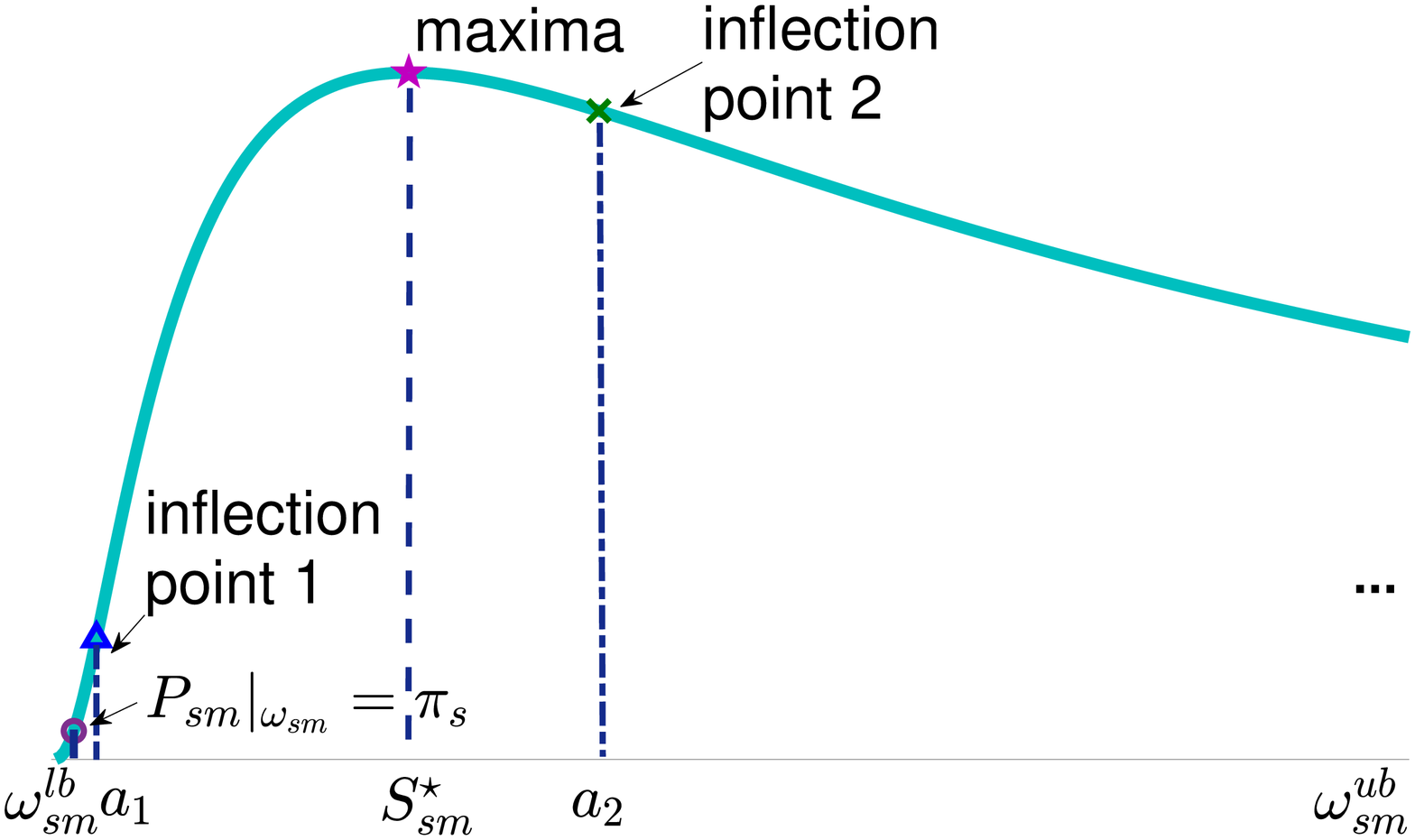}%
\label{fig:fig_curve_structure}}
\hspace{0.05\linewidth}
\subfigure[$2^{\rm nd}$ Taylor expansion.]{\includegraphics[width=2.4 in]{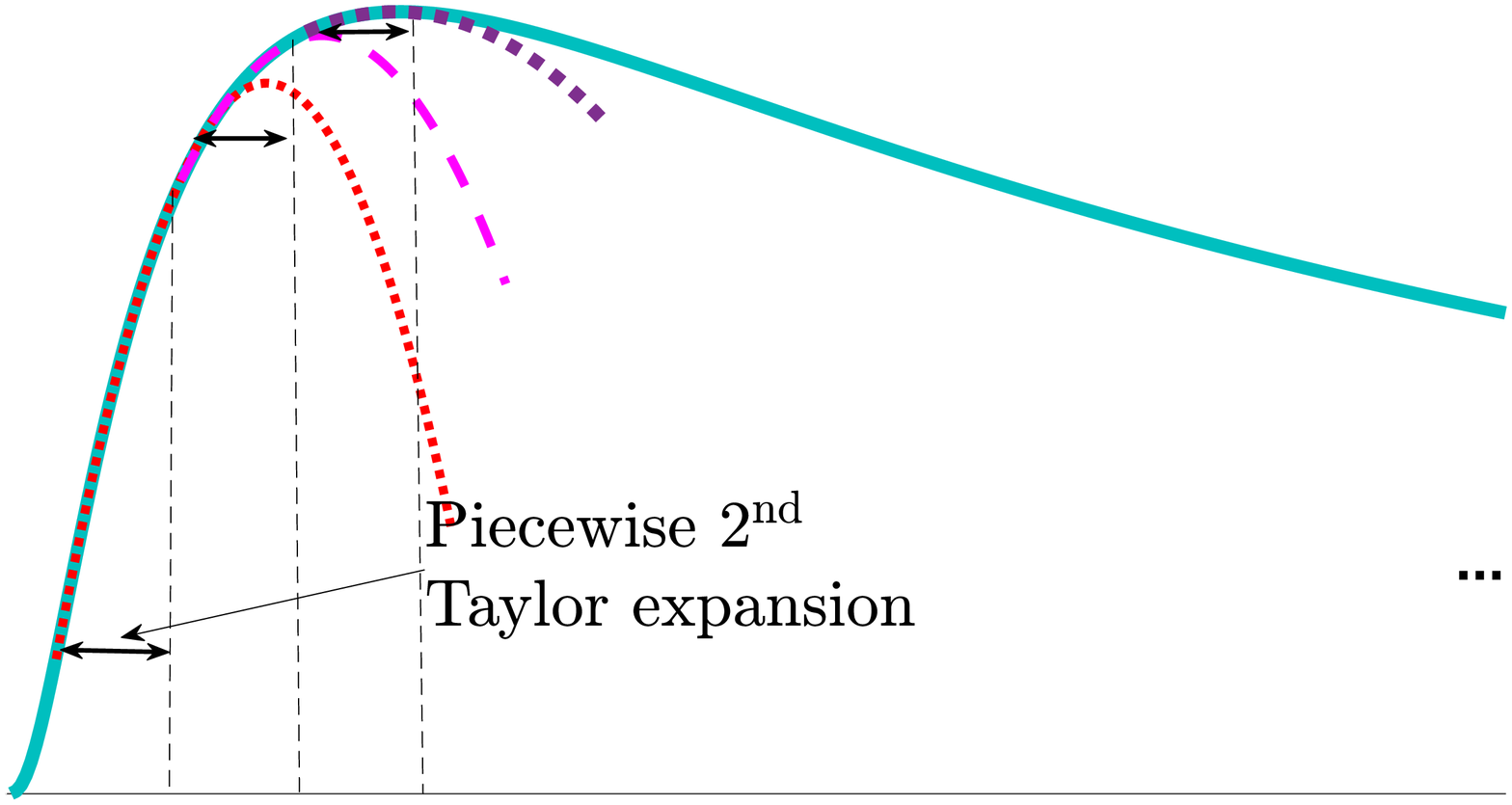}%
\label{fig_Taylor_expansion}}
\caption{Curve of $P_{sm}$ and its $2^{\rm nd}$ Taylor expansion.}
\label{fig:fig_quasi_convex_structure}
\end{figure}

Fact 1 holds because the linear terms w.r.t. $\omega_{sm}$ will donate little to the objective function as the consensus constraint is active. Besides, the quadratic terms pull local values towards the consensus; thus, they will also donate little to the objective function.
Fact 2 holds because the total bandwidth is limited and shared. For example, given a value $\omega_{sm, 2} \in [S_{sm}^{\star}+\delta_{\omega}, W]$ with $\delta_{\omega}$ being a small positive constant, there must exist a value $\omega_{sm,1} \in [\hat \omega_{sm}^{lb}, S_{sm}^{\star}]$ such that $P_{sm}|_{\omega_{sm,1}} = P_{sm}|_{\omega_{sm,2}}$. Thus, a small $\omega_{sm}$ will be preferred as it indicates that more bandwidth can be allocated to URLLC slices to further improve the objective function.

For all $s \in \mathcal{S}^I$, it can be proved that $P_{sm}$ is concave in the interval $(a_1, a_2]$ by evaluating the second-order derivative of $P_{sm}$.
Thus, we can use the $2^{\rm nd}$ Taylor expansion to approximate $P_{sm}$ in this interval.
Considering that $P_{sm}$ is convex in the interval $[\hat \omega_{sm}^{lb}, a_1]$, the $1^{\rm st}$ Taylor expansion is always used to obtain the lower bound of $P_{sm}$.
However, this interval is usually rather narrow, and the value of $P_{sm}$ in this interval is much lower than that in the interval $(a_1, a_2]$. What is more, the error bound of the $1^{\rm st}$ Taylor expansion is greater than that of the $2^{\rm nd}$ expansion. Therefore, we explore the $2^{\rm nd}$ Taylor expansion to approximate $P_{sm}$ in the interval $[\hat \omega_{sm}^{lb}, a_2]$. Fig. \ref{fig_Taylor_expansion} shows an example of the $2^{\rm nd}$ Taylor expansion of $P_{sm}$.
Given a local point $\bm \omega_{m}^{(k,q)}$ at the $q$-th iteration, the Taylor expansion of $-\tilde P_{m}^{(k)}$ at the local point can be given by
\begin{equation}\label{eq:P_sm_Taylor_expansion}
\begin{array}{*{20}{l}}
{ - {{\tilde P}_{m}^{(k)}} \approx  - \tilde P_{m}^{(k,q)} - \nabla \tilde P_m^{(k,q)}{{({{\bm \omega} _m} - {\bm \omega} _m^{(k,q)})}^{\rm T}} - } \\
{\frac{1}{2}({{\bm \omega} _m} - {\bm \omega} _m^{(k,q)})H({\bm \omega} _{m}^{(k,q)}){{({\bm \omega _m} - {\bm \omega} _m^{(k,q)})}^{\rm T}} - R_2(\bm \omega_{m})},
\end{array}
\end{equation}
where ${\bm \omega}_m = [\omega_{1m}, \ldots, \omega_{|{\mathcal S}^I|m}]$, $\nabla \tilde P_m^{(k,q)}$ is the gradient of $\tilde P_m^{(k)}$ over ${\bm \omega_m}$ at the local point ${\bm \omega_m^{(k,q)}}$ with
\begin{equation}\label{eq:first_order_Taylor_expansion}
\begin{array}{l}
\frac{{\partial P_{sm}^{(k)}}}{{\partial \omega _{sm}^{(k,q)}}} = \frac{{{\lambda _s^I}(1 + {\varpi _s}{\rho _o}){e^{ - {\varpi _s}{\sigma ^2}}}}}{{\sum\nolimits_{s \in {{\mathcal S}^I}} {{\lambda _s^I}} }} \\
\qquad \left[ {\frac{{3.5{y_{sm}}{z_s}\omega _{sm}^{2.5(k,q)}}}{{{{({y_{sm}}{z_s} + \omega _{sm}^{(k,q)})}^{4.5}}}} - \frac{{3.5{y_{sm}}\omega _{sm}^{2.5(k,q)}}}{{{{({y_{sm}} + \omega _{sm}^{(k,q)})}^{4.5}}}}} \right],
\end{array}
\end{equation}
and $H({\bm \omega}_m^{(k,q)})$ is a Hessian matrix with
\begin{equation}\label{eq:second_order_Taylor_expansion}
    \begin{array}{*{20}{l}}
\frac{{{\partial ^2}P_{sm}^{(k)}}}{{\partial \omega _{sm}^{2(k,q)}}} = \frac{{{\lambda _s^I}(1 + {\varpi _s}{\rho _o}){e^{ - {\varpi _s}{\sigma ^2}}}}}{{\sum\nolimits_{s \in {{\mathcal S}^I}} {{\lambda _s^I}} }} \times \\
\quad \left[ {\frac{{15.75y_{sm}^2z_s^2\omega _{sm}^{1.5(k,q)}}}{{{{({y_{sm}}{z_s} + \omega _{sm}^{(k,q)})}^{5.5}}}} - \frac{{7{y_{sm}}{z_s}\omega _{sm}^{1.5(k,q)}}}{{{{({y_{sm}}{z_s} + \omega _{sm}^{(k,q)})}^{4.5}}}}} \right] + \\
\quad {\frac{{{\lambda _s^I}(1 + {\varpi _s}{\rho _o}){e^{ - {\varpi _s}{\sigma ^2}}}}}{{\sum\nolimits_{s \in {{\mathcal S}^I}} {{\lambda _s^I}} }}\left[ {\frac{{7{y_{sm}}\omega _{sm}^{1.5(k,q)}}}{{{{({y_{sm}} + \omega _{sm}^{(k,q)})}^{4.5}}}} - \frac{{15.75y_{sm}^2\omega _{sm}^{1.5(k,q)}}}{{{{({y_{sm}} + \omega _{sm}^{(k,q)})}^{5.5}}}}} \right]},
\end{array}
\end{equation}
\begin{equation}\label{eq:second_order_intersection_term}
  \frac{{{\partial ^2}P_{sm}^{(k)}}}{{\partial \omega _{sm}^{(k,q)}\partial \omega _{s'm}^{(k,q)}}} = 0, \forall s \ne s',
\end{equation}
$y_{sm} = {{a{P_{nr,sm}}{P_{ne,sm}}{\lambda_s^I}}}/({{3.5{\lambda _R}}})$, ${z_s} = {{\theta_s^{th}}}/({{ 1 + \theta_s^{th} }})$. Besides, we write $\omega_{sm}^{2.5(k,q)}$ rather than ${( {\omega _{sm}^{(k,q)}} )^{2.5}}$ for lightening the notation.

\begin{lemma}\label{lemma:error_bound}
Let the function $\tilde P_{m}^{(k)}: \mathbb{R}^{|{\mathcal S}^I|} \to \mathbb{R}$ be three times differentiable in a given interval $[\hat \omega_{sm}^{lb}, S_{sm}^{\star}]$ for all $s \in {\mathcal{S}^I}$, then the error bound of $2^{\rm nd}$ degree Taylor expansion of $\tilde P_{m}^{(k)}$ at the local point $\bm \omega_{m}^{(k,q)}$ with $\omega_{sm}^{(k,q)} \in [\hat \omega_{sm}^{lb}, S_{sm}^{\star}]$ is given by
\begin{equation}\label{eq:error_bound_lemma}
\begin{array}{l}
    {R_2}({\bm \omega _m}) = \frac{1}{{3!}}{\left[ {\sum\nolimits_{s \in {{\mathcal S}^I}} {\left( {{\omega _{sm}} - \omega _{sm}^{(k,q)}} \right)\frac{\partial }{{\partial \omega _{sm}^{(k,q)}}}} } \right]^3} \\
    \qquad \max \left\{ {\tilde P_m^{(k)}{|_{\hat {\bm \omega} _m^{lb}}},\tilde P_m^{(k)}{|_{\bm S_m^ \star }}} \right\},
\end{array}
\end{equation}
where $\hat {\bm \omega}_{m}^{lb} = [\hat {\omega}_{1m}^{lb}, \ldots, \hat \omega_{|\mathcal{S}^I|m}^{lb}]$ and $\bm S_{m}^{\star} = [S_{1m}^{\star}, \ldots, S_{|\mathcal{S}^I|m}^{\star}]$.
\end{lemma}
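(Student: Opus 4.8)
\textbf{Proof proposal for Lemma~\ref{lemma:error_bound}.}

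The plan is to invoke the standard multivariate Taylor theorem with the Lagrange form of the remainder, and then to bound the third-order remainder term by exploiting the monotonicity/structure of $\tilde P_m^{(k)}$ on the interval $[\hat{\bm\omega}_m^{lb}, \bm S_m^\star]$. First I would recall that, since $\tilde P_m^{(k)}$ is a nonnegative weighted sum (weights $\lambda_s^I/\sum_{s}\lambda_s^I$) of the single-slice probabilities $P_{sm}$, and since by (\ref{eq:second_order_intersection_term}) all mixed partials vanish (the objective is separable across slices in $\bm\omega_m$), the Taylor expansion decouples into a sum over $s\in{\mathcal S}^I$ of one-dimensional expansions of $P_{sm}$ in $\omega_{sm}$. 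This reduces the problem to: for each $s$, expand $P_{sm}$ to second order about $\omega_{sm}^{(k,q)}$ and control the cubic remainder $\frac{1}{3!}(\omega_{sm}-\omega_{sm}^{(k,q)})^3 \partial^3_{\omega_{sm}} P_{sm}(\zeta_{sm})$ for some $\zeta_{sm}$ between $\omega_{sm}^{(k,q)}$ and $\omega_{sm}$, both lying in $[\hat\omega_{sm}^{lb}, S_{sm}^\star]$.

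Next I would assemble the remainder. Because the mixed third partials also vanish wherever the mixed second partials do (the function is a sum of functions each depending on a single $\omega_{sm}$), the operator $\bigl[\sum_{s}(\omega_{sm}-\omega_{sm}^{(k,q)})\partial/\partial\omega_{sm}^{(k,q)}\bigr]^3$ applied to $\tilde P_m^{(k)}$ collapses to $\sum_s (\omega_{sm}-\omega_{sm}^{(k,q)})^3 \partial^3_{\omega_{sm}}\tilde P_m^{(k)}$, matching the compact notation in (\ref{eq:error_bound_lemma}). The remaining task is to replace the unknown intermediate point $\zeta_{sm}$ by an explicit, computable quantity: I would argue that on $[\hat\omega_{sm}^{lb}, S_{sm}^\star]$ the third derivative $\partial^3_{\omega_{sm}}\tilde P_m^{(k)}$ does not change sign (this follows by differentiating the closed-form expressions (\ref{eq:first_order_Taylor_expansion})--(\ref{eq:second_order_Taylor_expansion}) once more and checking the sign on the concavity interval $(a_1,a_2]$ identified earlier, together with the convex sub-interval $[\hat\omega_{sm}^{lb},a_1]$), so that the cubic form $\frac{1}{3!}[\cdot]^3 \tilde P_m^{(k)}$ evaluated at any interior point is dominated in magnitude by its value at one of the two endpoints $\hat{\bm\omega}_m^{lb}$ or $\bm S_m^\star$. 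Taking the maximum over these two endpoint evaluations yields precisely (\ref{eq:error_bound_lemma}).

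The main obstacle I anticipate is the sign/monotonicity analysis of the third derivative: the closed-form (\ref{eq:second_order_Taylor_expansion}) is already a difference of two rational expressions of the form (power)$\times$(binomial to a negative power), and differentiating once more produces an even more unwieldy algebraic expression. To handle this I would not attempt a brute-force expansion; instead I would use the facts established in the surrounding text — namely that $P_{sm}$ is convex on $[\hat\omega_{sm}^{lb},a_1]$ and concave on $(a_1,a_2]$, so that $\partial^2_{\omega_{sm}}P_{sm}$ is monotone decreasing across the transition and hence $\partial^3_{\omega_{sm}}P_{sm}$ retains a single sign on each piece — and then note that the interval of interest $[\hat\omega_{sm}^{lb},S_{sm}^\star]\subseteq[\hat\omega_{sm}^{lb},a_2]$ by Fact~2, so the cubic remainder is a monotone function of $\omega_{sm}$ on this interval and is therefore extremized at an endpoint. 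Finally I would note that the $\omega_s(\bar t)$-linear and quadratic penalty terms in $\bar{\mathcal L}$ are affine/quadratic and contribute no error to a second-order expansion, so bounding the remainder of $-\tilde P_m^{(k)}$ suffices to bound the remainder of the whole objective, completing the argument.
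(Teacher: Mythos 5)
Your proposal is correct and follows essentially the same route as the paper's proof: express the error as the third-order Taylor term (the paper phrases it as the difference between the $3^{\rm rd}$- and $2^{\rm nd}$-degree expansions, you as the Lagrange remainder), reduce to pure third partials via the separability in (\ref{eq:second_order_intersection_term}), and bound the unknown intermediate evaluation by the maximum over the two endpoints $\hat{\bm \omega}_m^{lb}$ and $\bm S_m^{\star}$. If anything, your sign-constancy argument for $\partial^3_{\omega_{sm}}P_{sm}$ is more careful than the paper's assertion that the extremum "will generally occur at one of the endpoints."
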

\begin{proof}
Please refer to Appendix C.
\end{proof}

After conducting the $2^{\rm nd}$ Taylor approximation, the objective function becomes a convex function.
Although the constraint (\ref{eq:SAA_admm_problem}b) is $P_{sm}$ $\forall s,m$ related, we need not to conduct the Taylor approximation on (\ref{eq:SAA_admm_problem}b) as $P_{sm}$ is quasi-concave and unimodal. In fact, the probability constraint (\ref{eq:SAA_admm_problem}b) and (\ref{eq:arg_lagarangian}c) are equivalent to the following inequality
\begin{equation}\label{eq:variable_range}
    \hat \omega_{sm}^{lb} \le \omega_{sm} \le \omega_{sm}^{ub},
\end{equation}
where $\omega_{sm}^{ub} \le W$ represents the upper bound of $\omega_{sm}$ satisfying (\ref{eq:SAA_admm_problem}b).

Next, a low-complexity bisection-search-based scheme, the main procedures of which are described below, is developed to obtain $\omega_{sm}^{lb}$, $S_{sm}^{\star}$, and $\omega_{sm}^{ub}$: a) let the function $Q_{sm} = P_{sm} - \pi_s$. Perform the bisection search method \cite{yang2018three} on $Q_{sm} = 0$ to obtain $\omega_{sm}^{lb}$ and $\omega_{sm}^{ub}$ that are the two zero points of $Q_{sm}$; b) with the obtained $\omega_{sm}^{lb}$ and $\omega_{sm}^{ub}$, find the maximum value $S_{sm}^{\star}$ of $P_{sm}$ using the bisection search method again.

According to the above analysis, at the $q$-th iteration, we can rewrite (\ref{eq:bandwidth_beamforming_problem}) as
\begin{subequations}\label{eq:SCA_bandwidth_beamforming_problem}
\begin{alignat}{2}
& \left\{
{\omega _{sm}^{(k + 1,q+1)}},{\bm G_{i,sm}^{(k + 1,q+1)}}
 \right\} = \nonumber \\
& \qquad \mathop {{\rm{argmin}}}\nolimits_{\{\omega _{sm},{\bm G_{i,sm}}\}} \bar {\mathcal L}^{(q)} (\omega _{sm},{\bm G_{i,sm}})  \\
& {\rm s.t. \text{ }}  {\rm for} \text{ } m, (\ref{eq:SAA_admm_problem}c)-(\ref{eq:SAA_admm_problem}e),(\ref{eq:variable_range}) \text{ } \rm{are} \text{ } satisfied,
\end{alignat}
\end{subequations}
where $\bar {\mathcal{L}}^{(q)}(\omega _{sm},\bm G_{i,sm}) =  -{ \frac{1}{M}{\tilde P_{m}^{(k)}} - \frac{{\tilde \rho U_m^{u(k)}}}{M} +}$ ${ \sum\limits_{s \in {{\mathcal S}^I}} {\left[ {\psi _{sm}^{(k,q)}( {\omega _{sm} - \omega _s^{(k,q)}(\bar t)} ) + \frac{\mu }{2}{{\| {\omega _{sm} - \omega _s^{(k,q)}(\bar t)} \|}_2^2}} \right]} }$.

In (\ref{eq:SCA_bandwidth_beamforming_problem}), the objective function is convex, (\ref{eq:SAA_admm_problem}c) is affine, and the constraint (\ref{eq:SAA_admm_problem}d) can be proved to be convex w.r.t. both $\omega_{sm}$ and $\bm G_{i,sm}$ \cite{How2019yang}. Therefore, (\ref{eq:SCA_bandwidth_beamforming_problem}) is a convex problem that can be effectively mitigated by some standard convex optimization tools such as CVX and MOSEK. 

Then, we can summarize the main steps of mitigating the problem (\ref{eq:SAA_admm_problem}) in Algorithm \ref{alg1}.
\begin{algorithm}
\caption{ADMM-based bandwidth allocation algorithm}
\label{alg1}
\begin{algorithmic}[1]
\STATE \textbf{Initialization:} Randomly initialize $\bm G_{i,s}^{(0,0)}$, $\{\omega_{s}^{(0,0)}\}$, let $k_{\rm max}=250$, $q_{\rm max}=250$, $q = 0$, $k = 0$, and generate channel samples \{$\bm H_{i,sm}$\}.
\REPEAT
\REPEAT
\STATE Given $\bm G_{i,sm}^{(k,q)}$, $\omega_{sm}^{(k,q)}$, call the greedy scheme to obtain $b_{i,sm}^{u(k,q+1)}$.
\STATE Optimize (\ref{eq:SCA_bandwidth_beamforming_problem}) with obtained $b_{i,sm}^{u(k,q+1)}$ to achieve $\bm G_{i,sm}^{(k,q+1)}$ and $\omega_{sm}^{(k,q+1)}$. Update $q = q + 1$.
\UNTIL{Convergence or reach at the maximum iteration times $q_{{\rm max}}$.}
\STATE Let $\omega_{sm}^{(k+1,q+1)} = \omega_{sm}^{(k,q+1)}$, update $\psi _{sm}^{(k + 1)}$, $\omega_{s}^{(k+1)}(\bar t)$ using (\ref{eq:psi_update}), (\ref{eq:omega_update}), and update $k = k + 1$.
\UNTIL {Convergence or reach at the maximum iteration times $k_{\max}$.}
\end{algorithmic}
\end{algorithm}

\begin{lemma}\label{lemma_5}
For all $i\in\mathcal{I}_s^u$, $s \in {\mathcal{S}^u}$, and $m \in {\mathcal{M}}$, the obtained power matrix $\bm G_{i,sm}^{(k,q)}$ by Algorithm 1 at the $(k,q)$-th iteration satisfies the low-rank constraint, i.e., the SDR for the power matrix utilized in Algorithm 1 is tight.
\end{lemma}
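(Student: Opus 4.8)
The plan is to prove that at every iteration of Algorithm~1 the semidefinite program that is actually being solved---namely (\ref{eq:SCA_bandwidth_beamforming_problem}) after the low-rank constraint (\ref{eq:SAA_admm_problem}f) has been dropped, which is exactly what ``applying the SDR technique'' means in subsection~V.B---returns blocks $\bm G_{i,sm}$ that are automatically rank-one, so that no randomization/scaling step is ever invoked. Since (\ref{eq:SCA_bandwidth_beamforming_problem}) is a convex SDP whose feasible region is non-empty whenever the association $\{b_{i,sm}^{u(k,q+1)}\}$ produced by the preceding greedy step is feasible (so that Slater's condition can be assumed), strong duality holds and the KKT system is necessary and sufficient for optimality. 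It therefore suffices to examine an arbitrary KKT point. When $b_{i,sm}^{u}=0$ the corresponding block may be taken as the zero matrix, which trivially meets the rank constraint, so the argument needs to be carried out only for indices with $b_{i,sm}^{u}=1$.

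First I would form the partial Lagrangian of the relaxed (\ref{eq:SCA_bandwidth_beamforming_problem}) in the variables $\{\bm G_{i,sm}\}$, attaching multipliers $\delta_{jm}\ge 0$ to the per-RRH power budgets (\ref{eq:SAA_admm_problem}c), $\zeta_{m}\ge 0$ to the shared-bandwidth budget (\ref{eq:SAA_admm_problem}d), and a dual matrix $\bm Y_{i,sm}\succeq 0$ to the cone constraint (\ref{eq:SAA_admm_problem}e); the box (\ref{eq:variable_range}) involves only $\omega_{sm}$ and does not couple to $\bm G_{i,sm}$. Setting the gradient with respect to $\bm G_{i,sm}$ to zero gives the stationarity relation
\[
  \bm Y_{i,sm}\;=\;\frac{\tilde\rho\,\eta}{M}\,\bm I\;+\;\sum\nolimits_{j\in\mathcal J}\delta_{jm}\,\bm Z_{j}\;-\;\tau_{i,sm}\,\bm H_{i,sm},
\]
in which the first term is inherited from the power-consumption part $-\eta\sum b_{i,sm}^{u}\,{\rm tr}(\bm G_{i,sm})$ of the URLLC utility $U_m^u$ in (\ref{URLLC_long_term_utility}) that sits in the objective, the block-diagonal matrix $\sum_j\delta_{jm}\bm Z_j\succeq 0$ comes from (\ref{eq:SAA_admm_problem}c), and $\tau_{i,sm}\ge 0$ is a scalar produced by the bandwidth budget (\ref{eq:SAA_admm_problem}d). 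The non-negativity and the rank-one structure of the last term follow because $SNR_{i,s}^{u}$ in (\ref{eq:URLLC_SINR}) is linear in $\bm G_{i,sm}$ through ${\rm tr}(\bm H_{i,sm}\bm G_{i,sm})$, while $W^{u}$ in (\ref{eq:URLLC_bandwidth}) is non-decreasing in each channel use $r_{i,s}^{u}$ and $r_{i,s}^{u}$ in (\ref{eq:URLLC_channel_use}) is non-increasing in $SNR_{i,s}^{u}$; hence the gradient of the left-hand side of (\ref{eq:SAA_admm_problem}d) with respect to $\bm G_{i,sm}$ is a non-negative multiple of $\bm H_{i,sm}=\bm h_{i,sm}\bm h_{i,sm}^{\rm H}$.

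Next I would close the argument by a rank count. Because $\eta>0$ and $\tilde\rho>0$, the matrix $\bm M_{i,sm}:=\frac{\tilde\rho\eta}{M}\bm I+\sum_j\delta_{jm}\bm Z_j$ is positive definite, hence of full rank $JK$; writing $\bm Y_{i,sm}=\bm M_{i,sm}-\tau_{i,sm}\bm H_{i,sm}$ as a full-rank matrix perturbed by a rank-one matrix gives ${\rm rank}(\bm Y_{i,sm})\ge JK-1$. Complementary slackness ${\rm tr}(\bm Y_{i,sm}\bm G_{i,sm})=0$ with $\bm Y_{i,sm}\succeq 0$, $\bm G_{i,sm}\succeq 0$ implies $\bm Y_{i,sm}\bm G_{i,sm}=0$, so the range of $\bm G_{i,sm}$ lies in the null space of $\bm Y_{i,sm}$ and ${\rm rank}(\bm G_{i,sm})\le JK-{\rm rank}(\bm Y_{i,sm})\le 1$; since a served device ($b_{i,sm}^{u}=1$) must be allocated non-zero power, $\bm G_{i,sm}\ne 0$ and therefore ${\rm rank}(\bm G_{i,sm})=1$. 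As Algorithm~1 returns an optimal point of (\ref{eq:SCA_bandwidth_beamforming_problem}) at each of its inner and outer iterations, this holds for every $(k,q)$, which is the assertion of the lemma.

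The hard part will be the bookkeeping hidden in the scalar $\tau_{i,sm}$: one has to differentiate the composite map $\bm G_{i,sm}\mapsto SNR_{i,s}^{u}\mapsto C(SNR_{i,s}^{u})\mapsto r_{i,s}^{u}\mapsto W^{u}$ assembled from (\ref{eq:URLLC_SINR}), (\ref{eq:URLLC_channel_use}) and (\ref{eq:URLLC_bandwidth}), confirm that each link is monotone with exactly the sign that makes $\tau_{i,sm}\ge 0$, and verify that the composition is differentiable on the (interior of the) feasible set so that the stationarity equation is legitimate; it is the linearity of $SNR_{i,s}^{u}$ in $\bm G_{i,sm}$ that forces the associated gradient to be a scalar multiple of the single rank-one matrix $\bm H_{i,sm}$ rather than a higher-rank object, and this is the structural fact on which the whole proof rests. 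The remaining ingredients---Slater feasibility inherited from the greedy association step and strict definiteness of $\bm M_{i,sm}$, which needs only $\eta\tilde\rho>0$---are comparatively routine.
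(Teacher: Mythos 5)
Your proposal is correct and its core engine---KKT stationarity giving a dual matrix of the form (full-rank positive definite) minus (nonnegative scalar times the rank-one $\bm H_{i,sm}$), hence of rank at least $JK-1$, followed by complementary slackness to force ${\rm rank}(\bm G_{i,sm})\le 1$---is exactly the paper's argument in Appendix~D. Where you diverge is precisely at the point you flag as ``the hard part.'' You propose to differentiate the composite map $\bm G_{i,sm}\mapsto SNR_{i,s}^u\mapsto r_{i,s}^u\mapsto W^u$ through (\ref{eq:URLLC_SINR}), (\ref{eq:URLLC_channel_use}) and (\ref{eq:URLLC_bandwidth}) and argue by the chain rule and link-by-link monotonicity that the bandwidth constraint contributes $-\tau_{i,sm}\bm H_{i,sm}$ with $\tau_{i,sm}\ge 0$. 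The paper explicitly declares this route ``uneasy'' and instead introduces auxiliary variables $\nu_{i,sm}$ with the linear constraint ${\rm tr}(\bm H_{i,sm}\bm G_{i,sm})/(\phi\sigma_{i,s}^2)\ge\nu_{i,sm}$, rewriting the bandwidth budget purely in terms of $\nu_{i,sm}$; after this reformulation $\bm G_{i,sm}$ enters the constraint set only linearly, so the multiplier $\bar\mu_{i,sm}$ attaches directly to the rank-one matrix $\bm H_{i,sm}$ with no calculus of composite functions required. The two routes buy different things: the paper's epigraph-style reformulation is cleaner and dodges all differentiability and sign bookkeeping (it does, however, owe a separate equivalence argument, which the paper defers to \cite{How2019yang}); your direct route keeps the original problem intact but must verify that each link of the composition is monotone with the right sign and differentiable on the feasible set before the stationarity equation is legitimate. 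Your monotonicity claims are in fact the right ones ($W^u$ nondecreasing in $r_{i,s}^u$, $r_{i,s}^u$ nonincreasing in $SNR_{i,s}^u$, $SNR_{i,s}^u$ linear in $\bm G_{i,sm}$ with gradient $\bm H_{i,sm}/(\phi\sigma_{i,s}^2)$), so the argument closes; but if you want the shorter path, introduce the auxiliary SNR variables as the paper does.
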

\begin{proof}
Please refer to Appendix D.
\end{proof}

{
Besides, the computational complexity of Algorithm \ref{alg1} consists of the complexities of calling a greedy scheme, solving an optimization problem with semidefinite matrices and the aggregation of local variables. The complexity of the greedy scheme is $O(N^u)$. There are $N^u$ matrices of size $JK \times JK$ and $|{\mathcal S}^I|$ one-dimensional variables in the optimization problem. An interior-point method is then exploited to solve the optimization problem with the complexity of $O(N^uJ^2K^2 +|{\mathcal S}^I|)^{3.5}$ at the worst-case \cite{ye1997interior}. The complexity of aggregating local variables is $O(M)$. Therefore, the total computational complexity of Algorithm \ref{alg1} is $O(k_{\rm max}(q_{\rm max}(N^u+(N^uJ^2K^2 +|{\mathcal S}^I|)^{3.5})+M))$ at the worst case. Yet, the actual complexity will be much smaller than the worst case.}

\section{Optimization of beamforming and URLLC device association with system sensed channels}
In section V, we obtain a family of global consensus variables $\{\omega_{s}(\bar t)\}$ with the system generated channel samples. The time-varying actual channels may require the re-optimization of beamformers and {URLLC} device associations at each minislot. According to system sensed channels at each minislot, we next discuss how to calculate beamformers and {URLLC} device associations.

At each minislot $t$, given the global consensus variables $\{\omega_s(\bar t)\}$, the original problem (\ref{eq:original_problem}) will be reduced to the following problem
\begin{subequations}\label{eq:mini_time_scale_transformed_problem}
\begin{alignat}{2}
& \mathop {{\rm{maximize}}}\limits_{\{b_{i,s}^u(t), {\bm G}_{i,s}(t)\}} \text{ } {\tilde \rho} {{U}^u(t)}  \\
& {\rm s.t. \text{ }}  \rm {constraints \text{ } (\ref{eq:RRH_energy}), (\ref{eq:total_bandwidth}),(\ref{eq:original_problem}b) \text{ } are \text{ } satisfied.}
\end{alignat}
\end{subequations}

In (\ref{eq:mini_time_scale_transformed_problem}), the channels are system sensed ones at $t$. According to the convexity analysis in section V, (\ref{eq:mini_time_scale_transformed_problem}) is a mixed-integer non-convex programming problem with positive semidefinite matrices, which is hard to be mitigated. Therefore, the alternative optimization scheme presented in subsection V-B can be leveraged to achieve the solutions $b_{i,s}^u(t)$ and $\bm G_{i,s}(t)$ of (\ref{eq:mini_time_scale_transformed_problem}). Lemma \ref{lemma_5} indicates that the achieved ${\rm rank}(\bm G_{i,s}(t)) \le 1$. Thus, we can obtain the beamformers $\bm g_{i,s}(t)$ by performing the eigendecomposition on $\bm G_{i,s}(t)$.
{Summarily,} over a time slot $\bar t$, the slice resource optimization algorithm designed for the RAN slicing system is presented in Algorithm \ref{alg_algo_bandwidth_beamforming}.
\begin{algorithm}
\caption{slice resource optimization algorithm, SRO}
\label{alg_algo_bandwidth_beamforming}
\begin{algorithmic}[1]
\STATE \textbf{Initialization:} $\{{\bm H}_{i,s}(t)\}$, $\forall i \in {{\mathcal I}^u}$, $s \in {{\mathcal S}^u}$, and let $P_{s}^{1} \in [0,1]$, $\vartheta_{a,s}^{1}=0$, $\forall s \in {{\mathcal S}^I}$.
\STATE Call Algorithm \ref{alg1} to obtain $\{\omega_s(\bar t)\}$ for all $s\in {\mathcal S}^I$.
\FOR{$t = 1 : T$}
\STATE Given $\{\omega_s(\bar t)\}$, mitigate (\ref{eq:mini_time_scale_transformed_problem}) by exploiting the alternative optimization scheme to obtain beamformers $\{{\bm g}_{i,s}(t)\}$ and URLLC device associations $b_{i,s}^u(t)$ for all $i \in {\mathcal I}_s^u$, $s \in {\mathcal S}^u$.
\ENDFOR
\end{algorithmic}
\end{algorithm}

\begin{figure}[!t]
\flushleft
\includegraphics[width=3.4in]{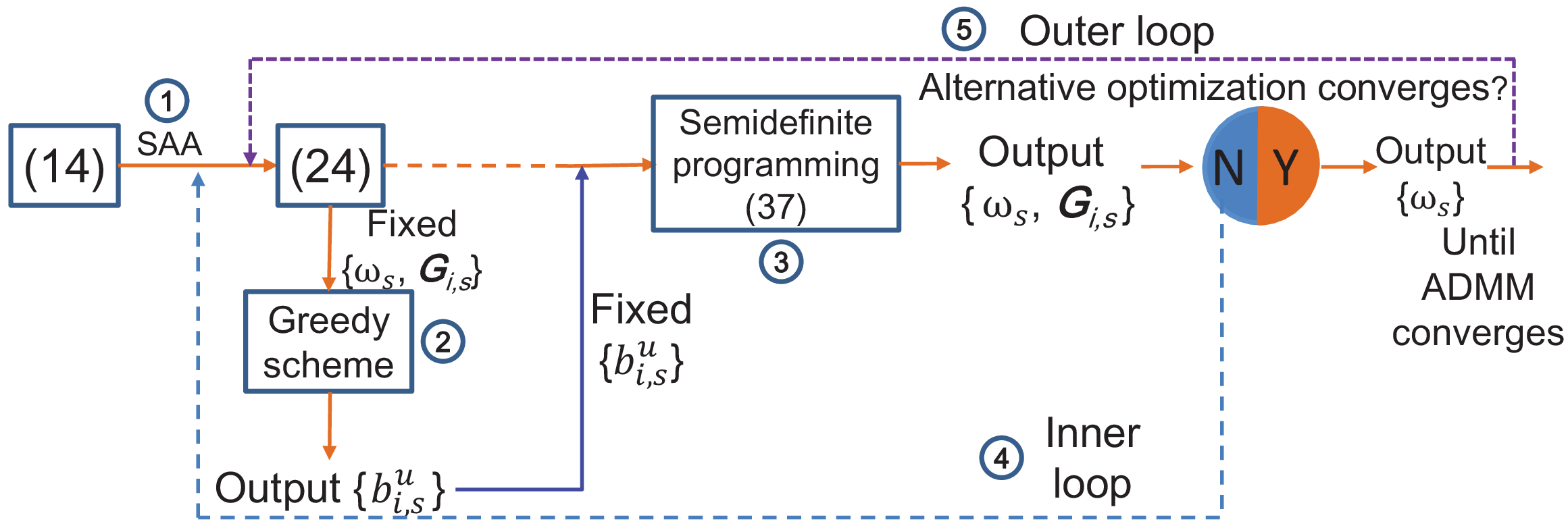}
\caption{{The algorithm logical flow.}}
\label{fig:fig_algorithm_flow}
\end{figure}
{Furthermore, we can depict the logical flow of mitigating (\ref{eq:original_problem}) in Fig. \ref{fig:fig_algorithm_flow}. At the beginning of each time slot $\bar t$, with the system generated channels the RAN-C will follow the flow \textcircled{1} $ \to$ \textcircled{2} $ \to $ \textcircled{3} $\to $ \textcircled{4} $\to$ \textcircled{5} to achieve $\{\omega_s(\bar t)\}$.
The RAN slicing system will allocate $\{\omega_s(\bar t)\}$ bandwidth to mIoT slices, and the remaining system bandwidth is allocated to bursty URLLC slices.
With the achieved $\{\omega_s(\bar t)\}$, the RAN-C acquires sensed channels with which the following flow \textcircled{2} $\to $ \textcircled{3} $\to $ \textcircled{4} will be executed to generate beamformers $\{\bm g_{i,s}(t)\}$ and URLLC device associations $\{b_{i,s}^u (t)\}$.
Next, the RAN slicing system will configure RRHs' transmit beamformers based on $\{\bm g_{i,s}(t)\}$ and establish connections with URLLC devices based on $\{b_{i,s}^u (t)\}$.
}

\section{Simulation results}

\subsection{Comparison algorithms and parameter setting}
{As no existing algorithms can be considered as benchmark algorithms, we design three benchmark algorithms. The effectiveness of the proposed algorithm is verified via comparing it with the benchmark algorithms.} The simulation is also performed to explain the impact of access control schemes on the RAN system performance intuitively. The comparison algorithms are i) {SRO algorithm} that adopts the unrestricted access control scheme; ii) {{SRO-ACB$_{\rm I}$ algorithm}} that utilizes the ACB access control scheme with $P_{ACB} = 0.9$; iii) {SRO-ACB$_{\rm II}$ algorithm} that adopts the ACB access control scheme with $P_{ACB} = 0.5$;
{iv) {Single-sample-based SRO (S$^3$RO) algorithm} that determines the global consensus variable based on a single channel sample instead of the ADMM method.}

The parameter setting is as follows: RRHs and IoT devices are deployed following independent PPPs in a one km$^2$ area. URLLC devices are randomly and uniformly distributed in this area. There are three mIoT slices and two URLLC slices in the RAN slicing system.
For the mIoT slices, set $\vartheta_{w,s}(t) = [1.5, 1.0, 0.5]$, $\pi_s = 0.5$, $\forall s,t$, $\varphi = 4$, $L = 2000$ bits, $\sigma^2 = -90$ dBm, $\rho_o = -90$ dBm, ${\hat E}_{j}^I = 0.03$ mW, $\lambda_R = 3$ RRHs/km$^2$, $\lambda_{s}^I = 18000$ IoT devices/km$^2$, $\forall s$, $a =0.18$ MHz, the queue serving rate $\gamma_s^{th} = a\log_2(1+{\theta_s^{th}})$, $\{\gamma_s^{th}\} = \{5.8, 4.35, 2.9\}$ Kbits/minislot.
For the URLLC slices, the transmit antenna gain at each RRH is set to be $5$ dB, and a log-normal shadowing path-loss model is used to simulate the path-loss between an RRH and a URLLC device with the log-normal shadowing parameter being $10$ dB. A path-loss is computed by $h({\rm dB}) = 128.1 + 37.6\log_{10}d$, where $d$ (in km) is the distance between a device and an RRH. Let $L_{i,s}^u = 160$ bits, $\sigma_{i,s}^2 = -100$ dBm, $\lambda_{s} = \lambda = 0.1$ packets/minislot, $\forall i,s$, $\{I_s^u\} = \{3, 5\}$ devices, and $\{D_s\} = \{1, 2\}$ milliseconds, $E_j = 3$ W, $\forall j$ \cite{tang2019service}.
Other system parameters are shown as follows: $J = 3$, $K = 2$, $\tilde \rho=1$, $\eta =100$, $T = 60$, $W =60$ MHz, $M = 100$, $\mu = 2.9\times10^{-3}$, $\phi =1.5$, $\xi =54$, $\alpha_g=0.05$, $\kappa=5.12 \times 10^{-4}$, $\alpha=10^{-5}$, $\beta=2 \times 10^{-8}$, and $\varsigma =2 \times 10^{-5}$ \cite{How2019yang}.

\subsection{Performance evaluation}
To evaluate the comparison algorithms, the following performance indicators are utilized i) RA success probability $P_s(t)$; ii) expected queue length per IoT device at minislot $t$, $E[Q_s(t)] = \vartheta_{a,s}(t)$; iii) total slice utility $\bar U$ that is the objective function of (\ref{eq:original_problem}).

We first evaluate the convergence mainly determined by that of the ADMM-based framework of the proposed SRO algorithm. We then leverage $\Delta_{\omega} = {\rm{ }}\sum\nolimits_{s \in {{\mathcal S}^I}} {| {\omega _s^{(k + 1)}(\bar t) - \omega _s^{(k)}(\bar t)} |} $ to evaluate the convergence of the SRO algorithm. Fig. \ref{fig:fig_convergence} illustrates the algorithm's convergence. It shows that SRO can converge after several iterations.
\begin{figure}[!t]
\centering
\begin{minipage}[t]{0.45\textwidth}
\centering
\includegraphics[width=2.4in,height=1.2in]{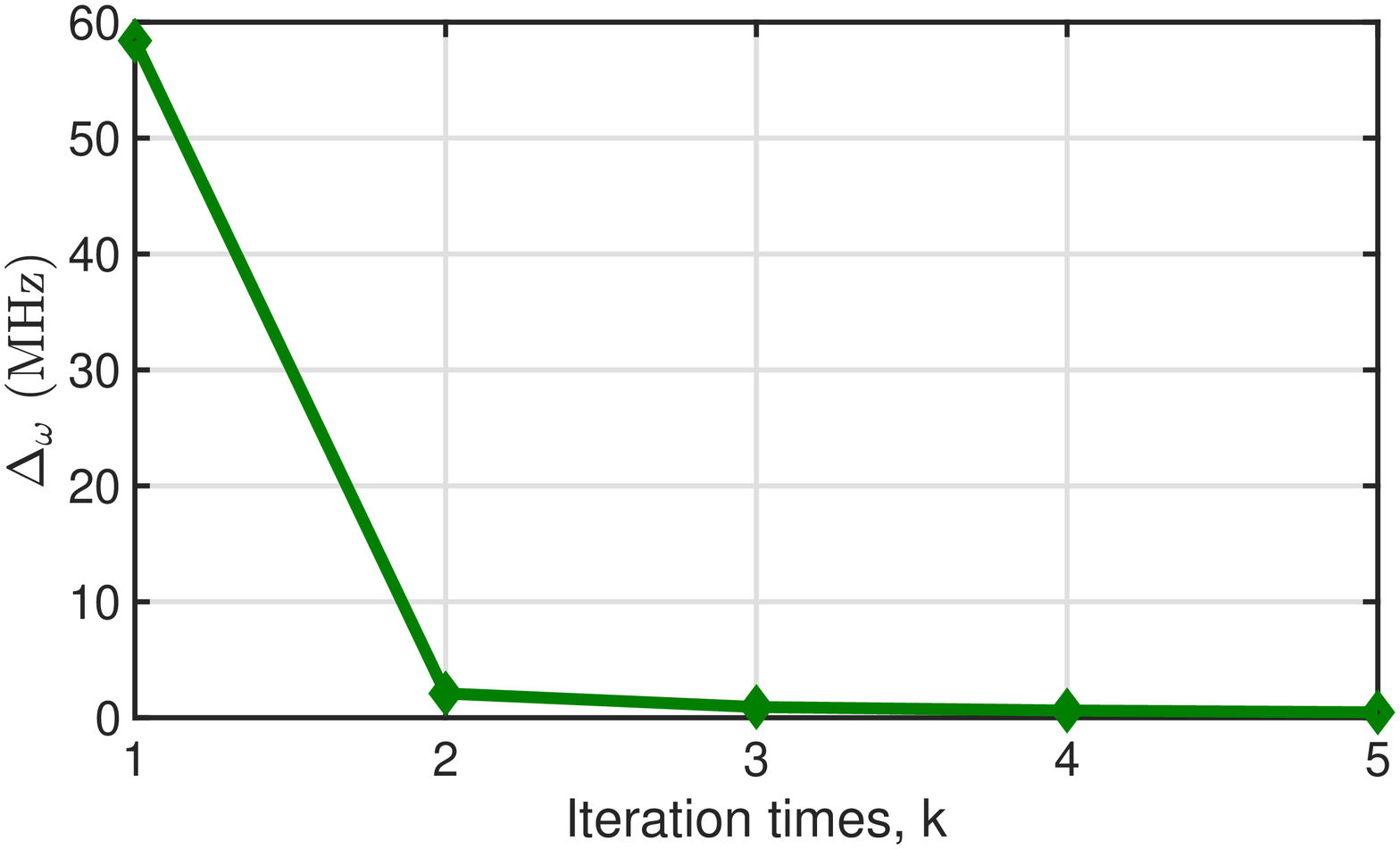}
\caption{The convergence curve of the SRO algorithm.}
\label{fig:fig_convergence}
\end{minipage}
\hspace{0.05\linewidth}
\begin{minipage}[t]{0.45\textwidth}
\centering
\includegraphics[width=2.8in,height=1.6in]{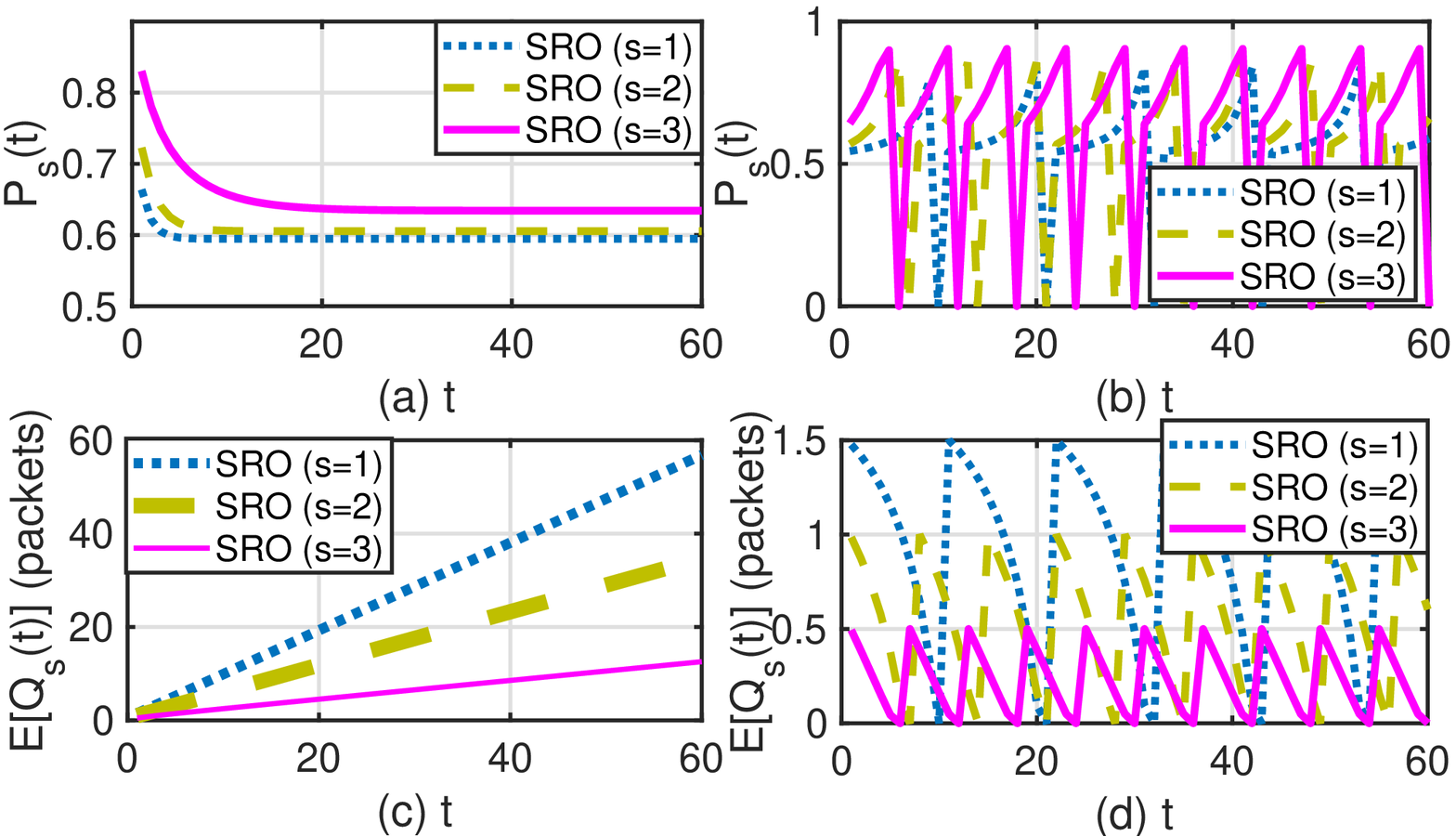}
\caption{Trends of $P_s(t)$ and $E[Q_s(t)]$.}
\label{fig:fig_queueLen_Pst}
\end{minipage}
\end{figure}


We next plot the tendency of the RA success probability $P_s(t)$ and the corresponding expected queue length $E[Q_s(t)]$ during a time slot in Fig. \ref{fig:fig_queueLen_Pst}. Fig. \ref{fig:fig_queueLen_Pst}(a) and \ref{fig:fig_queueLen_Pst}(c) show the tendency of $P_s(t)$ and $E[Q_s(t)]$ in the case of $\{\gamma_s^{th}\} = \{1.8, 1.35, 0.9\}$ Kbits/minislot. Fig. \ref{fig:fig_queueLen_Pst}(b) and \ref{fig:fig_queueLen_Pst}(d) depict the tendency of $P_s(t)$ and $E[Q_s(t)]$ in the case $\{\gamma_s^{th}\} = \{5.8, 4.35, 2.9\}$ Kbits/minislot.

From Fig. \ref{fig:fig_queueLen_Pst}, we obtain the following interesting conclusions: the queue of each IoT device is not stable when the queue serving rate $\gamma_s^{th}$ is small. In this case, the average queue length monotonously increases over $t$. On the contrary, the queue of each IoT device is periodically flushed when a great queue serving rate is configured. The result that the maintained queue by each IoT device can be emptied verifies the correctness of the analysis of the RA process.

\begin{figure}[!t]
\centering
\includegraphics[width=2.7in,height=1.5in]{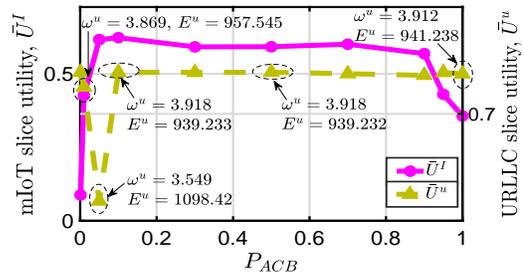}
\caption{{Trends of $\bar U^I$ and $\bar U^u$ vs. $P_{ACB}$.}}
\label{fig:fig_IoT_URLLC_Pacb}
\end{figure}

{As access control schemes have a significant impact on the algorithm performance, we discuss how to select the value of $P_{ACB}$. Fig. \ref{fig:fig_IoT_URLLC_Pacb} depicts the trends of mIoT slice utility $\bar U^I$ and bursty URLLC slice utility $\bar U^u$ w.r.t. $P_{ACB}$ with $P_{ACB} = [0.001, 0.01, 0.05, 0.1, 0.3, 0.5, 0.7, 0.9, 0.95, 1]$, $\lambda_s^I = 19800$ IoT devices/km$^2$, and $\lambda = 1$ packet/minislot.
In this figure, we denote the bandwidth allocated to bursty URLLC slices during a time slot by $\omega^u$ (in MHz) and the sum of RRHs' transmit power for serving URLLC devices during a time slot by $E^u$ (in mW). }

{From this figure, we can observe that: i) the obtained $\bar U^I$ of the proposed algorithm increases with $P_{ACB}$ when $0< P_{ACB} \le 0.1$. This is because more IoT devices have the opportunity to access their corresponding RRHs when the stringent access restriction status is slightly mitigated; ii) when $0.1 < P_{ACB} \le 1$, $\bar U^I$ decreases with $P_{ACB}$. This is because the intra-cell interference is getting worse and more and more IoT devices go into the outage when the access restriction status is further eased; iii) as the system will allocate less bandwidth to bursty URLLC slices and RRHs will consume more transmit power for serving URLLC devices, the obtained $\bar U^u$ is decreased when more IoT devices successfully access the network. However, $\bar U^u$ slightly changes when the value of $P_{ACB}$ becomes greater; iv) the above results indicate that the selection of the value of $P_{ACB}$ should consider the network status (e.g., the interference status).}

Let the IoT device intensity $\bm \lambda^I = [900 n, 900 n, 900 n]$ with $n \in \{6, 8, \ldots, 26\}$.
Under the existence of both mIoT and URLLC slices, we plot trends of the total slice utility $\bar U$ and bursty URLLC slice utility $\bar U^u$ w.r.t. $n$ in Fig. \ref{fig:fig_utility_vs_lamdba_IoT} to understand the impact of the mIoT slices on the performance of all comparison algorithms.
In this figure, with a slight abuse of notation, $B = [b_{11}^u, \ldots, b_{31}^u, b_{12}^u, \ldots, b_{52}^u]$, $\omega^I=[\omega_{SRO}^I, \omega_{ACB_{\rm I}}^I, \omega_{ACB_{\rm II}}^I, \omega_{S^3RO}^I]$ MHz with $\omega_{SRO}^I$, $\omega_{ACB_{\rm I}}^I$, $\omega_{ACB_{\rm II}}^I$, and $\omega_{S^3RO}^I$ representing the bandwidth allocated to mIoT slices by executing SRO, SRO-ACB$_{\rm I}$, SRO-ACB$_{\rm II}$, and S$^3$RO algorithms, respectively. $\bar U^I = [\bar U_{SRO}^I, \bar U_{ACB_{\rm I}}^I, \bar U_{ACB_{\rm II}}^I, \bar U_{S^3RO}^I]$ with $\bar U_{SRO}^I$ denoting the achieved mIoT slice utility of SRO.
\begin{figure}[!t]
\centering
\subfigure[total slice utility vs. $n$.]{\includegraphics[width=2.7in,height=1.5in]{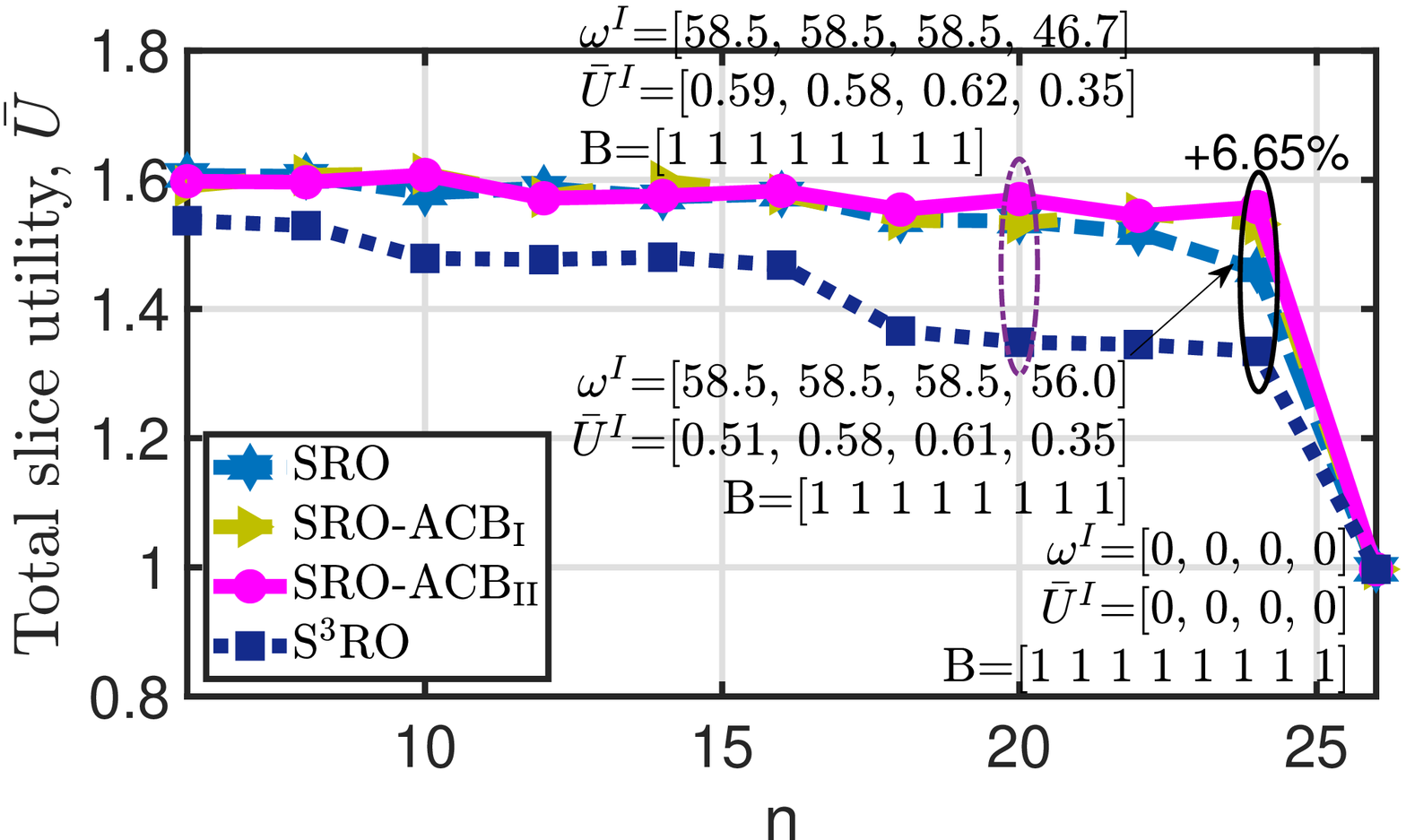}%
\label{fig:fig_total_utility_vs_lamdba_IoT}}
\hspace{0.1\linewidth}
\subfigure[bursty URLLC slice utility vs. $n$.]{\includegraphics[width=2.7in,height=1.5in]{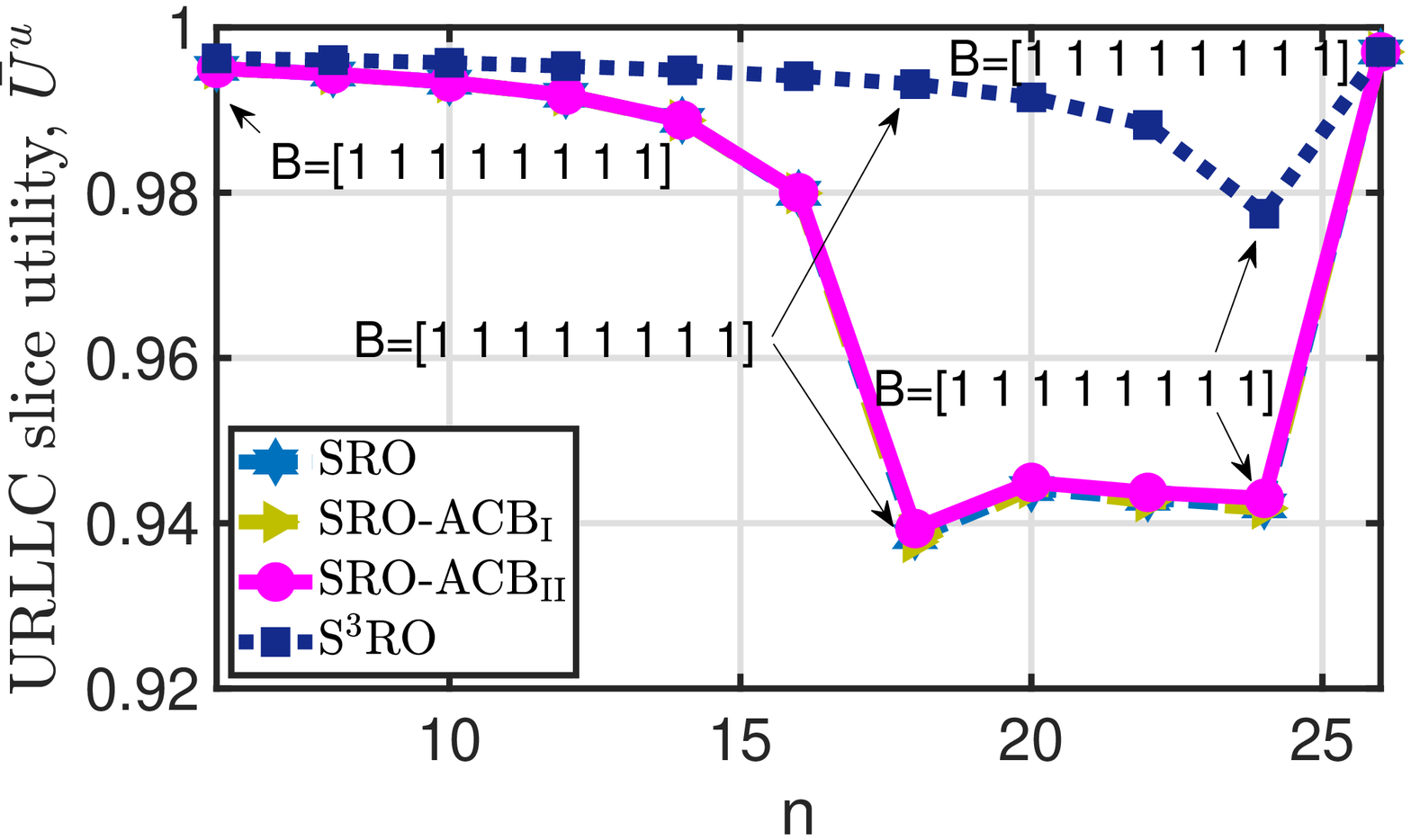}%
\label{fig_vs_Ds}}
\caption{Trends of the achieved total slice utilities and bursty URLLC slice utilities of all algorithms vs. $n$.}
\label{fig:fig_utility_vs_lamdba_IoT}
\end{figure}

The following observations can be obtained from Fig. \ref{fig:fig_utility_vs_lamdba_IoT}:
i) when $n < 16$, all algorithms except for S$^3$RO almost obtain the same $\bar U$, and the obtained utilities are robust to the average number of IoT devices;
ii) when $16 \le n < 26$, the conclusion changes. For the SRO algorithm, its achieved $\bar U$ decreases with an increasing $n$ due to the increasing interference. A great $n$, however, does not cause a significant decrease on the obtained $\bar U$ by SRO-ACB$_{\rm I}$ and SRO-ACB$_{\rm II}$. Thanks to the exploration of an access control scheme, both SRO-ACB$_{\rm I}$ and SRO-ACB$_{\rm II}$ can achieve greater $\bar U$ than SRO. For example, compared with SRO, SRO-ACB$_{\rm II}$ improves $\bar U$ by $6.65\%$ when $n = 24$;
iii) when $n = 26$, which means that the total average number of IoT devices reaches $70,200$ devices, the RAN slicing system fails to create and manage mIoT slices as the QoS requirements of mIoT slices serving such a massive average number of devices cannot be simultaneously satisfied. In this case, all system resources are allocated to URLLC slices, and the maximum bursty URLLC slice utility is obtained;
iv) as mIoT slices and URLLC slices share the system resources, an increasing $n$ results in a decreasing $\bar U^u$;
besides, it is interesting to find that the two access-control-based algorithms may not outperform SRO in terms of obtaining $\bar U^u$. It indicates that URLLC slices do not benefit from access control schemes of mIoT slices when changing $n$;
v) {although the S$^3$RO algorithm can achieve the greatest URLLC slice utility, it obtains the smallest $\bar U$. It indicates that S$^3$RO cannot effectively orchestrate network resources for mIoT and URLLC slices;}
vi) the RAN slicing system can always accommodate the QoS requirements of all URLLC devices.

Next, to understand the impact of URLLC slices on the performance of all comparison algorithms, we plot the trends of $\bar U$ and the mIoT slice utilities obtained by all comparison algorithms w.r.t. URLLC packet arrival rate $\lambda$ with $\lambda = \{0.1,0.5,1.0,\ldots,4.5,5.0\}$ packets per unit time in Fig. \ref{fig:fig_utility_vs_lamdba_URLLC}.
Similarly, the following notations are involved in Fig. \ref{fig:fig_utility_vs_lamdba_URLLC}: $\bar U^u = [\bar U_{SRO}^u, \bar U_{ACB_{\rm I}}^u, \bar U_{ACB_{\rm II}}^u, \bar U_{S^3RO}^u]$ with $\bar U_{SRO}^u$ denoting the URLLC slice utility obtained by running the SRO algorithm.

\begin{figure}[!t]
\centering
\subfigure[total slice utility vs. $\lambda$.]{\includegraphics[width=2.7in,height=1.5in]{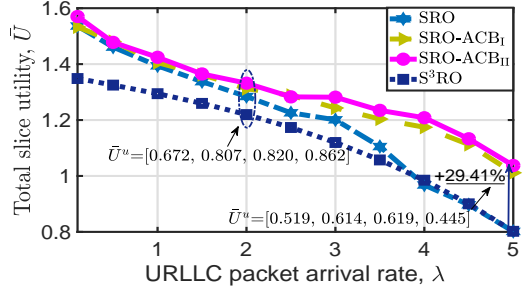}%
\label{fig:fig_total_utility_vs_lamdba_URLLC}}
\hspace{0.1\linewidth}
\subfigure[mIoT slice utility vs. $\lambda$.]{\includegraphics[width=2.7in,height=1.5in]{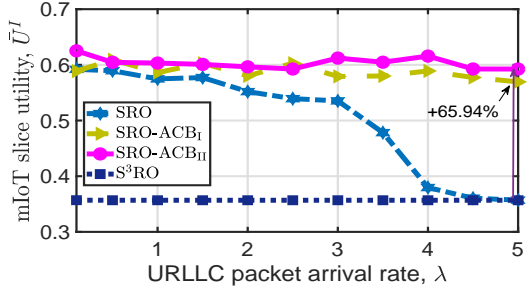}%
\label{fig_vs_Ds}}
\caption{Trends of the achieved total slice utilities and IoT slice utilities of all algorithms vs. $\lambda$.}
\label{fig:fig_utility_vs_lamdba_URLLC}
\end{figure}

From Fig. \ref{fig:fig_utility_vs_lamdba_URLLC}, we can observe that:
i) the obtained utilities $\bar U$ of all algorithms decrease with $\lambda$ mainly due to the decrease of the bursty URLLC slice utility. Two algorithms adopting the access control scheme always achieve greater utilities $\bar U$ than SRO. For example, when $\lambda = 5$, compared with the SRO algorithm, the obtained $\bar U$ of SRO-ACB$_{\rm II}$ is increased by $29.41\%$;
ii) SRO-ACB$_{\rm II}$ may achieve greater $\bar U$ than SRO-ACB$_{\rm I}$ as a greater $\bar U^I$ is obtained by reducing more interfering IoT devices;
iii) the obtained mIoT slice utilities $\bar U^I$ of SRO-ACB$_{\rm I}$, SRO-ACB$_{\rm II}$, and S$^3$RO are robust to the URLLC packet arrival rate. The obtained $\bar U^I$ of SRO decreases with an increasing $\lambda$;
iv) an important observation is that the $\bar U^I$ of the access-control-based SRO-ACB$_{\rm I}$ algorithm is $1.65$ times that of the SRO algorithm when $\lambda = 5$.
It explicitly reflects that mIoT slices can still benefit from access control schemes even though $\lambda$ is changed.

Figs. \ref{fig:fig_utility_vs_lamdba_IoT} and \ref{fig:fig_utility_vs_lamdba_URLLC} illustrate the situation of a given total system bandwidth. We next change the total bandwidth $W$ and plot its impact on the obtained $\bar U$ of all algorithms in Fig. \ref{fig:fig_total_utility_vs_bandwidth}. {The following notations are used in this figure: $\omega^u = [\omega_{SRO}^u, \omega_{ACB_{\rm I}}^u, \omega_{ACB_{\rm II}}^u, \omega_{S^3RO}^u]$ MHz with $\omega_{SRO}^u$ denoting the bandwidth allocated to URLLC slices by running the SRO algorithm.}
\begin{figure}[!t]
\centering
\begin{minipage}[t]{0.45\textwidth}
\centering
\includegraphics[width=2.8in,height=1.6in]{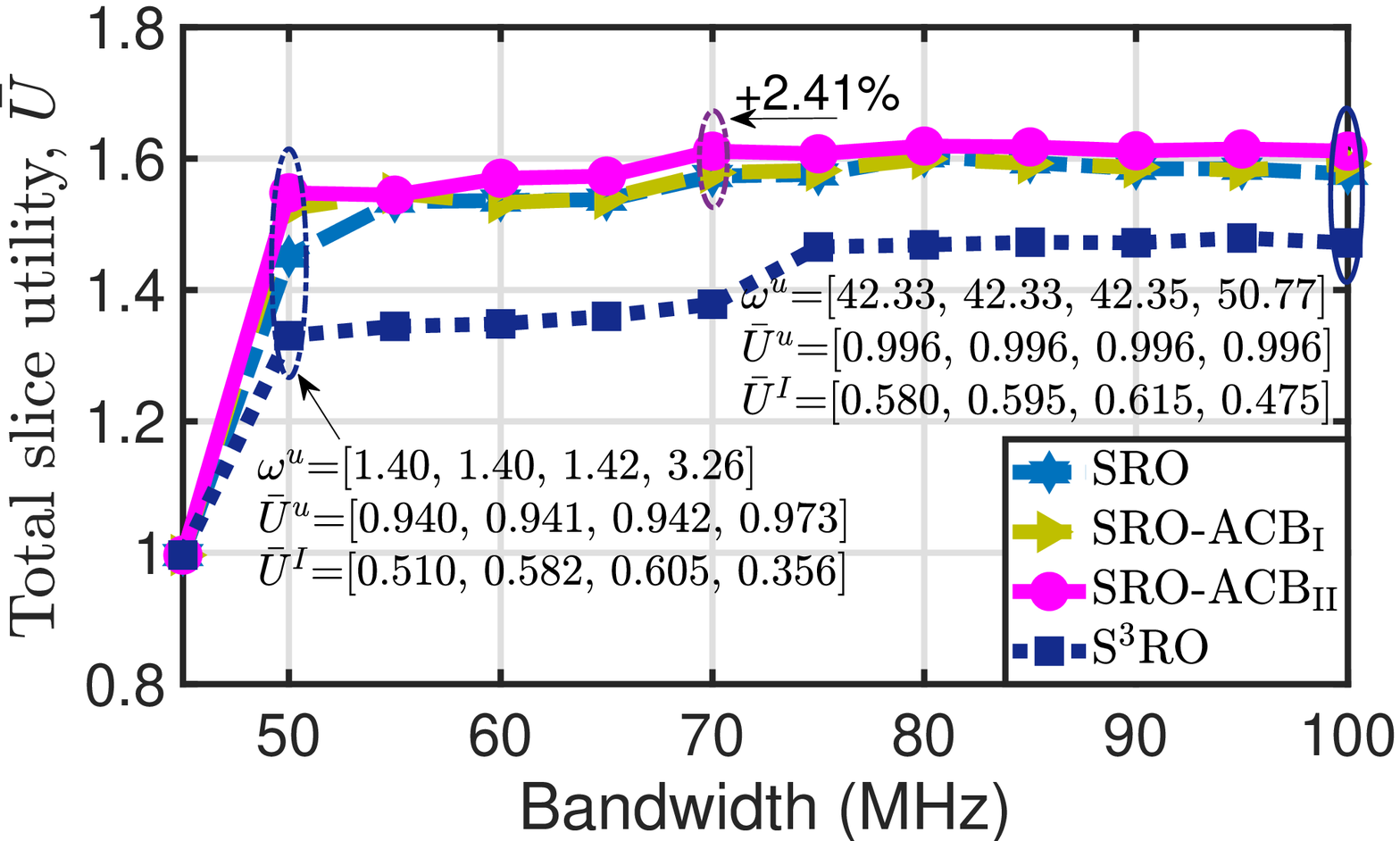}
\caption{Trend of achieved $\bar U$ vs. system bandwidth.}
\label{fig:fig_total_utility_vs_bandwidth}
\end{minipage}
\hspace{0.05\linewidth}
\begin{minipage}[t]{0.45\textwidth}
\centering
\includegraphics[width=2.8in,height=1.6in]{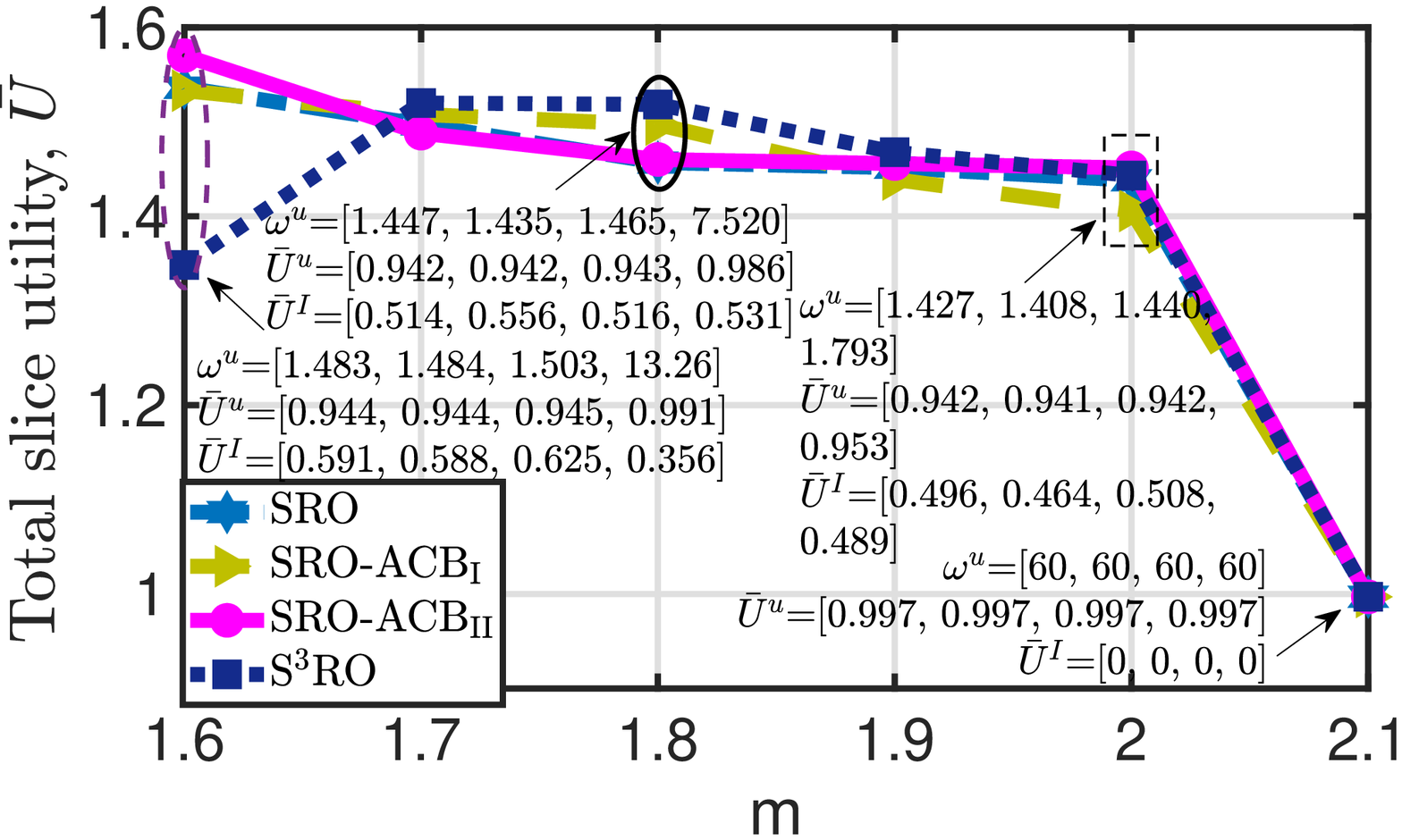}
\caption{Trend of achieved total slice utility vs. $m$.}
\label{fig:fig_total_utility_vs_gammath}
\end{minipage}
\end{figure}


The following conclusions can be obtained from Fig. \ref{fig:fig_total_utility_vs_bandwidth}:
i) when $W=45$ MHz, the QoS requirements of all IoT devices cannot be simultaneously satisfied. As a result, the total bandwidth is allocated to URLLC slices;
ii) when $W$ locates in the range of $(45, 55]$ MHz, the achieved total slice utilities $\bar U$ of SRO and SRO-ACB$_{\rm I}$ increase with $W$. Owing to the utilization of the access control scheme, SRO-ACB$_{\rm I}$ and SRO-ACB$_{\rm II}$ obtain higher $\bar U$ than SRO. For example, compared with the SRO algorithm, the SRO-ACB$_{\rm II}$ algorithm improves the achieved $\bar U$ by $6.66\%$ when $W = 50$ MHz;
iii) when $W > 55$ MHz, all algorithms cannot remarkably improve $\bar U$;
{iv) S$^3$RO achieves the smallest $\bar U$ under different bandwidth values.}

We also discuss other crucial parameters' impact on the performance of the comparison algorithms.
We reconfigure $\{\gamma_s^{th}\}$ of mIoT slices as $\gamma_1^{th} = 3.6 m$, $\gamma_2^{th} = 2.7 m$ and $\gamma_3^{th} = 1.8 m$ Kbits/minislot with $m \in \{1.5, 1.6, \ldots, 2.1\}$ and $\{D_s\}$ of URLLC slices as $D_1 = 0.00025 d$ second and $D_2 = 0.0005 d$ second with $d \in \{2, 3, \ldots, 10\}$. The impact of QoS requirements of network slices on the total slice utility is plotted in Figs. \ref{fig:fig_total_utility_vs_gammath} and \ref{fig:fig_total_utility_vs_Ds}. The impact of energy efficiency coefficient $\eta$ is plotted in Fig. \ref{fig:fig_total_utility_vs_eta}. In these figures, we denote the power consumption of RRHs of all algorithms by $E^u = [E_{SRO}^u, E_{ACB_{\rm I}}^u, E_{ACB_{\rm II}}^u, E_{S^3RO}^u]$ mW with $E_{SRO}^u = \sum\limits_{t = 1}^T {\sum\limits_{s \in {{\mathcal S}^u}} {\sum\limits_{i \in {\mathcal I}_s^u} {b_{i,s}^u{\rm tr}({\bm G_{i,s}})} } } $.


\begin{figure}[!t]
\centering
\begin{minipage}[t]{0.45\textwidth}
\centering
\includegraphics[width=2.8in,height=1.6in]{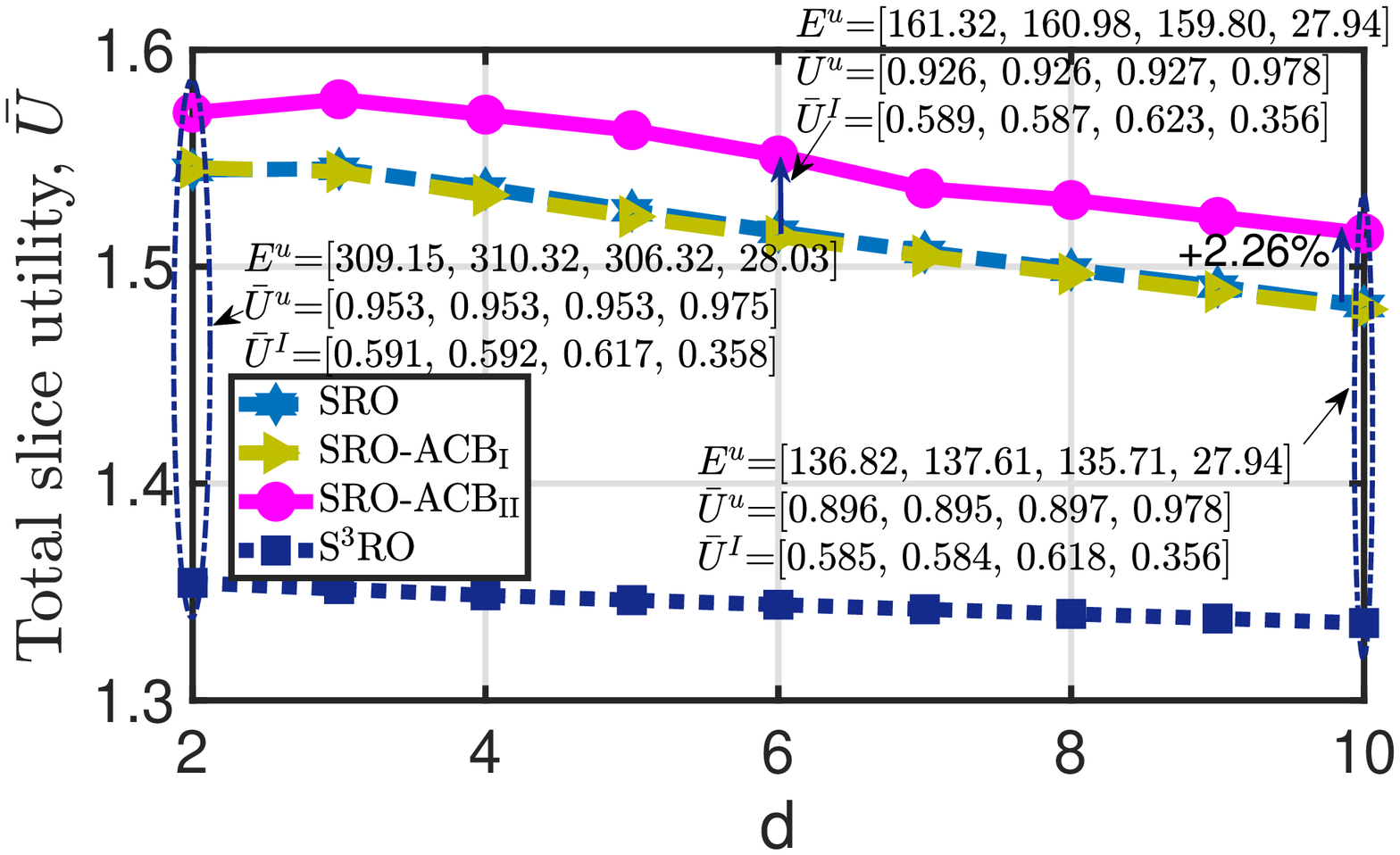}
\caption{Trend of achieved total slice utility vs. $d$.}
\label{fig:fig_total_utility_vs_Ds}
\end{minipage}
\hspace{0.05\linewidth}
\begin{minipage}[t]{0.45\textwidth}
\centering
\includegraphics[width=2.8in,height=1.6in]{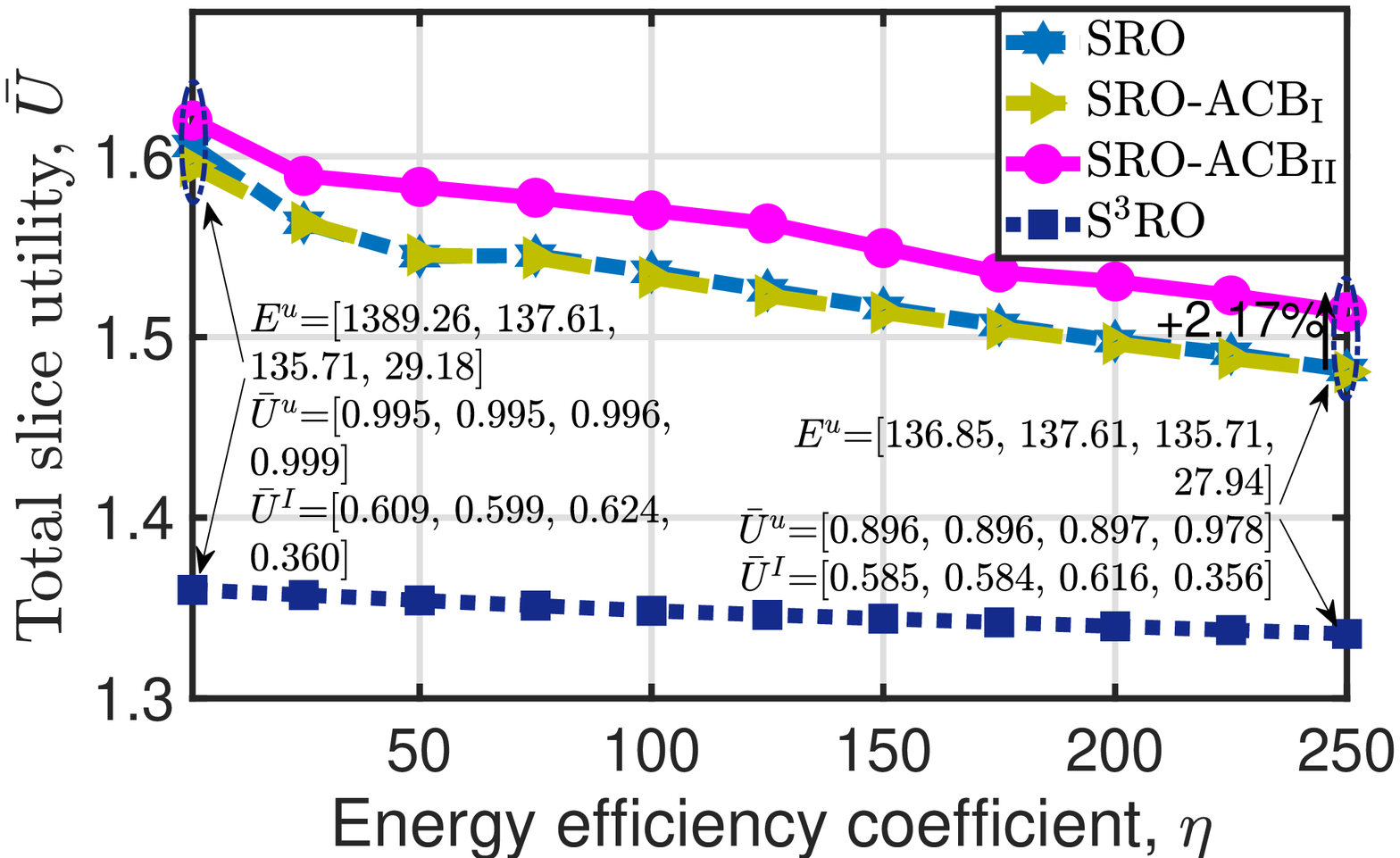}
\caption{Trend of achieved total slice utility vs. $\eta$.}
\label{fig:fig_total_utility_vs_eta}
\end{minipage}
\end{figure}


From these figures, the following observations can be achieved:
i) the obtained utilities $\bar U$ of all algorithms apart from S$^3$RO decrease with an increasing $m$. This is because a great $m$ indicates that the accumulated IoT packets in the queue of each IoT device can be quickly emptied, and then a small $P_s(t)$ is obtained. {For S$^3$RO, it achieves a fluctuating $\bar U$ as only one channel sample is used to orchestrate slice resources;}
ii) a great $D_s$ will reduce RRHs' power consumption. However, it also reduces the URLLC slice gain. Then, it may be hard to conclude the trend of $\bar U^u$ w.r.t. $D_s$ as the energy efficiency coefficient $\eta$ significantly affects the value of $\bar U^u$;
iii) it is also uneasy to conclude the trend of $\bar U^u$ w.r.t. $\eta$. An increasing $\eta$ causes a decrease of RRHs' power consumption. Yet, the value of $\bar U^u$ is determined by the multiplier of $\eta$ and $E^u$;
iv) the SRO-ACB$_{\rm II}$ algorithm may perform better than the SRO algorithm. However, the performance of the other access-control-based algorithm, SRO-ACB$_{\rm I}$, is slightly worse than SRO.
Besides, it cannot ensure that the $\bar U^I$ obtained by the access-control-based algorithms are always higher than that of SRO. At sometimes, access control schemes may drag down the utility of the mIoT service;
{v) S$^3$RO obtains the smallest $\bar U$, which further indicates that S$^3$RO cannot effectively address the two-timescale issue.}


\section{Conclusion}
In this paper, we extended the subframe structure of a RAN slicing system to admit more IoT devices and analyzed the RACH of a randomly chosen IoT device. Based on the analysis result, we derived closed-form expressions of RA success probabilities of devices with an unrestricted access control scheme and an ACB access control scheme.
Next, we formulated the RAN slicing for mIoT and bursty URLLC service multiplexing as an optimization problem to optimally orchestrate RAN resources for mIoT slices and bursty URLLC slices. Efficient mechanisms such as SAA and ADMM were then exploited to mitigate the optimization problem.
Simulation results showed that the proposed algorithm could support more IoT devices and could effectively implement the service multiplexing of mIoT and bursty URLLC traffic.
This paper focused on the {orthogonal} RAN slicing system for mIoT and bursty URLLC service multiplexing provision. The {non-orthogonal} RAN slicing system is a topic worthy of research in the near future.

\appendix

\subsection{Proof of Lemma 1}
The work in \cite{jiang2018random} adopted a standard stochastic geometry method to derive the LT of the aggregate interference from interfering IoT devices. Different from \cite{jiang2018random}, both the stochastic geometry method and a gamma-Poisson distribution are exploited to derive the result in this paper.

For the origin RRH, the LT of its interference from interfering IoT devices in $s$ can be derived as
\begin{equation}\label{eq:LT_intra_interference}
    \begin{array}{l}
{\mathcal L}_{{{\mathcal I}_{s}(t)}} (\varpi_s) =  {{\mathbb E}_{{\mathcal I}_{s}(t)}}\left[ {{e^{ - \varpi_s {\mathcal I}_{s}(t)}}} \right] \\
 = {{\mathbb E}_{{\mathcal I}_{s}(t)}}[ {\rm exp}\{ - \varpi_s \sum\limits_{m \in {{ u}_{s}^I} \backslash \{ o\} } {\mathbbm  1}({p_m}||{d_m}|{|^{ - \varphi }} = {\rho _o}) \\
\quad {\mathbbm  1}(N_{a,s}{(t)} > 0){\mathbbm  1}({f_m} = {f_o}){\rho _o}{h_m} \} ] \\
\mathop  = \limits^{(a)} {{\mathbb E}_{{{ u}_{s}^I}} }[ \prod\limits_{m \in {{ u}_{s}^I} \backslash \{ o\} } {{\mathbb E}_{{h_m}}}[ {\rm exp}\{ - \varpi_s {\mathbbm  1}({P_m}||{u_{m,s}}|{|^{ - \varphi }} = {\rho _o}) \\
\quad {\mathbbm  1}(N_{a,s}{(t)} > 0){\mathbbm  1}({f_m} = {f_o}){\rho _o}{h_m} \}  ]  ]\\
\mathop  = \limits^{(b)} \sum\limits_{n = 0}^\infty  {{\mathbb P}\{ |Z_s| = n\} \prod\limits_{m \in Z_s} {{{\mathbb E}_{{h_m}}}\left[ {{e^{ - \varpi_s{\rho _o}{h_m}}}} \right]} }  \\
\mathop  = \limits^{(c)} {\mathbb P}\{ |Z_s| = 0\}  + \sum\limits_{n = 1}^\infty  {{\mathbb P}\{ |Z_s| = n\} {{\left( {\frac{1}{{1 + \varpi_s{\rho _o}}}} \right)}^n}} \\
\mathop  = \limits^{(d)} {\tilde {\mathbb P}}_{X_s}\{ {X_s} = 1\}  + \left\{ {\sum\limits_{n' = 0}^\infty  {{{\tilde {\mathbb P}}_{X_s}}\{ {X_s} = n'\} {{\left( {\frac{1}{{1 + \varpi_s{\rho _o}}}} \right)}^{n'}}}  - } \right. \\
\left. {\sum\limits_{n' = 0}^1 {{{\tilde {\mathbb P}}_{X_s}}\{ {X_s} = n'\} {{\left( {\frac{1}{{1 + \varpi_s{\rho _o}}}} \right)}^{n'}}} } \right\}(1 + \varpi_s{\rho _o}),
\end{array}
    \end{equation}
where $\varpi_s = \frac{\theta_s^{th}}{\rho_o}$, $Z_s$ denotes the set of interfering IoT devices associated with the origin RRH in mIoT slice $s$, {$|Z_s|$ is the number of devices in $Z_s$,} $X_s$ represents the number of active IoT devices associated with the origin RRH in $s$. According to the conclusion of Lemma 1 in \cite{yu2013downlink}, the probability mass function (PMF) ${\tilde {\mathbb P}}_{X_s} \{ X_s = n'\}$ can be written as
\begin{equation}\label{eq:intra_interference_proba}
    {\tilde {\mathbb P}}_{X_s}\{ X_s = n'\}  = \frac{{{{3.5}^{3.5}}\Gamma (n' + 3.5){{(\frac{{{P_{nr,s}}(t)P_{ne,s}(t){\lambda _{s}^I}}}{{{\lambda _R}\xi F_s}})}^{n' }}}}{{\Gamma (3.5)(n')!{{(\frac{{P_{nr,s}}(t){P_{ne,s}(t){\lambda _{s}^I}}}{{{\lambda _R} \xi F_s}} + 3.5)}^{n' + 3.5}}}},
\end{equation}
with $\Gamma(\cdot)$ being the gamma function.
Besides, in (\ref{eq:LT_intra_interference}), (a) follows from the i.i.d distribution of $h_m$ and its further independence from the Poisson point process $\Phi_s$ or $u_s^I$; (b) follows from the expectation of a discrete random variable; (c) follows from the LT over $h_m$; (d) follows from the fact that the number of active IoT device in a cell is one more than the number of active interfering IoT devices in this cell.

From (\ref{eq:intra_interference_proba}), we can deduce that $X_s$ ($s\in \mathcal{S}^I$) is a gamma-Poisson random variable with $X_s \sim {\rm gamma}$ ${\rm -Poisson}(\alpha_s, 3.5)$ and $\alpha_s = \frac{{{P_{nr,s}}(t){P_{ne,s}}(t){\lambda _{s}^I}}}{{3.5{\lambda _R}\xi F_s}}$.
For a gamma-Poisson random variable $X_s \sim {\rm gamma}$ ${\rm -Poisson}(\alpha, \beta)$, the following expression holds: ${\mathbb E}[e^{X_s}] = (1+\alpha-\alpha e)^{-\beta}$. Thus, we can rewrite  (\ref{eq:LT_intra_interference}) as (\ref{eq:LT_interference_expression}).
This completes the proof.

\subsection{Proof of Lemma 2}

As new endogenous packet arrivals in any IoT device at each minislot $t$ is modelled as a Poisson distribution, the departure process of packets can be regarded as an approximated thinning process of new arrivals, where the thinning factor is related to the RA success probability. The number of accumulated packets in the queue of any IoT device can then be approximated as a Poisson distribution with intensity $\mu_{a,s}^{t}$ ($s\in \mathcal
S^I$) after the thinning process in a specific minislot $t$ ($t > 1$) \cite{jiang2018random}.

Thus, we can derive the expression of $\mu_{a,s}^{t}$ ($t > 1$) via combining with the following facts
\begin{itemize}
    \item {\textbf{Fact 1:}} the accumulated packets during the $t-1$-th minislot will contribute to the accumulated packets at the $t$-th minislot.
    \item {\textbf{Fact 2:}} the arrival packets during the $t-1$-th minislot will also contribute to the accumulated packets in the queue of an IoT device at the $t$-th minislot.
    \item {\textbf{Fact 3:}} an IoT device can send packets only if its preamble is successfully transmitted.
    \item {\textbf{Fact 4:}} at the same minislot, the new packet arrival process and the packet accumulated process are independent.
\end{itemize}

Similar as the Theorem 2 in \cite{jiang2018random}, we can infer that at the $2^{\rm nd}$ minislot, for all $s \in \mathcal{S}^I$, $\mu_{a,s}^{2}$ depends on the intensity of new packet arrivals $\mu_{w,s}^{1}$ and the probability $P_{s}^{1}$ of a randomly selected IoT device at the $1^{\rm st}$ minislot, which is given by
\begin{equation}\label{eq:mu_acc_2_1}
    \begin{array}{l}
\mu _{a,s}^2 = \mu _{w,s}^1 - x_s P_s^1\left( {1 - {e^{ - \mu _{w,s}^1}}} \right).
\end{array}
\end{equation}

The detailed proof of (\ref{eq:mu_acc_2_1}) is omitted for brevity, and a similar proof can be found in Theorem 2 in \cite{jiang2018random}.

Considering that $\mu_{a,s}(t)$ is non-negative at each minislot $t$, we have
\begin{equation}\label{eq:mu_acc_2}
    \begin{array}{*{20}{l}}
{\mu _{a,s}^2 = }
\end{array}{\left[ {\mu _{w,s}^1 - {x_s}P_s^1\left( {1 - {e^{ - \mu _{w,s}^1}}} \right)} \right]^ + }.
\end{equation}

Then, according to the definition of non-empty probability and the Poisson approximation, the non-empty probability of a randomly selected IoT device in mIoT slice $s \in \mathcal{S}^I$ at the $2^{\rm nd}$ minislot can be approximated as 
\begin{equation}\label{eq:non_empty_prob_2_expression}
    P_{ne,s}^2 = 1 - {\mathbb P}\{N_{a,s}^2 = 0\} = 1 - {e^{ - \mu _{a,s}^2}}.
\end{equation}

At the $3^{\rm rd}$ minislot, the intensity of accumulated data packets in the queue of a randomly selected IoT device can be derived as the following
\begin{equation}\label{eq:mu_acc_3}
    \begin{array}{l}
\mu _{a,s}^3 = P_s^2\left( {\sum\limits_{n = 1}^\infty  {\left( {{{[n - x_s]}^ + } \sum\limits_{z = 0}^n {{P_{N_{w,s}^2}}(z){P_{N_{a,s}^2}}(n - z)} } \right)} } \right)
 + \\
\qquad (1 - P_s^2)\left( {\sum\limits_{n = 1}^\infty  {n  \sum\limits_{z = 0}^n {{P_{N_{w,s}^2}}(z){P_{N_{a,s}^2}}(n - z)} } } \right)\\
\mathop  = \limits^{(a)} P_s^2\left [ {\sum\limits_{n = 1}^\infty  {\sum\limits_{z = 0}^n {\frac{{{{\left( {\mu _{w,s}^2} \right)}^z}{e^{ - \mu _{w,s}^2}}}}{{z!}}\frac{{{{\left( {\mu _{a,s}^2} \right)}^{n - z}}{e^{ - \mu _{a,s}^2}}}}{{(n - z)!}}}  \times n}  - } \right. \\
\qquad \quad  \left. {x_s\sum\limits_{n = 1}^\infty  {\sum\limits_{z = 0}^n {\frac{{{{\left( {\mu _{w,s}^2} \right)}^z}{e^{ - \mu _{w,s}^2}}}}{{z!}}\frac{{{{\left( {\mu _{a,s}^2} \right)}^{n - z}}{e^{ - \mu _{a,s}^2}}}}{{(n - z)!}}} } } \right ]^+ + \\
\qquad \quad (1 - P_s^2)\sum\limits_{n = 1}^\infty  {\sum\limits_{z = 0}^n {\frac{{{{\left( {\mu _{w,s}^2} \right)}^z}{e^{ - \mu _{w,s}^2}}}}{{z!}}\frac{{{{\left( {\mu _{a,s}^2} \right)}^{n - z}}{e^{ - \mu _{a,s}^2}}}}{{(n - z)!}}}  \times n} \\
\mathop  = \limits^{(b)} \left [ \mu _{w,s}^2 + \mu _{a,s}^2 - x_sP_s^2\left( {1 - {e^{ - \mu _{w,s}^2 - \mu _{a,s}^2}}} \right) \right ]^+,
\end{array}
\end{equation}
where $P_{N_{w,s}^2}$ and $P_{N_{a,s}^2}$ represent the PMFs of new arrival packets and accumulated packets at the $2^{\rm nd}$ minislot, respectively. Besides, (a) follows from the fact: for any two independent Poisson distributions $\Phi_{X_1}$ and $\Phi_{X_2}$, ${\mathbb P}_{X_1,X_2}({X_1} + {X_2} = x) = \sum\limits_{y = 0}^x {{\mathbb P}_{X_1}({X_1} = y){\mathbb P}_{X_2}(X_2 = x - y)} $; (b) holds as $\Phi_{X_1,X_2}$ is a two dimensional Poisson distribution with an intensity $\lambda_{X_1} + \lambda_{X_2}$, and $\sum\limits_{x = 1}^\infty  {{\mathbb P}_{X_1, X_2}({X_1} + {X_2} = x)}  = 1 - {\mathbb P}_{X_1,X_2}({X_1} + {X_2} = 0)$.

Similarly, we have 
\begin{equation}\label{eq:non_empty_prob_3}
    P_{ne,s}^3 = 1 - {\mathbb P}\{N_{a,s}^3 = 0\} = 1 - {e^{ - \mu _{a,s}^3}}.
\end{equation}

When $t > 3$, since the accumulated packets evolution model of the queue of any IoT device is the similar as that at $t = 3$, we can extend the conclusion obtained at $t = 3$ to that at $t > 3$.

Therefore, we can obtain the closed-form expression of $\mu_{a,s}^t$ for all $s \in \mathcal{S}^I$ at $t > 1$ with
\begin{equation}\label{eq:mu_acc_m_expression_proof}
    \mu _{a,s}^t = \left [ \mu _{w,s}^{t - 1} + \mu _{a,s}^{t - 1} - x_sP_s^{t - 1}\left( {1 - {e^{ - \mu _{w,s}^{t - 1} - \mu _{a,s}^{t - 1}}}} \right) \right ]^+,
\end{equation}
and 
\begin{equation}\label{eq:non_empty_prob_m_expression_proof}
    P_{ne,s}^t = 1 - {\mathbb P}\{N_{a,s}^t = 0\} = 1 - {e^{ - \mu _{a,s}^t}}.
\end{equation}

This completes the proof.

\subsection{Proof of Lemma 3}
The $2^{\rm nd}$ degree Taylor expansion of $\tilde P_{m}^{(k)}$ at the local point $\bm \omega_{m}^{(k,q)}$ is
\begin{equation}
\tilde P_{2,m}^{(k)} = \sum\limits_{j = 0}^2 {\frac{1}{{j!}}{{\left[ {\sum\limits_{s \in {\mathcal S}^I} {\left( {{\omega _{sm}} - \omega _{sm}^{(k,q)}} \right)\frac{\partial }{{\partial \omega _{sm}^{(k,q)}}}} } \right]}^j}{\tilde P_m^{(k)}}|_{\bm \omega_m^{(k,q)}}}.
\end{equation}

The $3^{\rm rd}$ degree Taylor expansion of $\tilde P_{m}^{(k)}$ at $\bm \omega_{m}^{(k,q)}$ must be more accurate than $\tilde P_{2,m}^{(k)}$ with
\begin{equation}
    \tilde P_{3,m}^{(k)} = \tilde P_{2,m}^{(k)} + {\frac{1}{{3!}}{{\left[ {\sum\limits_{s \in {\mathcal S}^I} {\left( {{\omega _{sm}} - \omega _{sm}^{(k,q)}} \right)\frac{\partial }{{\partial \omega _{sm}^{(k,q)}}}} } \right]}^3}{\tilde P_m^{(k)}}|_{\bm \omega_m^{(k,q)}}}.
\end{equation}

Since the error of $\tilde P_{2,m}^{(k)}$ is not greater than the maximum difference between $\tilde P_{3,m}^{(k)}$ and $\tilde P_{2,m}^{(k)}$, we have
\begin{equation}\label{eq:error_R2}
\begin{array}{l}
    R_{2}(\bm \omega_m) = \max \{ \frac{1}{{3!}}{{\left[ {\sum\limits_{s \in {\mathcal S}^I} {\left( {{\omega _{sm}} - \omega _{sm}^{(k,q)}} \right)\frac{\partial }{{\partial \omega _{sm}^{(k,q)}}}} } \right]}^3} \times \\
    \qquad \quad {\tilde P_m^{(k)}}|_{\bm \omega_m^{(k,q)}} \}.
\end{array}
\end{equation}

In (\ref{eq:error_R2}), $\bm \omega_{m}^{(k,q)}$ is a constant vector, the $\max$ operation will not affect the constant vector and the vector $\bm \omega_{m}$. For any $s\in \mathcal{S}^I$, the maximum value obtainable by $\frac{{{\partial ^3}{\tilde P_m^{(k)}}|_{\bm \omega_m^{(k,q)}}}}{{\partial \omega _{sm}^{3(k,q)}}}$ will not exceed the greatest value of that derivative in the interval $[\hat \omega_{sm}^{lb}, S_{sm}^{\star}]$. Additionally, the maximum value of $\frac{{{\partial ^3}{P_m}|_{\bm \omega_m^{(k,q)}}}}{{\partial \omega _{sm}^{3(k,q)}}}$ will generally occur at one of the endpoints of the interval $[\hat \omega_{sm}^{lb}, S_{sm}^{\star}]$. Therefore, we obtain (\ref{eq:error_bound_lemma}). This completes the proof.

\subsection{Proof of Lemma 4}
For all $i \in {\mathcal{I}^u}$, $s \in {\mathcal{S}^I}$, $m \in {\mathcal{M}}$, a feasible way of proving that ${\rm rank}(\bm G_{i,sm}) \le 1$ is to utilize the Lagrange method. However, owing to the complicated expression of $W^{u}(\bm r_m)$ w.r.t. $\bm G_{i,sm}$, it will be uneasy to do that. Fortunately, we find that the proof can be conducted if a family of auxiliary variables is introduced.

For the constraint (\ref{eq:SAA_admm_problem}d), if we introduce the auxiliary variables $\{\nu_{i,sm}\}$ and let
\begin{equation}\label{eq:linear_snr_tau_ism}
    \frac{{\rm{tr}}({\bm H_{i,sm}}{{\bm G}_{i,sm}})}{\phi \sigma _{i,s}^2} \ge \nu _{i,sm},\forall i \in {\mathcal I}_s^u,s \in {{\mathcal S}^u},m \in {\mathcal{M}},
\end{equation}
then (\ref{eq:SAA_admm_problem}d) is equivalent to
\begin{equation}\label{eq:}
    \sum\limits_{s \in {{\mathcal S}^I}} {(1+\alpha_g)\omega _{sm}(\bar t)}  + W^u(\bm f_m)  \le W, \text{ } {\rm and} \text{ } {\rm (\ref{eq:linear_snr_tau_ism})},
\end{equation}
where $\bm f_m = \{f_{i,sm};i \in {\mathcal{I}_s^u}, s \in {\mathcal{S}^u}\}$ and
\begin{equation}
\begin{array}{*{20}{l}}
{{f_{i,sm}} = \frac{{L_{i,s}^u}}{{{{\log }_2}(1 + {\nu _{i,sm}})}} + \frac{{{{({Q^{ - 1}}(\beta))}^2}}}{{2\log _2^2(1 + {\nu _{i,sm}})}}} + \\
\quad \frac{{{{({Q^{ - 1}}(\beta))}^2}}}{{2\log _2^2(1 + {\nu _{i,sm}})}}\sqrt {1 + \frac{{4L_{i,s}^u{{\log }_2}(1 + {\nu _{i,sm}})}}{{{{({Q^{ - 1}}(\beta))}^2}}}}.
\end{array}
\end{equation}

We omit the proof of the equivalence as a similar proof can be found in the proof section of constraints' equivalence in \cite{How2019yang}.

The partial Lagrangian function of (\ref{eq:SCA_bandwidth_beamforming_problem}) can be written as
\begin{equation}\label{eq:lagrangian_func}
\begin{array}{l}
L( \ldots ) = \sum\limits_{s \in {{\mathcal S}^u}} \sum\limits_{i \in {\mathcal I}_s^u} [ {\frac{{\tilde \rho \eta }}{M}{\rm{tr}}({\bm G_{i,sm}}) + \sum\limits_{j \in {\mathcal J}} {{{\bar \lambda }_{jm}}{\rm{tr}}(b_{i,sm}^{u(k,q)}{\bm Z_j}{\bm G_{i,sm}})} } \\
\quad - {{{\bar \mu }_{i,sm}}\frac{{{\rm{tr}}({\bm H_{i,sm}}{\bm G_{i,sm}})}}{{\phi \sigma _{i,s}^2}} - {\rm tr}({\bm {\bar X}_{i,sm}^{\rm H}} {\bm G_{i,sm}})} ],
\end{array}
\end{equation}
where $\bar \lambda_{jm}$, $\bar \mu_{i,sm}$, and $\bar {\bm X}_{i,sm}$ are Lagrangian multipliers corresponding to constraints (\ref{eq:SAA_admm_problem}c), (\ref{eq:linear_snr_tau_ism}) and (\ref{eq:SAA_admm_problem}e). Besides, only terms related to ${\bm G}_{i,sm}$ are included in this function for brevity.

According to the Karush-Kuhn-Tucker (KKT) conditions, the necessary condition for obtaining the optimal matrix power at the $(k,q)$-th iteration ${\bm G}_{i,sm}^{(k,q)\star}$ is given by
\begin{equation}\label{eq:KKT_condition_2}
    \begin{array}{l}
\frac{{\partial L( \ldots )}}{{\partial {\bm G}_{i,sm}^{(k,q)\star}}} =  \frac{{\tilde \rho \eta }}{M}{\bm I_{i,sm}} + \sum\limits_{j \in {\mathcal J}} {{{\bar \lambda }_{jm}}{b_{i,sm}^{u(k,q)}\bm Z_j}} - \\
\qquad \frac{\bar \mu_{i,sm}{{\bm H_{i,sm}}}}{{\phi \sigma _{i,s}^2}} - {{\bm X}_{i,sm}} = 0,
\end{array}
\end{equation}
where ${\bm I_{i,sm}} \in \mathbb{R}^{JK \times JK}$ is an identity matrix.

Then, we can conclude that ${\rm rank}({\bm X}_{i,sm}) \ge JK - 1$. The reasons are i) ${{\bar \lambda }_{jm}}$, $b_{i,sm}^{u(k,q)}$, and $\bar \mu _{i,sm}$ are nonnegative and the matrix $\bm I_{i,sm}$ is full rank; ii) ${\rm rank}(\bm H_{i,sm}) \le 1$.

Next, according to the complementary slackness condition, we have
\begin{equation}\label{eq:complementary_slack_condition_2}
    {{\bm X}_{i,sm}}{\bm G}_{i,sm}^{(k,q)\star} = 0.
\end{equation}

Based on (\ref{eq:complementary_slack_condition_2}) and the rank result of $\bm X_{i,sm}$, we can conclude that ${\rm rank}(\bm G_{i,sm}^{(k,q)\star}) \le 1$. This completes the proof.

 use section* for acknowledgment
\section*{Acknowledgment}
The authors are grateful for Prof. Chi Harold Liu who helps them a lot on the construction of neural networks and the mitigation of many other questions.

\ifCLASSOPTIONcaptionsoff
  \newpage
\fi




%
\bibliographystyle{IEEEtran}
\bibliography{Network_slicing}

\end{document}